\date{July 19, 2016}
\title{COME TOGETHER:\\ Multi-Agent Geometric Consensus\\ \large (Gathering, Rendezvous, Clustering, Aggregation)\\ \textcolor{white}{C}\\ Ariel Barel, Rotem Manor, and Alfred M. Bruckstein \\\textcolor{white}{C}\\
Center for Intelligent Systems (CIS)\\ Multi-Agent Robotic Systems (MARS) Laboratory \\ Technion Autonomous Systems Program (TASP) \\ Computer Science Department \\ Technion, Haifa 32000, Israel.}
 \newtheorem{theorem}{\textit{Theorem}}
 \newtheorem{lemma}{\textit{Lemma}}
\newtheorem{corollary}{\textit{Corollary}}
\newtheorem{proposition}{\textit{Proposition}}
\newtheorem{definition}{\textit{Definition}}
\newtheorem{claim}{\textit{Claim}}
\begin{document}
\maketitle

\begin{abstract}
This report surveys results on distributed systems comprising mobile agents that are identical and anonymous, oblivious and interact solely by adjusting their motion according to the relative location of their neighbours. The agents are assumed capable of sensing the presence of other agents within a given sensing range and able to implement rules of motion based on full or partial information on the geometric constellation of their neighbouring agents. Eight different problems that cover assumptions of \textit{finite} vs \textit{infinite} sensing range, \textit{direction and distance} vs \textit{direction only} sensing and \textit{discrete} vs \textit{continuous} motion, are analyzed in the context of geometric consensus, clustering or gathering tasks.\\

\textcolor{white}{  \cite{reif1999social,jadbabaie2003,gazi2003stability,gazi2004stability,moreau2004,ren2005consensus,olfati2007consensus,ji2007,cucker2007emergent,motsch2014heterophilious}
\cite{reynolds1987flocks,chazelle2014convergence,chazelle2015algorithmic}
\cite{feynman1985surely,bruckstein1991ants,bruckstein1993ant,wagner1997row,bruckstein1997probabilistic,marshall2004formations,lin2005necessary,belkhouche2005modeling,martinez2006optimal,sinha2006generalization,sinha2007generalization,hristu2007bio,oggier2012cyclic}
\cite{suzuki1999distributed,ando1999,cieliebak2003solving,schlude2003robotics,schlude2003point,gordon2004,gordon2005,flocchini2005gathering,cohen2005convergence,agmon2006fault,cortes2006robust,martinez2007motion,gordon2008,gordon2010fundamental,cieliebak2012distributed,bellaiche2015}
\cite{mamei2006field,bullo2009distributed,mesbahi2010graph,gazi2011swarm,bonato2011game,flocchini2012distributed}
}
\end{abstract} 

\newpage
\setcounter{tocdepth}{2}
\tableofcontents

\newpage

\begin{flushright}
\textfrak{\huge"Come together, right now, over me \\
 Come together..." \textcolor{white}{ccccccccccccc} }
\\ \hfill The Beatles, 1969.
\end{flushright}
\addcontentsline{toc}{section}{Introduction}

\section*{Introduction}

Nature provides amazing examples of complex goal-oriented global behaviours in multi-agent systems. The coordinated, cooperative work in colonies of ants, termites and bees, the synchronized flight in flocks of birds and swarms of locusts, the coordinated swimming and intricate spatial pattern formation in schools of fish, the waves in migrations of large animal herds, the synchronized flashing of fireflies were observed, described, admired and studied by humans for ages. These natural phenomena raise a wealth of questions. The fundamental question is this: how can groups of locally interacting, often quite myopic and simple agents, perform complex and coordinated tasks without having any centralized control mechanism. The scientific, mathematical modelling-based study of flocking, swarming, and schooling behaviours is presently a very active research area. As our technology progresses we learn more and more from the ants, birds and fish on these topics, however, despite considerable progress in data collection and analysis, we are still quite far from fully understanding the laws or the mechanisms underlying nature's multi-agent systems. As scientists and engineers, we would like to know the laws of inter-agent interaction, and be able to mimic or simulate the observed collective behaviours based on these laws. Furthermore, we would like to have goal-oriented design processes for deriving rules of local interaction between agents that ensure a desired global behaviour for a multi-agent system   solving a given complex problem.\\

In robotics applications, we aim to build colonies of simple, interacting, mobile agents to solve problems like sweeping and cleaning of an area, detecting and tracking plumes of chemical materials that spread in the environment, patrolling a region and detecting intruders, searching for stationary and moving entities in uncharted areas. All such tasks can benefit from the deployment of multiple interacting agents, in order to increase efficiency and achieve reliability through redundancy.\\

In this report we analyze a fundamental multi-agent task, the task of gathering or clustering or getting together. Suppose that many identical mobile agents are dispersed in some region and have to get together in order to subsequently perform some tasks as a cohesive group. To accomplish this, the agents can rely only on their sensors which provide information about the relative location of other agents in their neighbourhood, since we assume that they lack communication capabilities, and do not share a common geometric frame of reference. The agents, in our model, are not only identical, indistinguishable (i.e. are anonymous) and cannot directly communicate with each other, but also lack the capability to collect, remember and store information about the environment and about past configurations. This means that they are memoryless, or oblivious, and their actions, i.e. their movements in the environment, will be determined solely based on what they presently "see" with their sensors. The above-discussed limitations imposed on the agents may  formidable, however, as we shall see, by assuming that the agents are capable to carry out some geometric computations based on what they sense about the constellation of their neighbours (each in their own frame of reference), one can design local rules of motion whose implementation by all agents provably cause them to get together.\\

\textbf{A brief overview of the multi-agent literature}\\

There is a vast literature dealing with multi-agent systems, spanning areas of research from sociobiology to physics, from computer graphics to robotics, from control theory to theoretical computer science.\\

Biologists realised early on that swarming behaviours in various species entail advantages in survival, such as joint foraging for food, energy saving in motion, protection from predators, better navigation, and so on. Therefore many papers in both descriptive and theoretical/mathematical biology speculate on the types of local interactions that lead to the observed phenomena, as the papers \cite{okubo1986dynamical, flierl1999individuals, camazine2003self, couzin2003self,sumpter2006principles,hildenbrandt2010self} demonstrate. The biological observations were carefully scrutinised by other scientists and by engineers as well. Physicists, by definition interested in all aspects of nature, were drawn to the challenge of explaining swarming behaviours, with scaling effects and phase transitions that occur in systems composed of large numbers of particles or agents.  Sometimes they call the agents self-driven particles, depicting tiny living cells, and the work of Vicsek and Eshel Ben-Jacob and their collaborators who were interested in modelling biological phenomena from bacterial colonies to insects and to human traffic was very influential in this direction, see e.g. \cite{ben1992adaptive, vicsek1995novel, mogilner1996spatio, vicsek2012collective, cavagna2014bird}.\\

Physicists and mathematicians were also drawn to modelling distributed synchronisation phenomena due to the observed emergence of simultaneous flashing in large crowds of fireflies and of such phenomena such as synchronized clapping of hands in concert halls \cite{mirollo1990synchronization, strogatz2000kuramoto, wang2005partial, dorfler2014synchronization}.\\

Early on in computer science research, people considered spatially distributed networks of processors with given localised (often nearest neighbour based) communication links and analyzed ways to solve complex problems by efficiently exploiting the formidable but spread-out computational capabilities available in such systems. Problems like load-balancing, distributed algorithms for consensus, averaging, gossiping, and leader election, were, and still are, central topics of study in this area \cite{cybenko1989dynamic,xiao2004fast,olshevsky2009convergence}.\\

In computer graphics, with the increase in computational power and the advance of display technologies, researchers became interested in sophisticated simulations of autonomously behaving agents in flocks of birds flying in the sky or schools of fish swimming under the sea, or herds of horses or bison running in fierce stampedes. This led to the popular BOIDS local interaction model of Craig Reynolds \cite{reynolds1987flocks} and many subsequent works, which in turn influenced all other areas of scientific investigation on collective behaviours induced by local interactions, see e.g. \cite{chazelle2014convergence,chazelle2015algorithmic}.\\

Ants, working in efficient colonies and achieving amazing coordinated feats despite the simple reactive behaviour of the anonymous and indistinguishable individuals, are a continuous source of wonder and inspiration for distributed multi-agent systems research. Inspired by a description in Feynman's "Surely You're Joking Mr. Feynman" \cite{feynman1985surely}, about his experiments in dealing with an ant colony that invaded his house, Bruckstein proposed a pursuit model for local interaction to explain the straightness of ant-trails after they find food and start to ferry it to the ant hill, see \cite{bruckstein1993ant}. Chain pursuit and cyclic pursuit were then thoroughly analyzed as models of local interaction that achieve some global results like clustering or gathering, and finding shortest navigation paths for robotic agents (see references \cite{bruckstein1991ants, wagner1997row, bruckstein1997probabilistic,  marshall2004formations, lin2005necessary, belkhouche2005modeling, martinez2006optimal, sinha2006generalization, sinha2007generalization, hristu2007bio,oggier2012cyclic}).\\

Engineers from different disciplines interested in robotics and complex systems with many interacting parts also realised quite early on that principles guiding social animals in their struggle for survival could and should be exploited in the design of artificial multi-agent robotic colonies. In systems involving large numbers of agents, one is faced with the need to control and lead them in coordinated formations, and to do so, one may encounter formidably complicated controller-to-agents and inter-agent communication issues. Such systems may greatly benefit from having components acting according to simple distributed rules of local interaction, with no need for explicit inter-agent communication, if such rules can be designed to ensure that the system's autonomous evolution will accomplish the necessary global goals. Such systems could deploy a large number of identical low cost, and rather simple agents, from the point of view of their computational power, memory capacity and sensing capabilities, operating autonomously with no need for explicit inter-agent communication. In addition to the obvious benefits of simplicity and autonomy, systems composed of such agents, programmed to carry out rules of local, neighbourhood based interaction, also achieve scalability and fault tolerance (reliability through redundancy), much in the same way as an ant colony is not affected by the elimination of scores of individual agents, and by the constant influx of new individuals born to the colony.\\

The interest of control engineers and robotics researchers in distributed multi-agent design topics led to the development of several types of multi-agent interaction rules, based on "potential functions" or "influence fields" and networked control systems. The papers \cite{gazi2003stability, gazi2004stability,olfati2007consensus,moreau2004} analyze issues of stability, emergent behaviour of consensus and coordination for various types of so-called "networked" multi-agent systems under different assumption on individual agent dynamics (integrator, or direct velocity control, unicycles, and double integrator or force control) and different types of distance-to-neighbour based "influence functions" or "potential functions". This field of research is still quite active, with many beautiful results already available, see e.g. \cite{reif1999social,jadbabaie2003,ren2005consensus, ji2007,cucker2007emergent, motsch2014heterophilious}, but there are also many outstanding research questions that remain unanswered to date.\\

Computer scientists and mathematicians interested in the topic of gathering and coordination in swarms of mobile agents have addressed many variations on geometric consensus problems assuming that the agents "see" the constellation of their neighbours as the "input" and decide where to go next as a result of some computations on this input and on the assumed motion capabilities of the agents, see \cite{suzuki1999distributed,ando1999,cieliebak2003solving,schlude2003robotics,schlude2003point,gordon2004,gordon2005,flocchini2005gathering,cohen2005convergence,agmon2006fault,cortes2006robust,martinez2007motion,gordon2008,gordon2010fundamental,cieliebak2012distributed,bellaiche2015}.
Several books devoted to the subject of multi-agent systems have already been published and summarize results from the various points of view discussed above. A partial list is provided in the bibliography, see \cite{mamei2006field,bullo2009distributed,mesbahi2010graph,gazi2011swarm,bonato2011game,flocchini2012distributed}.
\\

 \textbf{The Gathering or Geometric Consensus Problem}\\

Gathering or clustering or "coming together" is a widely studied problem by researchers interested in multi-agent systems. It is also known as a geometric consensus or "distributed agreement" problem. The studies devoted to this problem assume various types of mobile agent reactive motion control or dynamics, based on "rules" about how agents "influence" each other. In distributed computing a frequently discussed problem of consensus or agreement is the following: one has a network of $n$ computers - the agents - connected by communication links and each of these computing agents has a certain value, so that agent $i$ has value $v_i$. The aim is to have all the agents compute the same deterministic function of the values $v_1, v_2, ... , v_n$, for example their average. The computation should be done by the network as efficiently as possible by exchanging information over the communication links. For "average consensus", each agent could send its value to its "neighbours" (i.e. the agents it is directly connected to), and each agent could replace its value by some weighted average of all the values it has seen, including its own.\\

This relatively simple topic already poses quite interesting challenges in proving convergence and in selecting the weights for fastest convergence, see e.g.\cite{xiao2004fast,olshevsky2009convergence}. In our case we deal with mobile robots as agents, and aim to design their motion in response to the geometric constellation of their neighbours, in order to ensure their convergence to a point or a small region, i.e. a "consensus location" in space. The neighborhood of the robots is dependent on their location and sensing horizon, and the robots must use relative location coordinates since they often are assumed not to have access to common, or even aligned frames of reference in space.\\

Suppose a mobile agent $i$, located at $p_i$ in the environment exerts an omnidirectional "influence field" that depends on the distance $r$ from $p_i$, as described by a function $\Phi_{p_i}(r)$, and all agents sense the combined effect of the scalar fields induced by all other agents from their location. If we postulate that the agents will move in the negative gradient direction of the combined "influence" field, we may write that

$$\frac{dp_i(t)}{dt}=-\alpha \frac{\partial \sum_{j=1}^{n}\Phi(\|p_j(t)-p_i(t)\|)}{\partial p_i(t)}$$ 
and this yields
$$\frac{dp_i(t)}{dt}=+\alpha \sum_{j \neq i} \frac{\dot{\Phi} (\|p_j(t)-p_i(t)\|)}{\|p_j(t)-p_i(t)\|} (p_j(t)-p_i(t)) = \alpha \sum_{j \neq i} \dot{\Phi} (\|p_j(t)-p_i(t)\|)  \vec u_{i \to j}(t)$$
Note that we used here the fact that
$$\|p_j(t)-p_i(t)\|=((p_j(t)-p_i(t))^\intercal(p_j(t)-p_i(t)))^{1/2}$$ 
hence 
$$\frac{\partial \|p_j(t)-p_i(t)\|}{\partial p_i(t)}=\frac{1}{\|p_j(t)-p_i(t)\|} (p_j(t)-p_i(t))=\vec u_{i \to j}(t) $$
the unit vector from $p_i(t)$ to  $p_j(t)$, see Figure \ref{UnitVectorItoJ}.\\

\begin{figure}[H]
\captionsetup{width=0.8\textwidth}
  \centering
    \includegraphics[width=60mm]{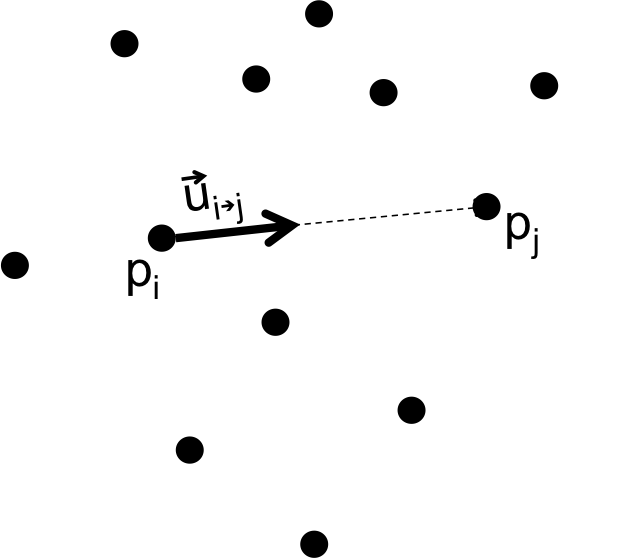}
    \caption{$\vec u_{i \to j}(t)$ is the unit vector pointing from $p_i(t)$ to $p_j(t)$.}
      \label{UnitVectorItoJ}
\end{figure}
For example we may consider (increasing) influence functions of the following particular forms:

\begin{itemize}

\item $ \Phi(r) = \frac{1}{2}r^2 $ \quad yields \quad $\frac{dp_i(t)}{dt} = \alpha \sum\limits_{j \neq i} (p_j(t) -p_i(t)) = \alpha \sum\limits_{j \neq i} \|p_j(t) -p_i(t)\|\vec u_{i \to j}(t) $

\item $\Phi(r) = r$  \quad yields \quad $ \frac{dp_i(t)}{dt} =\alpha \sum\limits_{j \neq i} \frac{p_j(t) - p_i(t)}{\|p_j(t) - p_i(t)\|} = \alpha \sum\limits_{j \neq i}\vec u_{i \to j}(t)$

\item $\Phi(r) = \ln(r+1)$ \quad yields \quad  $\frac{dp_i(t)}{dt} = \alpha \sum\limits_{j \neq i} \frac{1}{(\|p_j(t) - p_i(t)\|+1)}\vec u_{i \to j}(t)$

\end{itemize}
On the other hand, if an influence function $\Phi(r)$ decays with distance, for example as follows: 
$$\Phi(r) = \frac{1}{(r+1)^\beta}$$ 
we are led to:
$$ \frac{dp_i(t)}{dt} = -\alpha\beta \sum\limits_{j \neq i} \frac{1}{(\|p_j(t) - p_i(t)\|+1)^{\beta+1}}\vec u_{i \to j}(t)$$  
To ensure getting together in such cases, the agents should move in the positive direction of "influence field" gradients. Notice that in all the examples above, we obtain dynamics of the form:
$$\frac{dp_i(t)}{dt}=\alpha \sum_{j=1}^{n} \mathcal{F} (\|p_j(t)-p_i(t)\|)(p_j(t)-p_i(t))=\alpha \sum_{j=1}^{n} r \mathcal{F}(r)\vec u_{i \to j}(t)$$
where $\mathcal{F}$ is some scalar function.\\

As we shall see in the sequel, such coordinated motion keeps, due to symmetry, the average location of the agents (their geometric centroid) stationary. Furthermore, the required dynamics can be readily implemented provided each agent can sense the relative position of all the other agents with respect to itself, i.e. the vectors $(p_j(t)-p_i(t)) \forall j$.\\

It is also important to note that the dynamics discussed above is not the "physical" dynamics of a group of unit mass points or "particles" evolving due to their gravitational field. These would implement the following Newtonian motion rule

$$
\begin{array}{ll}
\begin{array}{l}
\end{array}
\left\{
\begin{array}{l}
\frac{dp_i(t)}{dt} = v_i(t) \\
\frac{dv_i(t)}{dt} =  \sum\limits_{j \neq i} G \frac{1}{\|p_j(t) -p_i(t)\|} \vec u_{i \to j}(t)
\end{array}
\right.
\end{array}
$$
which results in the following second order dynamics:

$$ \frac{d^2p_i(t)}{dt^2} = \sum\limits_{j \neq i} G \frac{1}{\|p_j(t) -p_i(t)\|^2}\vec u_{i \to j}(t)$$

While similar in spirit to the dynamics considered before, the unit mass points moving in their Newtonian gravitational field have their \textit{acceleration}, not their \textit{velocity} controlled by weighted sums of unit vectors pointing to their neighbours. This is a physical motion model taking into consideration the inertia of the agents and the physical impossibility to instantaneously change directions. In this report, we consider only the former, simpler dynamics, however we survey and analyze much more flexible local velocity controls that take into consideration all aspects of the geometry of the constellation of agents' neighbours. Such controls are often not expressible as results of simple "influence functions" and their dynamics cannot be written as weighted superpositions of unit vectors from each agent to its neighbours, with weights only depending on corresponding agent-to-agent distances alone.\\

\newpage
\textbf{First-order vs. Second-order models (a taxonomy)}\\

As we have seen in the previous section, a multi-agent system comprises identical anonymous and oblivious mobile agents acting according to some rules of motion determined by "what they see", i.e. by information about their neighbours (and sometimes about the environment) provided by their own sensors. We assume that the agents are located in the $\mathbb{R}^2$ plane, at points $\{p_1, p_2, ... , p_n\}$ and their geometric constellation evolves in time according to 
$$ \frac{dp_i(t)}{dt} = v_i(t) $$
where
$v_i(t)$ is determined by a function $\mathcal{F}_C$ and the constellation $\{p_1(t), p_2(t), ... , p_n(t)\}$ relative to $p_i(t)$ 
$$v_i(t)=\mathcal{F}_C\{p_1(t), p_2(t), ... , p_n(t)\, | \,p_i(t)\}$$
In a discrete-time setting the evolution will be
$$ p_i(k+1) = p_i(k) + \Delta p_i(k)$$
$$\Delta p_i(k)=\mathcal{F}_D\{p_1(k), p_2(k), ... , p_n(k)\, | \,p_i(k)\}$$

The various rules of motion are results of selecting the functions $\mathcal{F}$. If we write velocity vectors as follows (with respect to some global frame of reference)

$$ v(t) = \|v(t)\|[\cos \theta(t), \sin \theta(t)]$$
we have 
$$ \frac{dp_i(t)}{dt} = \frac{d}{dt} [x_i(t),y_i(t)] =  \|v_i(t)\|[\cos \theta_i(t), \sin \theta_i(t)]  $$
and we must specify

$$ \|v_i(t)\| = V\{p_1(t), p_2(t), ... , p_n(t)\,|\,p_i(t)\}$$
$$ \theta_i(t) = \Theta\{p_1(t), p_2(t), ... , p_n(t)\,|\,p_i(t)\}$$

Note that we require the functions $V\{*|p_i(t)\}$ and $\Theta\{*|p_i(t)\}$ to be fully determined by the instantaneous constellation of agents' locations with respect to (or as seen from the location) $p_i(t)$.\\

For the examples seen in the previous section we have

\begin{equation} \label{intro1}
\begin{array}{ll}
\frac{dp_i(t)}{dt} = \|v(t)\|[\cos \theta(t), \sin \theta(t)]^\intercal\\
\\
\left\{
\begin{array}{ll}
\|v_i(t)\| = \|\sum\limits_{j \neq i} \dot \Phi(\|p_j(t) -p_i(t)\|) \vec u_{i \to j}(t)\| \\
\\
\theta_i(t) = \; angle \; of \; \sum\limits_{j \neq i} \dot \Phi(\|p_j(t) -p_i(t)\|) \vec u_{i \to j}(t)
\end{array}
\right.
\end{array}
\end{equation} 

where $\Phi(r)$ is an influence field. In the case of Newtonian motion we had:

\begin{equation} \label{intro2}
\begin{array}{ll}
\frac{dp_i(t)}{dt} = \|v(t)\|[\cos \theta(t), \sin \theta(t)] ^\intercal\\
\\
\frac{dv_i(t)}{dt} = \sum\limits_{j \neq i}\mathcal{F}(p_j(t) -p_i(t)) \vec u_{i \to j}(t) \\
\end{array}
\end{equation} 
where $\mathcal{F}$ is some scalar function representing the "gradient of the interaction field".\\

In the examples above we need to clarify the issue of \textit{oblivion} in the behaviour of the agents. Memoryless-ness or obliviousness in the action of the agents means that their motion depends only on what they currently see from where they are. Furthermore we require their velocities to be determined without the need for a common frame of reference, and having rules of motion for $p_i$ depending only on the vectors $(p_j(t)-p_i(t))$ clearly satisfy this requirement.\\

Note however that, while the first rule above determines $\bar{v_i}$ solely based on the set of vectors $p_j(t)-p_i(t)$ (for agent $i$) the second rule requires the agent $i$ to also remember $\bar{v_i}$ at $(t-\epsilon)$ since it needs to integrate past velocities. Indeed 

$$\frac{dv_i(t)}{dt}=\mathcal{F} \{p_j(t)-p_i(t)\,|\, j \neq i\}$$
is in fact equivalent to 

$$v_i(t)= v_i(t-\epsilon) \; + \mathcal{F} \{p_j(t-\epsilon)-p_i(t-\epsilon)\, | \,j \neq i\}\cdot \epsilon$$

For this reason the rule of motion setting the velocity vector directly is memoryless and integration is only implicitly done by the changes in location of the agents while the rule of motion controlling the rate of change of the velocity vector also requires each agent to "remember" its past velocity. Hence in this case the agent implicitly performs a double integration of the velocity control to determine its location (and this implies the need to have a "state" that remembers its velocity too).\\

In the vast literature on multi-agent systems one encounters a wide variety of both single integration (i.e. velocity controlled or first order models) and second order (or double integration) models for the agents' motion. The second order models are the gravitational-like and unicycle models, and Vicsek and Reynolds type models, popular among physicists and engineers, while the first order models are usually considered by robotics researchers with background in geometry and computer science. We subsequently provide a chronologically ordered list classifying the papers mentioned in the bibliography of this report according to whether they deal with first order or second order models.\\


\begin{figure}[H]
\textbf{\underline{A Taxonomy of Multi-agent dynamic models}}\\

\textbf{{First-order models}}\\
Direct "velocity" control of the type $\dot p_i(t) = \mathcal{F}\{(p_j(t)-p_i(t))\,|\,p_i(t)\}$, or one-dimensional consensus models, are discussed in the following papers:\\
\begin{itemize}
\setlength\itemsep{-1ex}
\item Cybenco \cite{cybenko1989dynamic}
\item Mirollo and Strogatz \cite{mirollo1990synchronization} \cite{strogatz2000kuramoto}
\item Wagner, Bruckstein, et. al. \cite{bruckstein1991ants} \cite{bruckstein1993ant} \cite{wagner1997row} \cite{bruckstein1997probabilistic}
\item Oasa, Suzuki, Yamashita and Ando \cite{suzuki1999distributed} \cite{ando1999}
\item Cieliebak, Flocchini, Prencipe, Santoro and Widmayer \cite{cieliebak2003solving} \cite{flocchini2005gathering} \cite{cieliebak2012distributed}
\item Schlude \cite {schlude2003robotics} \cite{schlude2003point} 
\item Gazi and Passino \cite{gazi2003stability} \cite{gazi2004stability}
\item Xiao and Boyd \cite{xiao2004fast}
\item Moreau \cite{moreau2004} 
\item Gordon, Wagner, Elor and Bruckstein \cite{gordon2004} \cite{gordon2005} \cite{gordon2008}
\item Ren and Beard \cite{ren2005consensus}
\item Cohen, Agmon and Peleg \cite{cohen2005convergence} \cite{agmon2006fault}
\item Cort\'ez, Mart\'inez, D\"orfler and Bullo \cite{cortes2006robust} \cite{martinez2007motion} \cite{dorfler2014synchronization}
\item Olfati-Saber, Fax and Murray \cite{olfati2007consensus}
\item Ji and Egerstedt \cite{ji2007}
\item Olshevsky and Tsitsiklis \cite{olshevsky2009convergence}

\end{itemize}

\textbf{{Second-order models}}\\
Newtonian-like dynamics or Unicycle models of the type \newline $ \ddot p_i(t) = G\{(p_j(t)-p_i(t)),\dot p_i(t)\,|\,p_i(t)\}$, are studied among others in the papers:\\
\begin{itemize}
\setlength\itemsep{-1ex}
\item Reynolds \cite{reynolds1987flocks}
\item Reif and Wang \cite{reif1999social}
\item Jadbabaie, Lin and Morse \cite{jadbabaie2003}
\item Vicsek, Ben-Jacob, et. al. \cite{vicsek1995novel} \cite{vicsek2012collective}
\item Marshall, Broucke and Francis \cite{marshall2004formations}
\item Lin, Francis and Maggiore \cite{lin2005necessary}
\item F. Belkhouche and B. Belkhouche \cite{belkhouche2005modeling}
\item Hristu-Varsakelis and Shao \cite{hristu2007bio}
\item Wang and Slotine \cite{wang2005partial}
\item Cucker and Smale \cite{cucker2007emergent}
\item Motsch and Tadmor \cite{motsch2014heterophilious}
\item Chazelle \cite{chazelle2014convergence} \cite{chazelle2015algorithmic}  

\end{itemize}
\end{figure}

\newpage
In this report we survey the geometric consensus or the Gathering / Clustering / Aggregation / Rendezvous problem, for velocity controlled agents in the continuous (C) and discrete (D) time settings. The agents' sensing capabilities considered are either with full (F) or limited (L) visibility horizon, and measurements of the relative location to their neighbours will be either direction and distance, i.e. relative position (P) or direction, i.e. bearing only (B).\\

The action of the agents will be assumed synchronous (i.e. all agents are active at all times), except for the interesting cases where timing randomization is absolutely necessary to ensure convergence.\\

The three choices (C vs. D), (F vs. L), and (P vs. B) lead to the analysis of eight cases of gathering problems that provide a focused overview of multi-agent research over the years from the beginning to the forefront of research today.\\

\newpage
\addcontentsline{toc}{section}{The gathering problem}
\section*{The gathering problem}

A multi-agent system may be required to perform a variety of tasks. A basic task is gathering, i.e. using the sensing and motion capabilities to bring the agents together by designing suitable distributed interaction rules. This paper addresses the gathering problem in detail.\\

Consider a system of $n$ identical, anonymous, and memoryless agents in Euclidean space ($\mathbb{R}^d$) specified by their time varying locations $\{p_i(t)\}_{i=1,2,...,n}$. These agents interact with each other such that their motion ($\dot p_i(t)$ or $p_i(k+1)$) is determined by the constellation of their neighbours. The neighbour sets denoted $N_i(t)$, are determined by the agents' sensing range $V$, and the type of interaction between them is described by dynamic laws of the type:
$$\dot p_i(t) = \mathcal{F}\{p_j(t)\}_{p_j(t)\in N(p_i(t))}$$
$$p_i(k+1) = p_i(k) + \mathcal{F}\{p_j(k)\}_{p_j(k)\in N(p_i(k))}$$

In this paper we discuss four types of systems, which differ in the agents' sensing range and capabilities.

\begin{itemize}

\item \textbf{VISIBILITY:}\\ We consider two cases. \textit{\textbf{Full visibility}}, if the sensing range of the agents is unlimited ($V=\infty$), and each agent sees all the other agents, the graph representing the visibility between the agents being a complete graph. \textit{\textbf{Limited visibility}} if the sensing range of the agents is limited, so that each agent senses only those agents within its sensing range at any time, and the topology of the visibility graph changes according to the position of the agents.

\item  \textbf{SENSING:}\\ We call \textit{\textbf{position sensing}} the case when each agent can determine the (relative) position to all other agents within its visibility range (i.e. relative distance and relative direction), and \textit{\textbf{bearing-only sensing}} if agents sense only the direction towards their neighbours. 

\end{itemize}

Each of the above cases is discussed in two variants of temporal evolution:  \textit{\textbf{continuous-time}}  and \textit{\textbf{discrete-time}}, hence we discuss gathering of multi-agent systems for eight different cases.

\addcontentsline{toc}{subsection}{A Preliminary Observation}
\subsection*{A Preliminary Observation}
We begin our analysis with a lemma (see e.g. Gazi and Passino \cite{gazi2003stability}) showing that the average position is a system invariant in case the agents' decentralised dynamics is governed by an antisymmetric pairwise interaction function. Indeed let $f$ be a function such that 
$$\forall p_i, p_j;\  \ f(p_i(t)-p_j(t))=-f(p_j(t)-p_i(t))$$ 
Then we have the following:

\begin{lemma}\label{invariant}

Let the dynamics of a multi-agent system be 
\begin{equation}
\dot p_i(t)=\sum_{j=1}^{n}f(p_i(t)-p_j(t)) 
\label{eq:DynamicsGeneralCont}
\end{equation}
or, in discrete time,
\begin{equation}
p_i(k+1)=p_i(k)+\sum_{j=1}^{n}f(p_i(k)-p_j(k))
\label{eq:DynamicsGeneralDisc}
\end{equation}
where $f$ is the inter-agent interaction function.\\ 

If $f$ is antisymmetric, the average position denoted $\bar p$ is a system invariant i.e. $\bar p = \bar p(0) = const. $

\end{lemma}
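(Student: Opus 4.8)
The plan is to show that the time derivative (or the discrete increment) of the average position $\bar p = \frac{1}{n}\sum_{i=1}^n p_i$ vanishes, which is precisely the statement that $\bar p$ is conserved. The core idea is that summing an antisymmetric interaction function over all ordered pairs produces a telescoping cancellation: each term $f(p_i - p_j)$ is annihilated by its partner $f(p_j - p_i)$.

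First I would treat the continuous-time case \eqref{eq:DynamicsGeneralCont}. Differentiating $\bar p(t) = \frac{1}{n}\sum_{i=1}^n p_i(t)$ and substituting the dynamics gives
\begin{equation}
\frac{d\bar p(t)}{dt} = \frac{1}{n}\sum_{i=1}^n \dot p_i(t) = \frac{1}{n}\sum_{i=1}^n \sum_{j=1}^n f(p_i(t)-p_j(t)).
\end{equation}
The key step is to recognize this double sum as a sum over all ordered pairs $(i,j)$. I would reorganize it by pairing the term $(i,j)$ with the term $(j,i)$; by the antisymmetry hypothesis $f(p_i - p_j) = -f(p_j - p_i)$, these cancel in pairs, while the diagonal terms $f(p_i - p_i) = f(0)$ vanish since antisymmetry forces $f(0) = -f(0)$, hence $f(0)=0$. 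Therefore the entire double sum is zero, giving $\frac{d\bar p}{dt} = 0$, and integrating yields $\bar p(t) = \bar p(0)$ for all $t$.

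Next I would handle the discrete-time case \eqref{eq:DynamicsGeneralDisc} by the same pairing argument applied to the increment. Summing the update rule over $i$,
\begin{equation}
\sum_{i=1}^n p_i(k+1) = \sum_{i=1}^n p_i(k) + \sum_{i=1}^n\sum_{j=1}^n f(p_i(k)-p_j(k)),
\end{equation}
and the double sum again vanishes by the identical antisymmetry cancellation, so $\bar p(k+1) = \bar p(k)$; an induction on $k$ then gives $\bar p(k) = \bar p(0)$ for all $k$.

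Honestly, there is no serious obstacle here: the result is a direct consequence of antisymmetry, and the only thing to be careful about is making the ordered-pair reindexing fully rigorous (for instance by writing the double sum as $\sum_{i<j}\bigl[f(p_i-p_j)+f(p_j-p_i)\bigr] + \sum_i f(0)$ and noting each bracket is zero). I would state explicitly the auxiliary fact $f(0)=0$ derived from antisymmetry so that the diagonal terms are accounted for cleanly, since that is the one point a reader might otherwise overlook.
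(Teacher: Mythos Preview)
Your proposal is correct and follows essentially the same approach as the paper: compute $\dot{\bar p}(t)$ (respectively $\bar p(k+1)-\bar p(k)$), obtain the double sum $\frac{1}{n}\sum_i\sum_j f(p_i-p_j)$, and observe that antisymmetry makes the terms cancel pairwise. Your version is slightly more explicit in handling the diagonal terms via $f(0)=0$ and in spelling out the $\sum_{i<j}$ reindexing, but the argument is the same.
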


\begin{proof}
The average position of the system is: 
 $$\bar p(t)=\frac{1}{n}\sum_{i=1}^{n}p_i(t)$$
We have clearly,
$$ \dot{\bar p}(t)=\frac{1}{n}\sum_{i=1}^{n}\dot p_i(t)=\frac{1}{n}\sum_{i=1}^{n}\sum_{j=1}^{n}f(p_i(t)-p_j(t))=0$$
since in the above summation $f(p_i(t)-p_j(t))$ cancels $f(p_j(t)-p_i(t))$ , due to the fact that $f$ is  antisymmetric.\\\\
Therefore 
$$\dot{\bar{p}}(t)=0 \implies \bar p(t) = \bar p = const. = \bar p(0)$$
and obviously the same holds for discrete time,
$$\bar p(k+1)=\bar p(k) \implies \bar p(k)= \bar p = const.=\bar p(0)$$

\end{proof}

\newpage
\section{Unlimited Visibility, Position Sensing}

Throughout this section we assume that each agent senses the relative position of all the other agents, i.e. it has unlimited visibility and the capability to measure distance and direction to all its neighbours.

\subsection{Continuous Time Dynamics (system $\mathcal{S}_1$)}

Suppose that each agent moves according to the following dynamic law:

\begin{equation}
\dot{p_i}(t)=-\sigma \sum_{j=1}^{n} (p_i (t)-p_j (t))
\label{eq:Dynamics1}
\end{equation}
where $\sigma$ is a constant positive scalar gain factor, i.e. each agent continuously moves with a velocity proportional to the sum of the vectors pointing to the positions of all other agents.\\

Since dynamics (\ref{eq:Dynamics1}) is governed by a trivially antisymmetric function, the average position of the agents in $\mathcal{S}_1$ is invariant, i.e. $\bar p=const$ (Lemma $\ref{invariant}$).

\begin{theorem} \label{LinearTheorem}
For an arbitrary initial constellation, all the agents of system $\mathcal{S}_1$ asymptotically converge to the average location of the initial constellation.
\end{theorem}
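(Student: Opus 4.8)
The plan is to exploit the invariance of the centroid established in Lemma \ref{invariant} to decouple the coupled system (\ref{eq:Dynamics1}) into $n$ independent linear differential equations, one per agent, each describing exponential relaxation toward the fixed average location. Since everything is linear, the proof should be short; the only real content is recognizing the decoupling.

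First I would rewrite the right-hand side of (\ref{eq:Dynamics1}) in terms of the average position $\bar p(t)=\frac{1}{n}\sum_{j=1}^{n}p_j(t)$. Splitting the sum gives
$$\dot p_i(t)=-\sigma\Big(n\,p_i(t)-\sum_{j=1}^{n}p_j(t)\Big)=-\sigma n\,\big(p_i(t)-\bar p(t)\big).$$
The interaction function $f(x)=-\sigma x$ is trivially antisymmetric, so Lemma \ref{invariant} applies and $\bar p(t)=\bar p(0)$ for all $t$. Substituting this constant value, the dynamics of each agent becomes self-contained:
$$\dot p_i(t)=-\sigma n\,\big(p_i(t)-\bar p(0)\big).$$

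Next I would introduce the deviation $e_i(t):=p_i(t)-\bar p(0)$ and observe that $\dot e_i(t)=\dot p_i(t)=-\sigma n\,e_i(t)$, a linear ordinary differential equation whose solution is $e_i(t)=e^{-\sigma n t}\,e_i(0)$. Because $\sigma>0$ and $n\ge 1$, the exponent $\sigma n$ is strictly positive, so $e_i(t)\to 0$, hence $p_i(t)\to\bar p(0)$ for every $i$, which is exactly the claimed asymptotic convergence to the average of the initial constellation.

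I do not anticipate a genuine obstacle, as the crux is simply noticing that the seemingly coupled dynamics (\ref{eq:Dynamics1}) separates once the centroid is known to be stationary. An equivalent route would stack the coordinates into a vector and write $\dot P=-\sigma L\,P$ with $L=nI-\mathbf{1}\mathbf{1}^{\intercal}$; the spectrum of $L$ consists of the eigenvalue $0$ (eigenvector $\mathbf{1}$, the conserved centroid mode) together with the eigenvalue $n$ of multiplicity $n-1$ (the modes decaying at rate $\sigma n$), yielding the same conclusion. I would prefer the first, elementary derivation, since it invokes Lemma \ref{invariant} directly and avoids any spectral machinery.
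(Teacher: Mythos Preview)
Your proposal is correct and essentially identical to the paper's own proof: the paper also invokes Lemma~\ref{invariant}, sets (without loss of generality) $\bar p=0$, reduces (\ref{eq:Dynamics1}) to $\dot p_i=-\sigma n\,p_i$, and writes down the explicit solution $p_i(t)=p_i(0)e^{-\sigma n t}$. The only cosmetic difference is that you carry the deviation $e_i=p_i-\bar p(0)$ explicitly rather than translating the origin; the spectral alternative you mention appears later in the paper's discussion section.
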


\begin{proof}
Without loss of generality, consider the positions $p_i(t)$ in a global coordinate system centered at $\bar p$, i.e. take $\bar p=0$ (a system invariant). Hence we have:
$$\dot{p_i}(t)=-\sigma \sum_{j=1}^{n} p_i (t) + \sigma \sum_{j=1}^{n}p_j (t)=-\sigma n p_i (t)$$\\
which yields

\begin{equation} 
p_i(t)=p_i (0)e^{-\sigma n t}
\label{eq:Dynamics4}
\end{equation}
i.e. all agents of system $\mathcal{S}_1$ exponentially asymptotically converge to the average position at $t=0$.
$$
 \forall{i};\  \  \lim_{t \to \infty} p_i(t) = \bar p = 0
$$

\end{proof}


\subsection{Discrete Time Dynamics (system $\mathcal{S}_2$)}

Next we assume each agent moves according to a descretized dynamic law:

\begin{equation}
p_i(k+1)=p_i(k)-\sigma\sum_{j=1}^{n}(p_i(k)-p_j(k))
\label{eq:Dynamics2_1}
\end{equation}
where $\sigma$ is a constant positive scalar gain factor, i.e. at each time-step, each agent jumps proportionally to the sum of relative position vectors to all the other agents.\newline

Dynamics (\ref{eq:Dynamics2_1}) is again an antisymmetric function, hence the average position of the agents in $\mathcal{S}_2$ is also invariant, i.e. $\bar p=const$, due to Lemma $\ref{invariant}$.

\begin{theorem} \label{LinearTheorem2}
For an arbitrary initial constellation, if $0 < \sigma < \frac{2}{n}$ all the agents of system $\mathcal{S}_2$ asymptotically converge to the average location of the initial constellation.
\end{theorem}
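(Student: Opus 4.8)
The plan is to reproduce the structure of the proof of Theorem~\ref{LinearTheorem}, replacing the differential equation by the difference equation~(\ref{eq:Dynamics2_1}) and the exponential solution by a geometric one. First, since the dynamics is governed by a (trivially) antisymmetric interaction function, Lemma~\ref{invariant} guarantees that $\bar p(k) = \bar p(0)$ is constant. I would therefore place the global coordinate origin at this fixed average position and set $\bar p = 0$ without loss of generality, exactly as in the continuous-time case.

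With this choice I would expand the update rule and use the centering:
\begin{equation*}
p_i(k+1) = p_i(k) - \sigma \sum_{j=1}^{n} (p_i(k) - p_j(k)) = (1 - \sigma n)\, p_i(k) + \sigma \sum_{j=1}^{n} p_j(k).
\end{equation*}
Because $\sum_{j=1}^{n} p_j(k) = n\,\bar p(k) = 0$, the coupling term vanishes and the system decouples into $n$ identical linear recurrences $p_i(k+1) = (1 - \sigma n)\, p_i(k)$. Since the dynamics acts identically and independently on each Cartesian coordinate, it suffices to treat each $p_i$ as a scalar, which loses no generality.

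Solving the resulting geometric recursion gives $p_i(k) = (1 - \sigma n)^k\, p_i(0)$, so that $\lim_{k \to \infty} p_i(k) = 0 = \bar p$ holds precisely when the contraction factor satisfies $|1 - \sigma n| < 1$. The remaining step is the elementary observation that $|1 - \sigma n| < 1$ is equivalent to $0 < \sigma n < 2$, i.e. $0 < \sigma < \frac{2}{n}$, which is exactly the hypothesized range; one should also note for completeness that at $\sigma = \frac{2}{n}$ the factor has modulus $1$ (oscillation without convergence) and for $\sigma > \frac{2}{n}$ the iterates diverge, so the stated interval is sharp.

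The only subtle point — the ``main obstacle,'' such as it is — lies in making explicit why the analysis collapses to a single scalar factor: one must invoke the average-invariance from Lemma~\ref{invariant} to kill the coupling sum, after which the convergence criterion is just the condition for a geometric sequence to decay. In contrast to the continuous-time system $\mathcal{S}_1$, which converges for every positive gain, here the discretization introduces the genuine upper bound $\sigma < \frac{2}{n}$, and I expect the proof's value to be in pinpointing that this bound is precisely $|1 - \sigma n| < 1$.
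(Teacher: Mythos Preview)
Your proposal is correct and follows essentially the same route as the paper: center at the invariant $\bar p$ via Lemma~\ref{invariant}, observe the coupling sum vanishes so each agent obeys $p_i(k+1)=(1-\sigma n)p_i(k)$, solve the geometric recursion, and read off the convergence condition $|1-\sigma n|<1$ together with the boundary and divergence cases. The only cosmetic difference is that the paper displays the four cases $\sigma=0$, $0<\sigma<2/n$, $\sigma=2/n$, and $\sigma>2/n$ explicitly, whereas you phrase the same content as sharpness of the interval.
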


\begin{proof}
Without loss of generality, $\bar p$ being a system invariant, consider the positions $p_i(k)$ in a global coordinate system centered at $\bar p=0$, hence:
\begin{equation}\label{S2_dev}
p_i(k+1)=p_i(k)-\sigma\sum_{j=1}^{n}p_i(k)+\sigma\sum_{j=1}^{n}p_j(k) =p_i(k)-\sigma n p_i (k) 
\end{equation}
which yields,

\begin{equation}
p_i(k)=(1-\sigma n)^k p_i(0)
\label{eq:Dynamics2_3}
\end{equation}
Therefore 
 
\begin{equation}
 \lim_{k \to \infty} p_i(k)= \left\{\begin{alignedat}{3}
    & p_i(0)  &&\quad  \sigma =0 \\
    & \bar p(0)=0 &&\quad 0<\sigma<\frac{2}{n} \\
    & (-1)^k p_i(0) &&\quad \sigma =\frac{2}{n}, &&\quad (oscillation)\\
    & divergence &&\quad o.w.
  \end{alignedat}\right.
\label{eq:Dynamics2lim}
\end{equation}
\newline

Hence, agents of system $\mathcal{S}_2$ asymptotically converge to the average position of the initial constellation if $0 < \sigma < \frac{2}{n}$, i.e.

$$ \forall{i};\quad 0<\sigma <\frac{2}{n};  \  \  \lim_{k \to \infty} p_i(k)=\bar p = 0 $$

\end{proof}


\subsection{Discussion and Generalisations}

So far we discussed the gathering of $n$ agents with unlimited visibility. The postulated dynamics was linear and quite straightforward to analyze.\\

\subsubsection{System scalability} Notice that system $\mathcal{S}_1$ is fully scalable in the sense of convergence guarantee. Adding or removing agents affects the convergence \textit{rate}, but convergence is always guaranteed. Contrary to the system $\mathcal{S}_1$ (which is in continuous time), system $\mathcal{S}_2$ (in discrete time) is no longer scalable in the sense of guaranteed convergence. While the number of agents in system $\mathcal{S}_1$ affects only the convergence \textit{speed}, it is easy to notice in (\ref{eq:Dynamics2lim}) that for a fixed $\sigma$, adding more agents to system $\mathcal{S}_2$ may cause the agents to disperse. One way to overcome this scalability limitation is to give each agent the ability to also count the number of its neighbours, and hence the ability to also calculate the \textit{average} position of its neighbours. Then, if each agent of $\mathcal{S}_2$ adjusts its dynamics to
$$p_i(k+1)=p_i(k)-\frac{1}{n}\sigma\sum_{j=1}^{n}(p_i(k)-p_j(k)),$$
the system will be scalable with guaranteed convergence for $0<\sigma<2$, no matter what the number of agents is. In particular if $\sigma = 1$, all agents will jump to the system's  average location in one step.\\ 

\subsubsection{Travelling path}
Since the agents of system $\mathcal{S}_1$ and $\mathcal{S}_2$ do not have memory, each of them continuously calculates its motion based on the current relative location of all other agents. However, as we see in the proof of Theorem \ref{LinearTheorem} equation (\ref{eq:Dynamics4}) and Theorem \ref{LinearTheorem2} equation (\ref{eq:Dynamics2_3}), each agent moves on a \textit{straight line} from its initial position $p_i(0)$ to $\bar p$, and the travelling speed of the agents decreases as they approach $\bar p$ (see Figure \ref{straight line} for simulation results). If the agents had the ability to compute and \textit{remember} $\bar p$, they could have travelled there at a fixed velocity and gather there in finite time. Note however, that, while we assume our agents have quite extensive computational capabilities based on what they "see" at each moment, we consider them to be \textbf{oblivious}, or memory-less in the sense that they cannot recall the past at all. Hence they cannot remember $\bar p$ and move toward it.\\

\begin{figure}[H]
\captionsetup{width=0.8\textwidth}
  \centering
    \includegraphics[width=106mm]{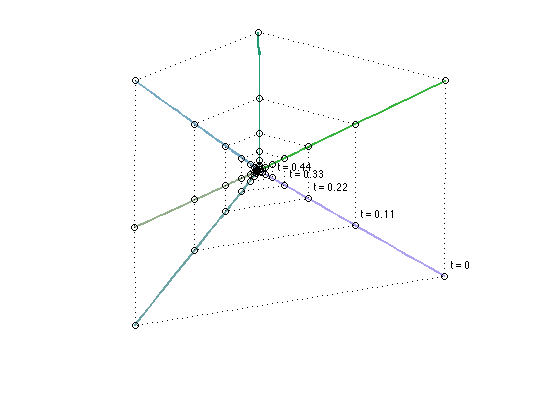}
    \caption{A typical simulation result for system $\mathcal{S}_2$ having 6 agents.}
      \label{straight line}
\end{figure}

\subsubsection{Linear Systems}
Systems $\mathcal{S}_1$ and $\mathcal{S}_2$ are simple, having linear dynamics, and they belong to a global family of systems whose properties are well known. The dynamics of such systems is briefly reviewed below in a general setting in both continuous time and discrete time. The dynamics of general linear systems in $\mathbb{R}^d$ is given by:
\begin{equation} \label{GlobalLinearDynamics}
	\begin{array}{ll}
		
		\dot{P}(t) = -\sigma LP(t) & :CT \\
		P(k+1) = (I_n - \sigma L)P(k) & :DT
	
	\end{array}
\end{equation}
where:
\begin{itemize}
\item $P(t) = [p_1(t),...,p_n(t)]^\intercal$ (with $p_i(t) = [p_i^1(t),...,p_i^d(t)]^\intercal$) is called the state vector
\item $n$ is the number of agents
\item $d$ is the space dimension
\item $L$ is an $(n \times n)$ matrix that determines the system dynamics
\end{itemize}
In particular, in the linear multi-agent systems setting, we focus on system dynamics matrices $L$ of the form $L(G)$, the graph Laplacian matrix, where the graph $G$ represents the interdependency between the agents. This matrix has some special properties which are extensively exploited, as we shall see next.\\

Interactions between agents in multi-agent systems are often mathematically described using a graph, commonly labelled as $G(\mathcal{V},\mathcal{E})$, where $\mathcal{V}=\{\nu_1, \nu_2,...,\nu_n\}$ is the set of vertices (representing the agents), and $\mathcal{E} \subseteq \mathcal{V} \times \mathcal{V}$ is the set of edges (representing connections between the agents). The neighbourhood set of a vertex is the set of vertices connected to it, i.e.  $N(\nu_i)=\{\nu_j\in \mathcal{V}\; |\; \{\nu_i,\nu_j \} \in \mathcal{E} \}$, or in short $N_i$.\\

For undirected graphs, the adjacency matrix $A(G)$ is a symmetric matrix encoding the connection between vertices so that $A_{ij}=1$ if $\epsilon_{ij} \in \mathcal{E}$ and $A_{ij}=0$ otherwise. The degree matrix $\Delta(G)$ is a diagonal matrix where $\Delta_{ij}=0$ for all $i \neq j$, and $\Delta_{ii} = deg(v_i)=|N(v_i)|$. The Laplacian matrix $L(G)$ is defined as $L(G)=\Delta(G)-A(G)$, and has the following properties:\\

\begin{enumerate}
\item $L(G)$ is symmetric and positive semi-definite, hence all its eigen-values $\lambda_i$ are real and non negative.
\item The vector $\vec{1} = [1, 1, ..., 1]^\intercal$ is in the null space of $L(G)$.
\item For connected graphs the null space of $L(G)$ is one dimensional and spanned by the vector $\vec{1}$
$$Span(null(L(G))) = \vec{1}$$
\end{enumerate}

Any symmetric positive semi-definite matrix $L(G)$ may be expressed using the eigen-decomposition:
$$L (G)= U\Lambda U^* = \sum\limits_{i=1}^n \lambda_iU_iU_i^* $$
where $U = [U_1, U_2, ..., U_n]^\intercal$ is a unitary (complex) matrix whose columns comprise the eigenvectors of $L(G)$, corresponding to the eigenvalues $\lambda_i$ - an orthonormal basis for $\mathbb{C}^n$.\\

Therefore, the dynamics (\ref{GlobalLinearDynamics}), where the matrix $L$ is the graph Laplacian $L(G)$, may be written as follows:\\

\begin{equation} \label{GlobalLinearDynamics_Dev}
	\begin{array}{ll}
		\dot{P}(t) = -\sigma LP(t) = -\sigma  \sum\limits_{i=1}^n \lambda_i U_i U_i^* P(t)  & : CT \\
		P(k+1) = (I_n - \sigma L)P(k) =  \sum\limits_{i=1}^n (1-\sigma \lambda_i) U_i U_i^* P(k) & : DT
	\end{array}
\end{equation} 
 
Since $U_i^*U_j = \delta_{ij}$, the trajectories of (\ref{GlobalLinearDynamics_Dev}) are given by:
\begin{equation} \label{GlobalLinearTrajectories}
	\begin{array}{ll}
		
		P(t) = \sum\limits_{i=1}^n e^{-\sigma \lambda_i t} \; U_i U_i^* P(0) & :CT \\
		P(k) =  \sum\limits_{i=1}^n (1-\sigma\lambda_i)^k \; U_i U_i^* P(0)  & :DT
	
	\end{array}
\end{equation}

Next, we analyze this solution using the other properties of $L(G)$.\\

Since we consider that the interconnection topology is represented by a connected graph,  the null space of $L(G)$ is spanned only by the unit vector $\frac{1}{\sqrt{n}} \vec 1$, the eigenvector corresponding to eigenvalue $\lambda_1=0$. Therefore, in (\ref{GlobalLinearTrajectories}), in the continuous time system, all terms associated to the non-zero eigenvalues vanish exponentially, hence:
$$ \lim\limits_{t \to \infty} P(t) = \frac{1}{n} \vec{1}\vec{1}^\intercal P(0) =  \vec{1} \frac{1}{n}\sum\limits_{i=1}^n p_i(0)  =  \vec{1} \frac{1}{n} \sum\limits_{i=1}^{n} p_i(t)$$
and in discrete time, if $| 1 - \sigma \max\limits_i(\lambda_i) | < 1$  then also
$$	\lim\limits_{k \to \infty} P(k) =  \frac{1}{n}\vec{1}\vec{1}^\intercal P(0) = \vec{1} \frac{1}{n} \sum\limits_{i=1}^np_i(0)      =  \vec{1}\frac{1}{n} \sum\limits_{i=1}^{n} p_i(k) $$

From (\ref{GlobalLinearTrajectories}) we see that $\frac{d}{dt}{(U_1^*P(t))} = \frac{d}{dt}{(U_1P(0))}=0$, and therefore the quantity ${U_1^*P(t)}=\frac{1}{\sqrt{n}} \sum\limits_{i=1}^n p_i(0)$ is invariant under the dynamics given by (\ref{GlobalLinearDynamics}) and this is a realisation of Lemma \ref{invariant} for the general case of linear systems with symmetric inter-agent interaction.\\

An example of the trajectories of six agents with dynamics described by a fixed, connected, but not complete graph, as the agents move towards the average location of their initial positions is  presented in Figure \ref{curved line}. Note that for general topologies the trajectories are not necessarily straight lines as in the complete graph case.\\

Figure \ref{curved line directed} presents simulation results for a linear system describing multi-agent interactions, which are represented by a directed graph  with a Laplacian matrix that is not symmetric. The example given, corresponds to the so called "Linear Cyclic Pursuit" example (in the discrete time case) as discussed in \cite{bruckstein1991ants}. Note that here too the average location of the initial positions is invariant, however, this property is not a result of a realisation of Lemma \ref{invariant}, but rather of the fact that in this case the Laplacian matrix, which is Toeplitz and circulant, has both row sums and column sums equal to zero. Hence the vectors $\vec{1}$ and $\vec{1}^\intercal$ are left and right eigenvectors corresponding to their zero eigenvalues.\\

\newpage
\begin{figure}[H]
\captionsetup{width=0.8\textwidth}
  \centering
	\includegraphics[width=100mm]{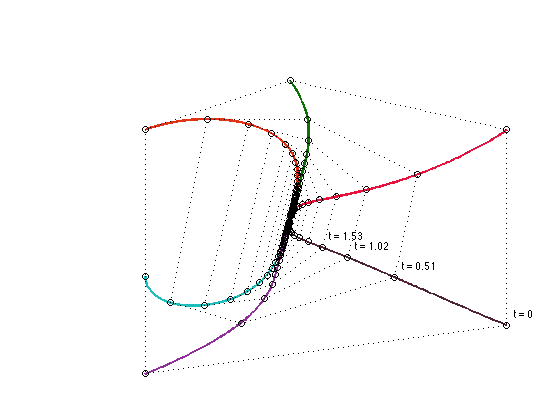}
	  \caption{Simulation results for a discrete time linear system composed of 6 agents with a connected interconnection graph and fixed topology.}
    \label{curved line}
\end{figure}
\begin{figure}[H]
\captionsetup{width=0.8\textwidth}
  \centering
	\includegraphics[width=100mm]{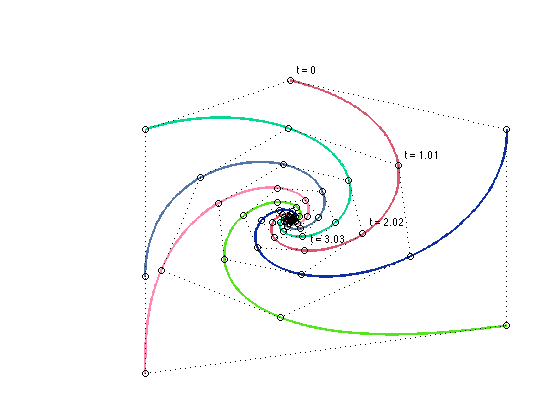}
	  \caption{Simulation results of linear cyclic pursuit of a discrete time linear system with directed, balanced and cyclic interconnection graph.}
    \label{curved line directed}
\end{figure}

\subsubsection{Convergence analysis with Lyapunov functions} A function is called Lyapunov if it maps the state of the system to a non-negative value in such a way that the system dynamics causes a monotonic decrease of this value. If the Lyapunov function reaches zero only at desirable states of the system and we prove that the dynamics leads the Lyapunov function to zero, we can argue that the system converges to a desirable state.\\

As a first example of using a Lyapunov function, let us prove the convergence of a multi-agent system to the average location of the initial constellation of the agents with the dynamics given by
\begin{equation} \label{LapllacianLinearDynamics}
	\dot{P}(t) = -\sigma L(G)P(t) 
\end{equation}
As a  Lyapunov function $\mathcal{L}_1$ let us first choose the sum of squared distances of the agents from the invariant average location as follows:
\begin{equation}\label{SquareDistancesFromMean}
	\mathcal{L}_1(P(t)) = \sum\limits_{i=1}^n \|p_i(t) - \bar{p}\|^2
\end{equation}\
   
By Lemma \ref{invariant} the average position of the agents in the system described by dynamics (\ref{LapllacianLinearDynamics}) is invariant. Without loss of generality, let us set this invariant to be $\bar{p}(t) = [0,...,0]^\intercal $. It is easy to see that under the assumed dynamics (\ref{LapllacianLinearDynamics}), we have 
 $$ \frac{d}{dt} \mathcal{L}_1(P(t)) = \sum\limits_{i=1}^n(\nabla_{p_i} \mathcal{L}_1(P(t)))^\intercal \dot{p}_i(t) = 2 \sum\limits_{i=1}^n {p_i}^\intercal(t) \dot{p}_i(t) =$$
 $$= -2 \sigma \sum\limits_{i=1}^n {p_i}^\intercal(t) \sum\limits_{j \in N_i}(p_i(t) - p_j(t)) = -2 \sigma \sum\limits_{i=1}^n\sum\limits_{j \in N_i}(\|p_i(t)\|^2 - {p_i}^\intercal(t)p_j(t)) =$$
 $$= - \sigma \left[  \sum\limits_{i=1}^n\sum\limits_{j=1}^n A_{ij}(\|p_i(t)\|^2 - {p_i}^\intercal(t)p_j(t)) +  \sum\limits_{i=1}^n\sum\limits_{j=1}^n A_{ij}(\|p_i(t)\|^2 - {p_i}^\intercal(t)p_j(t))\right] = $$
 $$= - \sigma \left[  \sum\limits_{i=1}^n\sum\limits_{j=1}^n A_{ij}(\|p_i(t)\|^2 - {p_i}^\intercal(t)p_j(t)) +  \sum\limits_{j=1}^n\sum\limits_{i=1}^n A_{ji}(\|p_j(t)\|^2 - {p_i}^\intercal(t)p_j(t))\right]$$
Recall that $A$ is the Adjacency matrix, which in the undirected graph case like ours is symmetric, i.e $A_{ij} = A_{ji}$. Notice, that the indices $i$ and $j$ were switched in the last term, and since we can rewrite $\sum\limits_{j=1}^n\sum\limits_{i=1}^n$ as $\sum\limits_{i=1}^n\sum\limits_{j=1}^n$, we have that
$$\frac{d}{dt} \mathcal{L}_1(P(t)) =$$
 $$= - \sigma \left[  \sum\limits_{i=1}^n\sum\limits_{j=1}^n A_{ij}(\|p_i(t)\|^2 - {p_i}^\intercal(t)p_j(t)) +  \sum\limits_{i=1}^n\sum\limits_{j=1}^n A_{ij}(\|p_j(t)\|^2 - {p_i}^\intercal(t)p_j(t))\right] = $$
 $$ = -\sigma \sum\limits_{i=1}^n\sum\limits_{j=1}^n A_{ij}\left( \|p_i(t)\|^2 - 2{p_i}^\intercal(t)p_j(t) + \|p_j(t)\|^2 \right) = - \sigma \sum\limits_{i=1}^n\sum\limits_{j \in N_i}^n\| p_i(t) - p_j(t) \|^2$$ 
 
Therefore, for a connected interaction topology of $G(\mathcal{V},\mathcal{E})$, the time derivative of $\mathcal{L}_1(P(t))$ equals zero if and only if all agents are at the same position, otherwise the value of the time derivative is strictly negative. Hence, the system asymptotically converges to the invariant location $\bar{p}(t) = [0,...,0]^\intercal$ as claimed.\\

\subsubsection{"Potential-like" Lyapunov function}
Another approach for convergence analysis is choosing a function $\mathcal{L}_2(P(t))$ whose gradient descent yields the dynamics of the system, if possible.\\\\
The gradient descent for $\mathcal{L}_2(P(t))$ is given by

$$ \dot{P}(t) = -\sigma\nabla_{P}\mathcal{L}_2(P(t)) $$\\
which yields the following dynamics for $\mathcal{L}_2(P(t))$:

$$ \frac{d}{dt} \mathcal{L}_2(P(t)) = \nabla_P\mathcal{L}_2(P(t))^\intercal \dot{P}(t) = -\sigma\|\nabla_P \mathcal{L}_2(P(t)) \|^2 $$\\
For example, the continuous time  dynamic law: 
$$ \dot{p_i}(t)= - \sigma\sum_{j \in N_i} (p_i (t)-p_j (t))  $$
is the gradient descent of the function
\begin{equation} \label{PotentialFunc}
	\mathcal{L}_2(P(t)) = \frac{\sigma}{2} \sum\limits_{i=1}^n \sum\limits_{j \in N_i}^n \|p_i(t) -p_j(t)\|^2
\end{equation}

Therefore, if the topology of the agents' interconnection graph is connected, a linear system described by (\ref{GlobalLinearDynamics}) asymptotically converges to a point, and the time derivative of function (\ref{PotentialFunc}) reaching zero if and only if all agents are collocated, otherwise the value of the time derivative is strictly negative.\\

This method may also be used to analyze a wider family of systems that have gradient descent dynamics corresponding to "potential functions" of the form:

\begin{equation} \label{PotentialFunction}
	\mathcal{L}_2^\alpha(P(t)) = \frac{\sigma}{\alpha} \sum\limits_{i=1}^n \sum\limits_{j \in N_i}^n \|p_i(t) - p_j(t)\|^{\alpha}
\end{equation}
where $\alpha$ is a positive scalar.\\

Considering the Lyapunov potential function as above, we are led to gradient descents of the form:

\begin{equation}
\dot{p_i}(t)= -\nabla_{p_i}\mathcal{L}_2^\alpha(P(t)) = -\sigma \sum_{j \in N_i}^n \frac{p_i (t)-p_j (t)}{\|p_i (t)-p_j (t)\|^{2-\alpha}}
\label{eq:GeneralDynamics}
\end{equation}
where for example, $\alpha=2$ recovers linear dynamics as in system $\mathcal{S}_1$. Note however that with $\alpha=1$ we have a dynamical system that needs only the bearing information from an agent to all its neighbours (and we shall analyze such systems in the sequel as systems $\mathcal{S}_3$ and $\mathcal{S}_7$).\\

\subsubsection{Convex-hull based Lyapunov functions}
For systems with agents in the plain, analyzing the dynamics of the convex-hull is another useful way to determine convergence. Let $\mathcal{L}_3(P(t))$ be the perimeter of $CH(P(t))$, the convex-hull of the agents' positions, and $l_i(t)$ be the length of the convex-hull edge connecting corners $p_i(t)$ and $p_{i+1}(t)$. Let $\varphi_i(t)$ be the internal angle of the $i$ th corner of $CH(P(t))$, let $\alpha_i(t)$ be the direction of motion of the agent located at corner $i$ (relative to its direction to corner $i+1$), and let $v_i(t)$ be the speed of that agent (as shown in Figure \ref{CHshrink_S1}).\\
\begin{figure}[h!]
\captionsetup{width=0.8\textwidth}
  \centering
        \includegraphics[width=100mm]{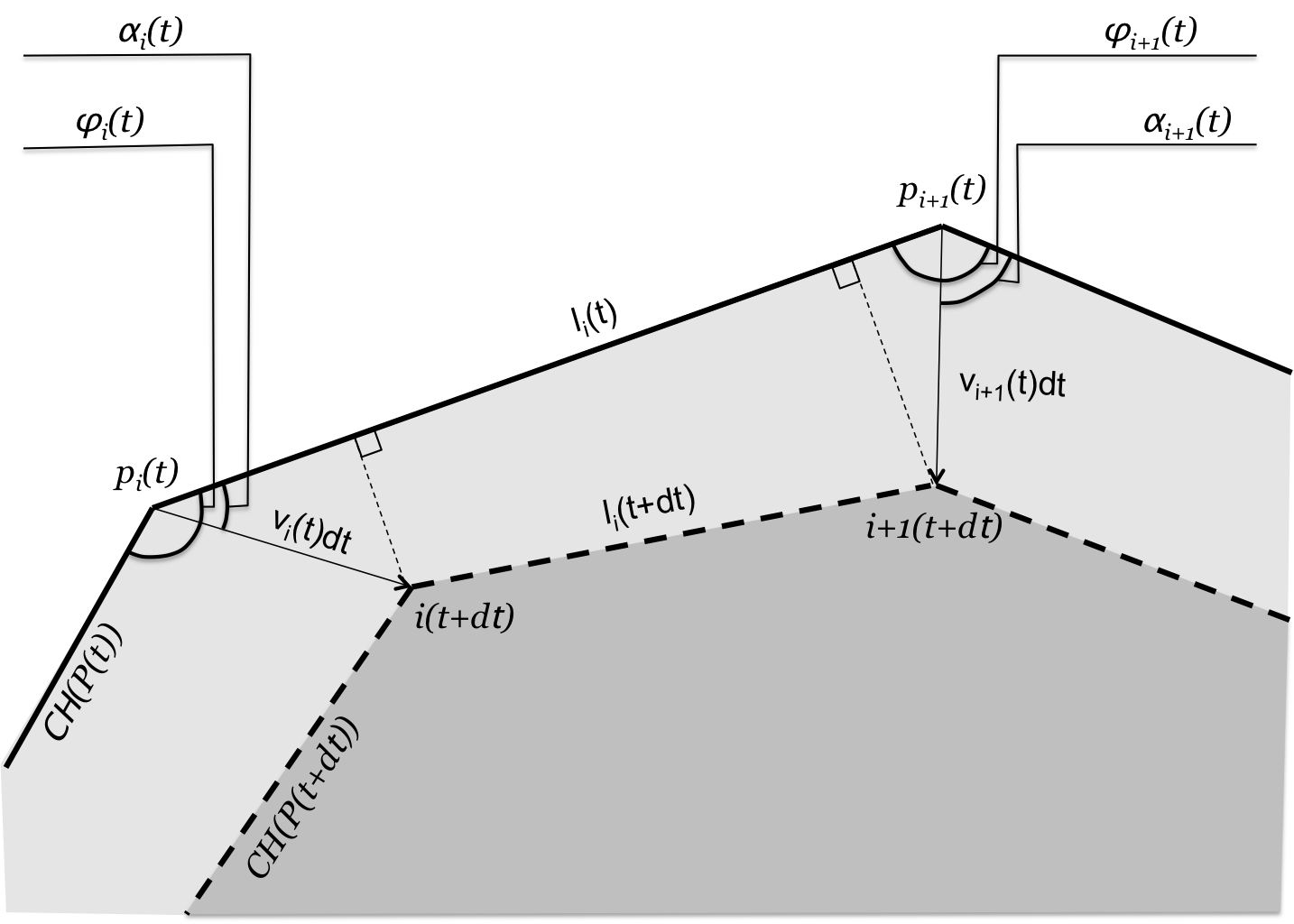}
     \caption{Convex-hull shrinkage}
     \label{CHshrink_S1}
\end{figure}

Geometric considerations then immediately yield: 
$$l_i(t+dt)-l_i(t) = -(v_i(t) \cos \alpha_i(t) + v_{i+1}(t)\cos (\varphi_{i+1} (t) -\alpha_{i+1}(t))) dt + \mathcal{O}(dt)$$
hence we have by re-ordering terms,
\begin{equation}
\begin{array}{l} \label{CH_Conv}
\dot{\mathcal{L}_3}(P(t)) =-\sum\limits_{i=1}^n v_i(t)(\cos \alpha_i(t)+\cos (\varphi_{i} (t) -\alpha_{i}(t))) \\
= -2\sum\limits_{i=1}^n v_i(t) \cos(\frac{\varphi_{i} (t)}{2}) \cos(\frac{\varphi_{i}(t) - 2\alpha_{i}(t)}{2}) 
\end{array}
\end{equation} 
However, we have $0 \leq \varphi_i(t) \leq \pi$ and $0 \leq \alpha_{i}(t) \leq \varphi_i(t)$. Furthermore, since $v_i(t) \ge 0$, we have that each element of the above sum is non-negative, and therefore $\dot{\mathcal{L}_3}(P(t)) \le 0$, i.e. the perimeter of the convex-hull of the agents positions never increases.\\

A system converges to a point if the following conditions are satisfied:
\begin{equation}
\begin{array}{l}
		
		\mathcal{L}_3(P(t)) > 0 \iff \dot{\mathcal{L}_3}(P(t)) < 0 \\
		and \\
		\mathcal{L}_3(P(t)) = 0 \iff \dot{\mathcal{L}_3}(P(t)) = 0
	
	\end{array}
\end{equation} 

In cases $\dot{\mathcal{L}_3}(P(t))$ is always negative and bounded away (by a constant) from zero, the system converges to a point in finite time. However, if $\dot{\mathcal{L}_3}(P(t))$ is always negative but its absolute value decreases to zero as the perimeter of the convex-hull decreases to zero, the system converges asymptotically.\\

Let us analyze system $\mathcal{S}_1$. For any positive $\Delta t$, the system's convex-hull at time $t+\Delta t$ is contained in its convex-hull at time $t$, since by motion law (\ref{eq:Dynamics1}) each agent on the perimeter of the convex-hull continuously moves to the sum of the vectors pointing to the positions of all other agents. These agents are clearly located inside or along the convex-hull boundary, hence each agent on the perimeter of the convex-hull can never move outside of the convex-hull.\\

The sum of angles of any convex polygon is $\pi(m-2)$, therefore every convex polygon of $m$ corners has at least one corner whose associated angle is smaller than or equal to $\pi(1-2/m)$. Since the convex-hull of system $\mathcal{S}_1$ is a convex polygon, and the maximal number of corners it may have is $n$, we shall denote the maximal possible value of its sharpest angle by $\varphi_* = \pi(1-2/n)$.\\

Let $s$ be the agent located at the sharpest corner of the convex-hull, let $\vec U_{\varphi_s}$ be a unit vector in the direction of the bisector of that corner, and let $\theta_{sj}$ be the angle between $\vec U_{\varphi_s}$ and the vector pointing from agent $s$ to some agent $j$. Then, obviously for all $j$ we have $0 \le \theta_{sj} \le \varphi_*/2$, and the velocity of agent $s$ is bounded from below as follows:

$$ v_s(t) = \|\dot p_s(t)\| \ge \vec U_{\varphi_s}^\intercal \dot p_s(t)$$
since the projection of a vector on a unit vector has a smaller or same length, and\\

$$ \vec U_{\varphi_s}^\intercal \dot p_s(t) = \vec U_{\varphi_s}^\intercal \sum\limits_{j \neq s} \left( p_j(t) - p_s(t) \right) = \sum\limits_{j \neq s} \| p_j(t) - p_s(t) \|\cos(\theta_{sj}) \ge $$
$$ \ge \sum\limits_{j \neq s} \| p_j(t) - p_s(t) \|\cos(\varphi_*/2) $$
since $0 \le \theta_{sj} \le \varphi_*/2 \leq \pi/2$\\

Furthermore, agent $s$ may either be one of the agents which defines $D(P(t))$, the diameter of the convex-hull, or there are two other agents that define $D(P(t))$. By the triangle inequality the distance from agent $s$ to at least one of those two agents defining the convex-hull diameter is greater than or equal to $D(P(t))/2$. Denote this agent by $q$, and we have that
$$v_s(t) \ge \sum\limits_{j \neq s} \| p_j(t) - p_s(t) \|\cos(\varphi_*/2) \ge \| p_q(t) - p_s(t) \|\cos(\varphi_*/2) \ge \frac{D(P(t))}{2}\cos(\varphi_*/2)$$

Using (\ref{CH_Conv}) we have that: 
$$\dot{\mathcal{L}_3}(P(t)) \le -2\sum\limits_{i=1}^n v_i(t) \cos(\frac{\varphi_{i} (t)}{2}) \cos(\frac{\varphi_{i}(t) - 2\alpha_{i}(t)}{2})  \le $$
$$ \le -2v_s(t) \cos(\frac{\varphi_{s} (t)}{2}) \cos(\frac{\varphi_{s}(t) - 2\alpha_{s}(t)}{2})$$ 
Given that $\varphi_{s}(t) \leq \varphi_*  < \pi$, and that $\alpha_{s}(t) \leq \varphi_{s}(t)$, we have that 
$$-\pi < -\varphi_{s}(t) \leq \varphi_{s}(t) - 2\alpha_{s}(t) \leq \varphi_{s}(t) < \pi$$ 
and we obtain
$$\dot{\mathcal{L}_3}(P(t)) \le -2 v_s(t) \cos^2(\frac{\varphi_*}{2}) \le - D(P(t))\cos^3(\frac{\varphi_*}{2}) $$

To analyze the relation between $\dot{\mathcal{L}_3}(P(t))$ and $\mathcal{L}_3(P(t))$ let us consider the connection between the diameter of a convex-hull and its perimeter. If a convex-hull has a diameter $D(P(t))$, the convex-hull is confined to the intersection area of two discs of radius $D(P(t))$ centered at the end points of the diameter. This sets a rough (but good enough) upper bound for any convex-hull perimeter as follows: 
$$\mathcal{L}_3(P(t)) < \frac{4}{3}\pi D(P(t))$$
and we have that

$$\dot{\mathcal{L}_3}(P(t)) \le - D(P(t))\cos^3(\frac{\varphi_*}{2}) \le -\frac{3\mathcal{L}_3(P(t))}{4\pi}\cos^3(\frac{\varphi_*}{2})$$

Since $\mathcal{L}_3(P(t))$, is non-negative, and all the other entries in the right side of above inequality are positive and constant, we have that  $\dot{\mathcal{L}_3}(P(t))$ is always negative and equals zero only when $\mathcal{L}_3(P(t))$ equals zero. This relation between the rate of decrease of the convex-hull perimeter and the length of the convex-hull perimeter shows that the agents of system $\mathcal{S}_1$ converge to a point, and the convergence rate is at least exponential. Note that similar arguments in the discrete case are possible, however they require a more delicate analysis as we shall see in the sequel.\\

\subsubsection{A detour on motion synchronisation}
This report focuses on synchronized systems, where all agents move together, i.e. in synchrony at all time. Another motion modality we shall encounter later is the semi-synchronised model, where agents in the system are not necessarily active in each cycle (i.e. there is a probability that an agent will sleep and stay put for a while, rather than move according to the dynamic rule). The analysis of systems implementing such semi-synchronised motions requires mathematical tools from stochastic processes in probability theory.\\

For example, a semi-synchronised model based on system $\mathcal{S}_2$ could be defined as having the following dynamics:

\begin{equation}\label{UpdatedS_2}
	p_i(k+1)=p_i(k)-\sigma\chi_{i}(k)\sum_{j=1}^{n}(p_i(k)-p_j(k))
\end{equation}
where the variable $\chi_{i}(k)=\{0,1\}$ has a random boolean value at each time-step, and the probability that $\chi_{i}(k) = 1$ is $\rho \ge 1/2$. We here assume that $0<\sigma<\frac{1}{n}$.\\

In order to prove convergence of such a system, we use  the following Lyapunov function:
$$ \mathcal L(P(k)) = \sum\limits_{i=1}^n \| p_i(k) - \bar p(k) \| $$
At the next time-step this function will be
$$ \mathcal L(P(k+1)) = \sum\limits_{i=1}^n \| p_i(k+1) - \bar p(k+1) \| $$
Note that due to randomization, Lemma \ref{invariant} does not hold and $\bar{p}(k)$ may change as the system evolves.\\

Without loss of generality, let us set the origin at time-step $k$ at $\bar p(k)$, then using (\ref{UpdatedS_2}), we can write:
$$ p_i(k+1) = p_i(k) - n\sigma\chi_{i}(k)p_i(k) $$
hence:
$$ \bar p(k+1) = \bar p(k) -\frac{1}{n} \sum\limits_{i=1}^n n\sigma\chi_{i}(k)p_i(k) =  -\frac{1}{n} \sum\limits_{i=1}^n n\sigma\chi_{i}(k)p_i(k)$$
and
$$ \mathcal L(P(k+1)) = \sum\limits_{i=1}^n \| p_i(k)(1 - n\sigma\chi_{i}(k)) + \frac{1}{n} \sum\limits_{j=1}^n n\sigma\chi_{j}(k)p_j(k) \| $$
We have
$$ \mathcal L(P(k+1)) \le \sum\limits_{i=1}^n \| p_i(k)(1 - n\sigma\chi_{i}(k)) \| + \frac{1}{n}\sum\limits_{i=1}^n \| \sum\limits_{j=1}^n n\sigma\chi_{j}(k)p_j(k) \|  $$
Recall that the gain factor $\sigma$ is bounded by $0<\sigma< 1/n$, which yields:
\begin{equation} \label{SemiSyncIneq}
\begin{array}{l}
	\mathcal L(P(k+1)) \le \sum\limits_{i=1}^n \|p_i(k)\| \left(1 - n\sigma\chi_{i}(k)\right) + \| \sum\limits_{j=1}^n n\sigma\chi_{j}(k)p_j(k) \|= \\
	= \mathcal L (P(k)) - n\sigma \left[\sum\limits_{i=1}^n \chi_{i}(k)\|p_i(k)\| - \| \sum\limits_{j=1}^n \chi_{j}(k)p_j(k) \| \right]
\end{array}
\end{equation}
Since, by the generalised triangle inequality, we have that
$$ \sum\limits_{i=1}^n \chi_{i}(k)\|p_i(k)\| \ge \| \sum\limits_{j=1}^n \chi_{j}(k)p_j(k) \| $$
we obtain for arbitrary $\chi_i(k)$'s:
$$ \mathcal L(P(k+1)) \le \mathcal L (P(k)) $$\\

In order to show that we have a strictly positive probability that, for any $\mathcal L(P(k)) > 0$, it decreases by a strictly positive value, we use the \textbf{"strong asynchronicity assumption"} as introduced by Gordon et. al. \cite{gordon2004,gordon2005,gordon2008,gordon2010fundamental}.

\begin{definition}
\textbf{"Strong asynchronicity assumption"} : There exist a strictly positive constant $\delta$ such that for any subset $S$ of agents, at each time-step $k$, the probability that $S$ will be the set of active agents is at least $\delta$.
\end{definition}

In our case, the probability that all agents of the system are active at a given time-step $k$ is at least $\rho$ (here $\delta=(1-\rho)^n$), and the probability of all agents to be active is $\rho^n \ge \delta$). At these time-steps the dynamics of the system is similar to the dynamics of system $\mathcal S_2$, therefore, by Lemma \ref{invariant}  we have that $\bar p(k+1) = \bar p(k)$ and by (\ref{S2_dev}) we have that $\mathcal L(P(k))$ drops as follows:
$$ \mathcal L(P(k+1)) = \sum\limits_{i=1}^n \| p_i(k+1) - \bar p(k+1) \| =$$ 
$$= \sum\limits_{i=1}^n \| (1 - n\sigma)(p_i(k) - \bar p(k)) \|  = (1 - n\sigma)\mathcal L(P(k))$$\\

Let $\tilde{\mathcal{L}}$ be a small positive constant $\tilde{\mathcal{L}} \ll \mathcal{L}(P(0))$. We know that $\mathcal L(P(k))$ is a non increasing sequence of numbers starting at $\mathcal L(P(0))$, and by the strong asynchronicity property, we know that at each time-step $k$ there is a probability $\delta$ that $\mathcal L(P(k+1)) = (1-n\sigma)\mathcal L(P(k))$ where $0 < (1-n\sigma) < 1$. After $M$ such successful steps, we shall have that $\mathcal L(P(k))$ will reach $\mathcal L(P(0))(1-n\sigma)^M$.\\\\
In order to have
$$ \mathcal L(P(0))(1-n\sigma)^M < \tilde{\mathcal L}$$
we need to have had at least $M \ge M_0$ steps, where
$$ M_0 = \frac{\ln(\frac{\tilde{\mathcal{L}}}{\mathcal L(P(0))})}{\ln(1-n\sigma)}$$
In a Bernoulli process where the probability of "success" is $\delta$ and of "failure" is $(1-\delta)$, the expected number of time-steps for the first "success" is given by
$$ \sum\limits_{k=1}^{\infty} \tilde{k}(1-\delta)^{k-1}\delta = \frac{1}{\delta} $$
hence the expected time of $M_0$ successes will be $M_0 / \delta$. Therefore the expected number of time-steps to reach a value $\tilde{\mathcal{L}} \ll \mathcal L(P(0)) $ is given by
$$ \mathbb E (\tilde{k}(\tilde{\mathcal{L}})) = \frac{M_0}{\delta} = \frac{1}{\delta}\frac{\ln(\frac{\tilde{\mathcal{L}}}{\mathcal L(P(0))})}{\ln(1-n\sigma)} $$
Hence, the strong asynchronicity assumption ensures gathering of all the agents to an $\epsilon$-disk within a finite expected number of time-steps which is upper bounded by $\mathbb E (\tilde{k}(\tilde{\mathcal{L}}=\epsilon))$ given above.
\newpage
\section{Unlimited Visibility, Bearing-only Sensing}

We here assume every agent has information on the \textit{bearing} or direction only to all the other agents in the system (but cannot measure their relative distance).

\subsection{Continuous Time Dynamics (system $\mathcal{S}_3$)}
Consider that agents move according to the following dynamic law:

\begin{equation}
\dot{p_i}(t)=-\sigma\sum_{j=1}^{n} f^{\mathcal{S}_3} (p_i(t)-p_j(t))
\label{eq:Dynamics5}
\end{equation}
where,
\begin{equation}
f^{\mathcal{S}_3}(p_i(t)-p_j(t))= \left\{\begin{alignedat}{2}
   & \frac{{p_i(t)-p_j(t)}}{\|p_i(t)-p_j(t)\|}, &&\quad{p_i(t)}\neq {p_j(t)} \\
   & 0,  && \quad o.w. \\
 \end{alignedat}\right.
\end{equation}
and $\sigma$ is a positive scalar gain factor.\\

Here $\dot p_i(t)$ is proportional to the vector-sum of unit-vectors pointing from $p_i(t)$ to all other agents.\newline

The function $f^{\mathcal{S}_3}$ is antisymmetric, therefore the average position of agents in the system $\mathcal{S}_3$ is invariant, i.e. $\bar p=const$, due to Lemma $\ref{invariant}$. So, without loss of generality, we consider the positions $p_i(t)$ in a coordinate system centered at $\bar{p} = 0$.

\begin{theorem}
For any initial constellation, the agents of system $\mathcal{S}_3$ converge to the average location of the initial constellation in finite time.
\end{theorem}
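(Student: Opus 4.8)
As with $\mathcal{S}_1$, the average position is invariant by Lemma \ref{invariant} (the driving function $f^{\mathcal{S}_3}$ is antisymmetric), so I may center coordinates at $\bar p = 0$ and it suffices to show the agents collapse to $\bar p$ in finite time. The natural tool is the convex-hull perimeter Lyapunov function $\mathcal{L}_3(P(t))$ introduced in the discussion of $\mathcal{S}_1$, together with the general identity (\ref{CH_Conv}) for its rate of change, which already establishes $\dot{\mathcal{L}_3}(P(t)) \le 0$ since every summand there is non-negative. The crucial qualitative difference from $\mathcal{S}_1$ is that here $\dot p_i$ is a sum of \emph{unit} vectors rather than of position-difference vectors, so the speed of an agent does \emph{not} shrink as the constellation contracts; this is precisely what will upgrade the asymptotic convergence of $\mathcal{S}_1$ to \emph{finite-time} convergence.

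The plan is to lower-bound, by a positive \emph{constant}, the speed of the agent sitting at the sharpest corner of the hull. Exactly as in the $\mathcal{S}_1$ analysis, let $s$ be that agent, whose internal angle satisfies $\varphi_s(t) \le \varphi_* = \pi(1-2/n)$, let $\vec U_{\varphi_s}$ be the inward bisector, and let $\theta_{sj}$ be the angle between $\vec U_{\varphi_s}$ and the direction from $s$ to $j$, so that $0 \le \theta_{sj} \le \varphi_*/2 \le \pi/2$ for all $j$. Projecting the velocity onto the bisector and using $\dot p_s(t) = \sigma\sum_{j\neq s}\vec u_{s\to j}(t)$ gives
$$
v_s(t) = \|\dot p_s(t)\| \ge \vec U_{\varphi_s}^\intercal \dot p_s(t) = \sigma\sum_{j \neq s}\cos(\theta_{sj}) \ge \sigma\,(n-1)\cos(\varphi_*/2) \ge \sigma\cos(\varphi_*/2) > 0 ,
$$
where the bound is independent of the \emph{size} of the current configuration — the key point. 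Substituting this into (\ref{CH_Conv}), retaining only the (non-negative) $s$-term and using $\varphi_s(t)\le\varphi_*$ together with $0\le\alpha_s(t)\le\varphi_s(t)$ exactly as before, yields
$$
\dot{\mathcal{L}_3}(P(t)) \le -2 v_s(t)\cos^2(\varphi_*/2) \le -2\sigma\cos^3(\varphi_*/2) =: -C < 0 .
$$

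Since $\dot{\mathcal{L}_3} \le -C$ for a fixed constant $C>0$, the perimeter obeys $\mathcal{L}_3(P(t)) \le \mathcal{L}_3(P(0)) - Ct$ and therefore reaches $0$ no later than $t = \mathcal{L}_3(P(0))/C$. A vanishing convex-hull perimeter forces all agents to coincide, and by invariance of the average they coincide at $\bar p$; this gives convergence in finite time, as claimed. I expect the main obstacle to be not the estimate itself but its \emph{uniformity} across the evolution: the combinatorial type of the hull changes as vertices cease to be extreme or as agents coincide (where $f^{\mathcal{S}_3}$ is set to $0$ and the dynamics is non-smooth), so $\mathcal{L}_3$ is only piecewise smooth and $\varphi_*$ is a moving target. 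The resolution is that every such transition can only \emph{reduce} the number of hull vertices, which only \emph{sharpens} the worst-case corner and hence \emph{increases} $C$; thus the constant-rate bound $\dot{\mathcal{L}_3}\le -C$ persists (with $C$ evaluated at the initial vertex count as a safe lower bound), and the finite-time conclusion survives the finitely many non-smooth events. One should also note the bound requires at least one agent away from $s$, which holds until the final collapse.
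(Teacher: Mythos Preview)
Your argument is correct but takes a genuinely different route from the paper's own proof. The paper does \emph{not} use the convex-hull perimeter $\mathcal{L}_3$; instead it takes as Lyapunov function the sum of all pairwise distances $\mathcal{L}(P)=\tfrac12\sum_{i,j}\|p_i-p_j\|$ and observes that the $\mathcal{S}_3$ dynamics is precisely its gradient descent, yielding $\dot{\mathcal{L}}=-\tfrac{1}{\sigma}\sum_i\|\dot p_i\|^2$. The constant-rate estimate then comes from a different geometric fact (their Proposition~\ref{Prop4}): the agent \emph{farthest from the centroid} has speed at least $\sigma n/2$, so $\dot{\mathcal{L}}\le -\sigma n^2/4$, and together with $\mathcal{L}(P(0))\le n^2\max_{i,j}\|p_i(0)-p_j(0)\|$ this gives the convergence-time bound $t_{ub}=\tfrac{4}{\sigma}\max_{i,j}\|p_i(0)-p_j(0)\|$.

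The trade-offs are as follows. Your convex-hull approach is more geometric and reuses machinery already developed for $\mathcal{S}_1$ and later for $\mathcal{S}_5,\mathcal{S}_7$, so it unifies the presentation; however your rate constant $C=2\sigma\cos^3(\varphi_*/2)$ with $\varphi_*=\pi(1-2/n)$ tends to zero as $n\to\infty$ (since $\cos(\varphi_*/2)=\sin(\pi/n)\sim\pi/n$), so your time bound deteriorates with $n$ even after using the sharper $v_s\ge\sigma(n-1)\cos(\varphi_*/2)$. The paper's gradient-descent approach exploits the special potential structure of $\mathcal{S}_3$ and yields a time bound that is \emph{independent of~$n$}, a point they explicitly highlight. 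On the other hand, your argument is more robust to the non-smooth events (collisions, hull-vertex changes) that you rightly flag, whereas the paper's proof glosses over the fact that $\mathcal{L}$ is not differentiable where agents coincide.
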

Notice that the upper bound on the convergence time presented below depends only on the initial constellation (via the largest distance between the agents in the constellation) and the gain factor of the system $\sigma$, regardless of the number of agents $n$.
\begin{proof}

Let $\mathcal{L}(P(t))$ be the sum of distances between all pairs of agents at time $t$:

\begin{equation}
\mathcal{L}(P(t))=\frac{1}{2}\sum_{\substack {i=1}}^{n}\sum_{\substack {j=1}}^{n} {\|p_i (t) -p_j (t)\|}
\label{eq:SumOfDistances}
\end{equation}

As a sum of non-negative elements, the value $\mathcal{L}(P(t))$ is strictly positive, unless the system converged, i.e.
$$\forall i,t;\  \ p_i(t)=\bar{p}=0 \iff \mathcal{L}(P(t))=0$$
We next prove that as long as $\mathcal{L}(P(t)) \neq 0$ it decreases at a rate bounded away from zero. Indeed, we have that the dynamics given by (\ref{eq:Dynamics5}) turns out to be the gradient descent flow of the Lyapunov function $\mathcal{L}(P(t))$.

$$ \dot p_i(t) = -\sigma\nabla_{p_i(t)} \mathcal{L}(P(t)) = - \sigma\sum_{\substack {j=1 \\ j \neq i}}^{n} \frac{p_i (t) -p_j (t)}{\|p_i (t) -p_j (t)\|} $$
$$ \dot{\mathcal{L}}(P(t)) = \nabla_P \mathcal{L}(P(t)) \cdot \dot{P}(t) = \nabla_P \mathcal{L}(P(t)) \cdot \sigma \nabla_P \mathcal{L}(P(t)) =$$
$$ = -\frac{1}{\sigma}\|\sigma \nabla_P \mathcal{L}(P(t))\|^2 = -\frac{1}{\sigma}\sum_{i=1}^{n}\|\dot p_i(t)\|^2$$

The corresponding temporal evolution of the $\mathcal{L}(P(t))$ function is therefore given by
$$ \dot{\mathcal{L}}(P(t)) = - \frac{1}{\sigma}\sum_{i=1}^{n} \| \dot p_i(t) \|^2  $$

Denoting by $p_{max}(t)$ the location of the farthest agent from $\bar p$ (i.e. from the origin) at time $t$, and denote by $\hat p_{max}^\intercal$ the unit vector pointing from the origin to the direction of $p_{max}(t)$, then we readily have\\

\begin{equation}                              
\dot{\mathcal{L}}(P(t)) = - \frac{1}{\sigma}\sum_{i=1}^{n}\|\dot p_i(t) \|^2 \leq - \frac{1}{\sigma} \|\dot p_{max}(t) \|^2 \leq - \frac{1}{\sigma} | \hat p_{max}^\intercal \dot p_{max}(t) |^2
\label{eq:FuthestAgentLyapunov}
\end{equation}
since $\frac{dp_{max}(t)}{dt} \in \{\frac{dp_i(t)}{dt}\}_{i=1,2,...,n}$, and the length of any projection of a vector is equal to or smaller than its norm.\\

The next claim bounds the value of
$$ | \hat p_{max}^\intercal \dot p_{max}(t) | = \left|\sigma \hat p_{max}^{\intercal}(t) \sum_{\substack {j=1 \\ j \neq max}}^{n} \frac{p_{max}(t)-p_j(t)}{\|p_{max}(t)-p_j(t)\|}\right| $$
by a strictly positive constant.
 
\begin{proposition} \label{Prop4}
$$ \sum_{\substack {j=1 \\ j \neq max}}^{n}\hat p_{max}^{\intercal}(t)\frac{p_{max}(t)-p_j(t)}{\|p_{max}(t)-p_j(t)\|} \geq \frac{n}{2} $$

\end{proposition}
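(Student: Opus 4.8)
The plan is to exploit two facts that are in force here: that $p_{max}(t)$ is the \emph{farthest} agent from the origin, so every $\|p_j(t)\| \le \|p_{max}(t)\|$, and that the origin is the centroid (Lemma \ref{invariant} applied to $\mathcal{S}_3$), so $\sum_{i=1}^{n} p_i(t) = 0$. Writing $R = \|p_{max}(t)\|$, the first fact will control each summand's denominator while the second will pin down the sum of the numerators. Recognising that the centroid constraint is indispensable is really the crux: individually the cosines being summed can be arbitrarily close to zero (a point $p_j$ just inside the disk near $p_{max}$ makes the direction $p_{max}-p_j$ nearly perpendicular to $\hat p_{max}$), so no termwise bound of the form ``each term $\ge \tfrac12$'' can hold, and the order-$n$ lower bound must come from the collective barycentric constraint.

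First I would rewrite a generic summand. Since $\hat p_{max}^\intercal p_{max}(t) = R$, the numerator of the $j$-th term is $\hat p_{max}^\intercal (p_{max}(t) - p_j(t)) = R - \hat p_{max}^\intercal p_j(t)$, and because a projection never exceeds the norm, $\hat p_{max}^\intercal p_j(t) \le \|p_j(t)\| \le R$; hence every numerator is non-negative. For the denominator, the triangle inequality together with $\|p_j(t)\| \le R$ gives $\|p_{max}(t) - p_j(t)\| \le \|p_{max}(t)\| + \|p_j(t)\| \le 2R$. Replacing each denominator by this uniform upper bound, which is legitimate precisely because the numerators are non-negative, yields the termwise estimate
$$ \hat p_{max}^\intercal \frac{p_{max}(t) - p_j(t)}{\|p_{max}(t) - p_j(t)\|} \;\ge\; \frac{R - \hat p_{max}^\intercal p_j(t)}{2R}. $$

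Then I would sum over $j \neq max$ and invoke centroid invariance. Since $\bar p = 0$ gives $\sum_{i=1}^{n} p_i(t) = 0$, we have $\sum_{j \neq max} p_j(t) = -p_{max}(t)$, and projecting onto $\hat p_{max}$ yields $\sum_{j \neq max} \hat p_{max}^\intercal p_j(t) = -\hat p_{max}^\intercal p_{max}(t) = -R$. Summing the termwise estimate over the $n-1$ indices $j \neq max$ therefore gives
$$ \sum_{j \neq max} \frac{R - \hat p_{max}^\intercal p_j(t)}{2R} \;=\; \frac{(n-1)R - \big(-R\big)}{2R} \;=\; \frac{nR}{2R} \;=\; \frac{n}{2}, $$
which is exactly the asserted bound.

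I do not expect a genuine obstacle in executing this: once the centroid identity $\sum_{j \neq max} \hat p_{max}^\intercal p_j(t) = -R$ is in hand, the remaining work is the elementary denominator estimate $\|p_{max}(t)-p_j(t)\| \le 2R$. The only point meriting explicit care is the justification that bounding the denominator from above lowers the value of the fraction, which is valid solely because each numerator $R - \hat p_{max}^\intercal p_j(t)$ is non-negative; and that non-negativity is exactly the content of the ``farthest agent'' property $\|p_j(t)\| \le \|p_{max}(t)\|$. It is worth remarking that the constant $\tfrac{n}{2}$ emerges without slack from these two estimates, so the proposition is sharp at the level of the argument.
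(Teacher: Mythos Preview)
Your proof is correct and follows essentially the same route as the paper's: both use the centroid identity $\sum_j p_j = 0$ to obtain $\sum_{j\neq\max}\hat p_{\max}^\intercal(p_{\max}-p_j)=nR$, bound each $\|p_{\max}-p_j\|$ by $2R$ via the farthest-agent property, and note non-negativity of the projections to justify the replacement. The only difference is cosmetic ordering---the paper first writes the exact identity and then bounds the distance factor by $2$, whereas you bound the denominator termwise and then sum---so the two arguments are the same in substance.
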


\begin{proof}
Since we have that
$$
\sum_{j=1}^{n}p_j(t)=0
$$
$$
\sum_{\substack {j=1}}^{n}(p_j(t)-p_i(t))=-n p_i(t)
$$
or re-writing this in terms of unit-vectors multiplied by their lengths:
$$
	\sum_{\substack {j=1 \\ j \neq i}}^{n}\frac{p_i(t)-p_j(t)}{\|p_j(t)-p_i(t)\|}\|p_j(t)-p_i(t)\| =n\hat p_i(t)\|p_i(t)\|
$$
Projecting both sides on the unit vector $\hat p_i(t)$ we get

\begin{equation} \label{Eq:Prop5} 
	\sum_{\substack {j=1 \\ j \neq i}}^{n}\hat p_i^{\intercal}(t)\frac{p_i(t)-p_j(t)}{\|p_j(t)-p_i(t)\|}\|p_j(t)-p_i(t)\| =n\|p_i(t)\|
\end{equation}

Let $i$ be the index of the farthest agent from the origin, i.e. $p_i(t)=p_{max}(t)$, then by dividing both sides of (\ref{Eq:Prop5}) by $\|p_i(t)\|=\|p_{max}(t)\|$ we get

\begin{equation} \label{Eq:Prop5_1} 
\sum_{\substack {j=1 \\ j \neq max}}^{n}\hat p_{max}^{\intercal}(t)\frac{p_{max}(t)-p_j(t)}{\|p_{max}(t)-p_j(t)\|}\frac{\|p_{max}(t)-p_j(t)\|}{\|p_{max}(t)\|} =n
\end{equation}

Since the distance between two points in a disc of radius $\|p_{max}\|$ is always less than or equal twice the radius, we get that 

$$\{ \forall j \; : \; 0 \le \frac{\|p_j(t)-p_{max}(t)\|}{\|p_{max}(t)\|} \le 2 \}$$
and since all the angles between the vectors $p_{max}-p_j$ pointing from any point inside a disc of radius $p_{max}$ to the point $p_{max}$ on the boundary, and the vector $p_{max}$ are between $-\frac{\pi}{2}$ and $\frac{\pi}{2}$, we get that
$$\{ \forall j \; : \; 0 \le \hat{p}_{max}^{\intercal}(t)\frac{p_{max}(t)-p_j(t)}{\|p_j(t)-p_{max}(t)\|} \le 1 \}$$

Therefore, since all terms in the sum are positive, rewriting (\ref{Eq:Prop5_1}) we get that
$$
2 \sum_{\substack {j=1 \\ j \neq max}}^{n}\hat{p}_{max}^{\intercal}(t)\frac{p_{max}(t)-p_j(t)}{\|p_{max}(t)-p_j(t)\|} \geq n
$$
or
$$
\sum_{\substack {j=1 \\ j \neq max}}^{n}\hat{p}_{max}^{\intercal}(t)\frac{p_{max}(t)-p_j(t)}{\|p_{max}(t)-p_j(t)\|} \geq \frac{n}{2}
$$
as claimed.

\end{proof}

By Proposition \ref{Prop4} we obtain from (\ref{eq:FuthestAgentLyapunov}) that 
$$ \dot{\mathcal{L}}(P(t)) \leq-\sigma \frac{n^2}{4} $$
Using the fact that
$$ \mathcal{L}(P(t))=\mathcal{L}(P(0))+\int_0^t \dot{\mathcal{L}} (P(t')) dt'$$
and
$$ \mathcal{L}(P(0))\leq n^2 \smash{\displaystyle \max_{i,j}} \|p_i(0)-p_j(0)\| $$
we get
$$ \mathcal{L}(P(t))\leq n^2 \smash{\displaystyle \max_{i,j}} \|p_i(0)-p_j(0)\| - \sigma \frac{n^2}{4} t = $$ $$ = n^2( \smash{\displaystyle \max_{i,j}} \|p_i(0)-p_j(0)\| - \sigma \frac{1}{4} t)  $$
Hence, the time for system $\mathcal{S}_3$ to converge is upper bounded by a finite value $t_{ub}$
$$ t_{ub} = \frac{4}{\sigma} \smash{\displaystyle \max_{i,j}} \|p_i(0)-p_j(0)\| $$
i.e. the upper bound for the system convergence time to $\bar p$ depends on the gain factor of the system $\sigma$ and the initial constellation. Note that this bound is not affected by the number of agents in the system.\\

\end{proof}

Simulation of gathering with $n=6$ agents is presented in Figure \ref{FigBearingOnlySim}. Notice that the agents' trajectories are no longer linear and that agents may collide and continue to move on their gathering path together.\\

\begin{figure}[H]
\captionsetup{width=0.8\textwidth}
  \centering
	\includegraphics[width=110mm]{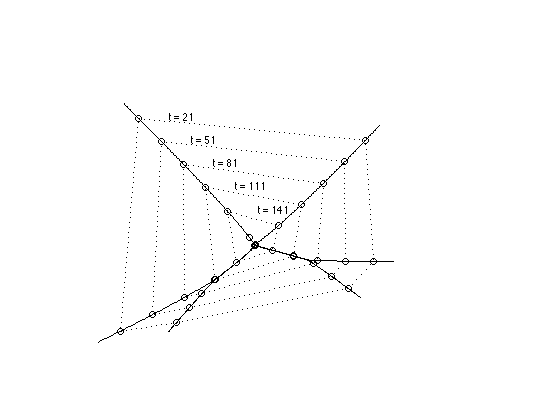}
	  \caption{Simulation results for a complete and undirected graph with bearing-only sensing. The system's dynamics is non-linear, and the movement direction of the agents is discontinuous.}
    \label{FigBearingOnlySim}
\end{figure}


\subsection{Discrete Time Dynamics (system $\mathcal{S}_4$)}
We here assume each agent moves according to the following dynamic law:

\begin{equation}
{p_i}(k+1)={p_i}(k)-\sigma\sum_{j=1}^{n} f^{\mathcal{S}_4} (p_i(k)-p_j(k))
\label{eq:Dynamics6}
\end{equation}
where,
\[
f^{\mathcal{S}_4}(p_i(k)-p_j(k))= \left\{\begin{alignedat}{2}
   & \frac{{p_i(k)-p_j(k)}}{\|p_i(k)-p_j(k)\|}, \quad&&{p_i(k)}\neq {p_j(k)} \\
   & 0,  \quad&&  o.w. \\
 \end{alignedat}\right.
\]

Here the step of agent $i$ at time $k$ is the vector-sum of all relative direction unit-vectors between agent $i$ and the rest of the agents, multiplied by a gain factor $\sigma$.\\

Since $f^{\mathcal{S}_4}$ is an antisymmetric function, the average position in $\mathcal{S}_4$ is invariant, i.e. $\bar p=const$, due to Lemma $\ref{invariant}$.

\begin{theorem} \label{Theorem6}
For any arbitrary initial constellation, all agents of system $\mathcal{S}_4$ gather to a disc centered at $\bar p$, with a radius of the order $\sigma n^2$, in a finite number of time-steps. 
\end{theorem}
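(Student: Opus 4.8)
The plan is to mirror the continuous-time argument for $\mathcal{S}_3$ but to replace the exact gradient flow by a finite step, and to accept that the step size will only drive the agents into a residual disc rather than to a single point. Since $f^{\mathcal{S}_4}$ is antisymmetric, Lemma~\ref{invariant} lets me place the origin at the invariant centroid, $\bar p = 0$. The first elementary observation is that each displacement is uniformly bounded: $\Delta p_i(k) = p_i(k+1)-p_i(k)$ is $\sigma$ times a sum of at most $n-1$ unit vectors, so $\|\Delta p_i(k)\| \le \sigma(n-1)$ and $\sum_i \|\Delta p_i(k)\|^2 \le \sigma^2 n^3$.

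Rather than the sum of pairwise distances used for $\mathcal{S}_3$ (whose discrete second-order correction is hard to control near collisions), I would track the moment of inertia $\Phi(k) = \sum_{i=1}^n \|p_i(k)\|^2$. Expanding $\|p_i+\Delta p_i\|^2$ and pairing the indices $(i,j)$ with $(j,i)$ exactly as in the proof of Lemma~\ref{invariant}, the antisymmetry collapses the cross term into the clean exact identity
$$\Phi(k+1)-\Phi(k) = -2\sigma \sum_{i<j}\|p_i(k)-p_j(k)\| + \sum_{i=1}^n\|\Delta p_i(k)\|^2 ,$$
in which the negative term is $-2\sigma$ times the $\mathcal{S}_3$ Lyapunov function and the positive term is the discretisation penalty, bounded above by $\sigma^2 n^3$.

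To extract the radius $\sigma n^2$, I would lower-bound the sum of pairwise distances by the distance of the farthest agent. Letting $R(k)=\max_i\|p_i(k)\|$ be attained at agent $s$, the generalised triangle inequality together with $\sum_j p_j = 0$ gives $\sum_{j}\|p_s-p_j\| \ge \|\sum_j (p_s-p_j)\| = n\,R(k)$, hence $\sum_{i<j}\|p_i-p_j\| \ge nR(k)$ (this plays the role here that Proposition~\ref{Prop4} played for $\mathcal{S}_3$). Substituting yields the drift inequality $\Phi(k+1)-\Phi(k) \le -2\sigma n R(k) + \sigma^2 n^3$, so $\Phi$ strictly decreases whenever $R(k) > \sigma n^2/2$. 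Telescoping from $0$ to $K-1$ and using $\Phi(K)\ge 0$ gives $\tfrac1K\sum_{k<K} R(k) \le \tfrac{\Phi(0)}{2\sigma n K} + \tfrac{\sigma n^2}{2}$, so the time-averaged maximal radius is asymptotically at most $\sigma n^2/2$; combined with the one-step trapping bound $R(k+1)\le R(k)+\sigma(n-1)$, this forces all agents into a disc of radius $O(\sigma n^2)$ about $\bar p$ after finitely many steps.

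The main obstacle is precisely the passage from the (non-monotone) drift of $\Phi$ to the final geometric statement. Because an agent that is not currently farthest can still be pushed outward by up to $\sigma n$ in a single step, the maximal radius $R(k)$ need not decrease monotonically, and the naive attempt to bound $R$ through $\Phi \le nR^2$ loses a factor $\sqrt n$ and would give only $\sigma n^{5/2}$. The key that restores the correct scale is keeping the sharp bound $\sum_{i<j}\|p_i-p_j\|\ge nR$ \emph{inside} the drift rather than bounding $\Phi$ crudely, and then using the telescoped/averaged inequality in place of monotonicity. I would finish by verifying the trapping step carefully, so that once $R$ falls below the threshold it cannot subsequently escape the $O(\sigma n^2)$ disc.
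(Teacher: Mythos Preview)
Your Lyapunov choice $\Phi(k)=\sum_i\|p_i(k)\|^2$ and the exact identity
\[
\Phi(k+1)-\Phi(k)= -\,\sigma\!\sum_{i}\sum_{j}\|p_i(k)-p_j(k)\| \;+\; \sum_i\|\Delta p_i(k)\|^2
\]
are precisely what the paper uses, so the skeleton of your argument coincides with the paper's proof. The divergence is in how you lower-bound the negative term. You keep only the farthest agent, obtaining $\sum_{i<j}\|p_i-p_j\|\ge nR(k)$, whereas the paper sums the \emph{same} inequality $\sum_j\|p_i-p_j\|\ge n\|p_i\|$ over \emph{all} $i$, obtaining $\sum_i\sum_j\|p_i-p_j\|\ge n\sum_i\|p_i\|$. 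This seemingly minor difference is exactly what makes your trapping step fail and the paper's succeed.

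With the paper's bound the threshold is on $S(k)=\sum_i\|p_i(k)\|$: one has $\Phi(k+1)<\Phi(k)$ whenever $S(k)>\sigma(n-1)^2$. The point is that once $S(k)\le\sigma(n-1)^2$, the $\ell_2\le\ell_1$ inequality gives $\Phi(k)\le S(k)^2\le\sigma^2(n-1)^4$ \emph{with no extra factor of $n$}, hence $R(k)\le\sqrt{\Phi(k)}\le\sigma(n-1)^2$. A one-step overshoot of $S$ by at most $n\sigma(n-1)$ then yields the stated $O(\sigma n^2)$ radius.

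With your bound the threshold is on $R(k)$, and here the conversion back to a pointwise radius reintroduces the $\sqrt n$ you are trying to avoid. Your telescoped inequality indeed gives $\tfrac1K\sum_{k<K}R(k)\le\tfrac{\sigma n^2}{2}+o(1)$, but an average bound on $R$ does not control $\sup_k R(k)$. Concretely: at the last time $k_0$ with $R(k_0)\le\sigma n^2/2$, you only know $\Phi(k_0)\le nR(k_0)^2\le\sigma^2 n^5/4$; during the ensuing excursion ($R>\sigma n^2/2$) the drift makes $\Phi$ decrease, but the best pointwise consequence is $R(k)\le\sqrt{\Phi(k_0{+}1)}=O(\sigma n^{5/2})$. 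The one-step bound $R(k+1)\le R(k)+\sigma(n-1)$ does not help here, because it controls only one step, while the excursion can last many steps and $\Phi$-monotonicity is the only handle you have on $R$ during it. So the ``trapping step'' you flag as needing verification does not go through at the $O(\sigma n^2)$ scale with your lower bound; it yields only $O(\sigma n^{5/2})$. The fix is exactly the paper's: replace $nR(k)$ by $n\sum_i\|p_i(k)\|$ in the drift.
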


\begin{proof}

Let $\mathcal{L}(P(k))$ be the sum of squared distances of all agents from $\bar p$ at time-step $k$.
Without loss of generality, since $\bar p$ is invariant, consider the positions $p_i(k)$ in a coordinate system centered at $\bar p$, i.e. $\bar p = 0$, and write:

\begin{equation}
\mathcal{L}(P(k))=\sum_{i=1}^{n} \|p_i(k)\|^2
\label{eq:SumOfSquareDistances}
\end{equation}

In the sequael we shall prove Theorem \ref{Theorem6} in four steps:
\begin{enumerate}
\item The function $\mathcal{L}(P(k))$ decreases while the sum of agents' distances from  $\bar p$ is greater than $\sigma (n-1)^2$ (Lemma \ref{DecreseMax}).
\item When the sum of agents' distances from $\bar p$ is greater than $\sigma(n-1)^2$, rule (\ref{eq:Dynamics6}) ensures that $\mathcal{L}(P(k))-\mathcal{L}(P(k+1))$ is strictly positive, hence the size of the decrease is bounded away from zero (Lemma \ref{FinitePositive4}).
\item The value of $\mathcal{L}(P(k))$ is upper bounded after finite number of time-steps (Lemma \ref{kmax}).
\item Using the bound of Lemma \ref{kmax} the positions of all agents are confined within a circle with radius of the order $\sigma n^2$.
\end{enumerate}

\begin{lemma}\label{DecreseMax}
The function $\mathcal{L}(P(k))$ decreases while the sum of all agents' distances from $\bar p$ is greater than $\sigma (n-1)^2$.
\end{lemma}

\begin{proof}
Let us develop $\mathcal{L}(P(k+1))$ using (\ref{eq:Dynamics6}) and (\ref{eq:SumOfSquareDistances}). Assuming $p_i(k) \neq p_j(k)$ for all $\{i,j\}$ at any time step $k$ (otherwise the two agents act as one), we can write:

\begin{equation} \label{DecreaseCond} 
\mathcal{L}(P(k+1)) = \sum_{i=1}^{n} \left\| p_i(k) - \sigma\sum_{\substack {j=1 \\ j \neq i}}^{n}  \frac{p_i(k) -p_j(k)}{\|p_i(k) -p_j(k)\|} \right\|^2=
\end{equation}
$$=\sum_{i=1}^{n}\| p_i(k)\|^2 -  \sigma\sum_{i=1}^{n} \sum_{\substack {j=1 }}^{n} \left\|p_i(k) -p_j(k)\right\| +{\sigma}^2\sum_{i=1}^{n}  \left\| \sum_{\substack {j=1 \\ j \neq i}}^{n}  \frac{p_i(k) -p_j(k)}{\|p_i(k) -p_j(k)\|} \right\|^2$$
Note that we here used the same method as in the development of (\ref{SquareDistancesFromMean}) in order to state that
$$  2\sum_{i=1}^{n} p_i(k)^\intercal \sum_{\substack {j=1 \\ j \neq i}}^{n}  \frac{p_i(k) -p_j(k)}{\|p_i(k) -p_j(k)\|} = \sum_{i=1}^{n} \sum_{\substack {j=1 }}^{n} \left\|p_i(k) -p_j(k)\right\| $$

Hence, for $\mathcal{L}(P(k+1)) < \mathcal{L}(P(k))$ to hold we need
\begin{equation} \label{inequality4}
	\sigma\sum_{i=1}^{n}  \left\| \sum_{\substack {j=1 \\ j \neq i}}^{n}  \frac{p_i(k) -p_j(k)}{\|p_i(k) -p_j(k)\|} \right\|^2 < \sum_{i=1}^{n}  \sum_{\substack {j=1 \\ }}^{n}  \left\|p_i(k) -p_j(k)\right\|
\end{equation} 

We now use the property of transitivity to find the conditions for the inequality $\mathcal{L}(P(k+1)) < \mathcal{L}(P(k))$ to hold. For that purpose, let us define the quantities $(a)$, $(b)$, $(c)$, and $(d)$:\\
$$
\begin{array}{l}
(a): \sigma\sum_{i=1}^{n}  \left\| \sum_{\substack {j=1 \\ j \neq i}}^{n}  \frac{p_i(k) -p_j(k)}{\|p_i(k) -p_j(k)\|} \right\|^2 \text{ the left term of (\ref{inequality4})},\\ \\
(b): \sigma n(n-1)^2\\ \\
(c): n \sum\limits_{i=1}^n \|p_i(k)\|  \\ \\
(d): \sum_{i=1}^{n}  \sum_{\substack {j=1 \\ j \neq i}}^{n}  \left\|p_i(k) -p_j(k)\right\| \text{ the right term of (\ref{inequality4})}.\\
\end{array}	
$$

Now, we have that $(a) \leq (b)$ since the left term of (\ref{inequality4}) is $\sigma$ times the sum of $n$ norms of sums of $(n-1)$ unit vectors squared, and each has total length less than or equal to $(n-1)^2$.\\

Additionally, we have that $(c) \leq (d)$ since

$$ \sum_{\substack {j=1}}^{n} p_j = 0 \Rightarrow \sum_{\substack {j=1}}^{n}(p_i - p_j) = np_i \Rightarrow \sum_{\substack {j=1}}^{n}\hat p_i^{\intercal}(p_i - p_j) = n\|p_i\| \Rightarrow$$
$$ \Rightarrow\sum_{\substack {j=1 }}^{n}\|p_i - p_j\| \ge n\|p_i\| \Rightarrow \sum_{\substack {i=1}}^{n}\sum_{\substack {j=1 }}^{n} \|p_i - p_j\| \ge n\sum_{i=1}^{n} \|p_i \| $$

Therefore, if  $(b) < (c)$, i.e. the following condition is satisfied
\begin{equation} \label{DecreaseCond2_1}
\sigma n(n-1)^2 < n \sum\limits_{i=1}^n \|p_i(k)\|
\end{equation} 
then necessarily $(a) < (d)$, implying 

\begin{equation} \label{DecreaseCond2}
	\mathcal{L}(P(k+1)) < \mathcal{L}(P(k))
\end{equation}
as claimed in Lemma \ref{DecreseMax}.\\

\end{proof}


Let us discuss the dynamics of $\mathcal{L}(P(k))$ while condition (\ref{DecreaseCond2_1}) is satisfied. if $n=2$ the system converges to oscillates around $\bar p$ with a smaller than $\sigma$ amplitude , and if $n>2$, we have the following result:

\begin{lemma}\label{FinitePositive4} 
If $n>2$ and condition (\ref{DecreaseCond2_1}) is satisfied, then $\mathcal{L}(P(k)) - \mathcal{L}(P(k+1)) > \delta$ where $\delta$ is a positive and bounded away from zero constant.\\
\end{lemma}

\begin{proof}
Recall that we showed $(a)\leq(b)$, i.e. 

$$\sigma \sum_{i=1}^{n}  \left\| \sum_{\substack {j=1 \\ j \neq i}}^{n}  \frac{p_i(k) -p_j(k)}{\|p_i(k) -p_j(k)\|} \right\|^2 \leq \sigma n(n-1)^2$$

Denote the internal variable of the left term as $\|\Delta p_i(k)\|$, so that
$$\|\Delta p_i(k)\| = \left\| \sigma\sum_{\substack {j=1 \\ j \neq i}}^{n}  \frac{p_i(k) -p_j(k)}{\|p_i(k) -p_j(k)\|} \right\|$$ 
Notice that $\|\Delta p_i(k)\|$ is the step-size of agent $i$ at time-step $k$, which is smaller than or equal to $(n-1)$.\\

Next, we show that the sum of all squared step sizes is strictly upper bounded by $\sigma n(n-1)^2$ by a constant.
$$ \sigma n(n-1)^2 - \sum_{i=1}^{n} \|\Delta p_i(k)\|^2 > \delta  $$ 
Here we assume each agent jumps its maximal possible step-size.\\
Let $d1$ and $d2$ be a pair of agents which define the diameter of system, i.e
$$ \{d1,d2\} = arg\max\limits_{i,j} \|p_i(k) - p_j(k)\|$$
Assume those agents current steps have the maximal possible length, i.e
$$\|\Delta p_{d1}(k)\| = \|\Delta p_{d2}(k)\| = \sigma(n-1) $$

For each other agents $i$ ($ \neq d1,d2$), angle $\angle p_{d1}(k) p_i(k) p_{d2}(k)$ is greater than or equal to $\pi/3$, otherwise agents $d1$ and $d2$ would not define the diameter of the system. Denote this angle as $\theta_i$. Let $\theta_{i1}$ and $\theta_{i2}$ be the angle between $\Delta p_i(k)$ agent's $i$ step to the vectors pointing from $p_i(k)$ to $p_{d1}(k)$ and to $p_{d2}(k)$ respectively, then $ \theta_i = \theta_{i1}+\theta_{i2}$.\\

Assume each agent contribute the maximal possible contribution to $\|\Delta p_i(k)\|$, so that agents $d1$ and $d2$ contribute the maximal possible value as follows:

$$\max \left\{ \sigma\Delta p_i^\intercal(k)\frac{p_{d1}(k) - p_i(k)}{\|p_{d1}(k) - p_i(k)\|} + \sigma\Delta p_i^\intercal(k)\frac{p_{d2}(k) - p_i(k)}{\|p_{d2}(k) - p_i(k)\|} \right\} = $$
$$ = \max \left\{\sigma\cos(\theta_{i1}) + \sigma\cos(\theta_{i2}) \right\} $$
$$ \mbox{s.t.  }  \frac{\pi}{3} \le \theta_{i1}+\theta_{i2} \le \pi $$

which is $2\sigma\cos(\pi/6) = \sigma\sqrt{3}$.\\
Furthermore, each one of the other agents of the system contributes $\sigma$ to $\|\Delta p_i(k)\|$, which is the maximal possible contribution. (This may happen if all those agents are located at positions with the direction of $\Delta p_i(k)$ relative to agent's $i$ position). Therefore, $\|\Delta p_i(k)\|$ is bounded from above by $\sigma(n-3 + \sqrt{3})$, and we have that
$$ \sigma^2 \sum_{i=1}^{n}  \left\| \sum_{\substack {j=1 \\ j \neq i}}^{n}  \frac{p_i(k) -p_j(k)}{\|p_i(k) -p_j(k)\|} \right\|^2\le \sigma^2\left(2(n-1)^2 + (n-2)\left((n-3 + \sqrt{3})\right)^2 \right) < \sigma^2 n(n-1)^2 $$
Denote $\delta$ as the following difference:
$$\delta = \sigma^2 n(n-1)^2 - \sigma^2 \left(2(n-1)^2 + (n-2)\left((n-3 + \sqrt{3})\right)^2 \right)$$ Then, we have that

$$ \mathcal{L}(P(k+1)) - \mathcal{L}(P(k)) \le -  \sigma\sum_{i=1}^{n} \sum_{\substack {j=1 }}^{n} \left\|p_i(k) -p_j(k)\right\| +{\sigma}^2\sum_{i=1}^{n}  \left\| \sum_{\substack {j=1 \\ j \neq i}}^{n}  \frac{p_i(k) -p_j(k)}{\|p_i(k) -p_j(k)\|} \right\|^2 \le  $$
$$ \le - \sigma n\sum\limits_{i=1}^n \|p_i(k)\| + \sigma^2 n(n-1)^2 - \delta $$

Now, we may fix Lemma \ref{DecreseMax}, which suggests that dynamics (\ref{DecreaseCond2}) realized as a result of satisfying condition (\ref{DecreaseCond2_1}), i.e. if\\

$$\sigma (n-1)^2 \le \sum\limits_{i=1}^n \|p_i(k)\|$$

then\\
 
$$\mathcal{L}(P(k+1)) - \mathcal{L}(P(k)) < 0$$

as follows:\\

If  condition (\ref{DecreaseCond2_1}) is satisfied
$$\sigma (n-1)^2 \le \sum\limits_{i=1}^n \|p_i(k)\|$$
then
$$ \mathcal{L}(P(k+1)) - \mathcal{L}(P(k)) \le - \delta $$
and therefore $\mathcal{L}(P(k))$ decreases as claimed in Lemma \ref{FinitePositive4}.\\
\end{proof}


The gathering process causes far away agents to approach $\bar p$ at first, but eventually they may disperse within a region with a limited radius from the invariant average location.\\

We next bound the size of the region where all agents of $\mathcal{S}_4$ will necessarily gather within finite number of time-steps.\\

\begin{lemma}\label{kmax}
For any arbitrary initial constellation the value of $\mathcal{L}(P(k))$ is upper bounded by a value of order $n^4$ after finite number of time-steps.
\end{lemma}

\begin{proof}

Let $k_c$ be the first time-step when the sum of distances of all agents from $\bar p$ is smaller than or equal to $\sigma(n-1)^2$, i.e when the following condition holds:
\begin{equation}\label{k_cCond}
\sum\limits_{i=1}^n \|p_i(k)\| \le \sigma(n-1)^2
\end{equation}
so that condition (\ref{DecreaseCond2}) is not satisfied, and the function $\mathcal{L}(P(k)$ is not guaranteed to decrease.\\

Let us bound the maximal value of $\mathcal{L}(P(k))$ for all time-steps $k>k_c$. If condition (\ref{k_cCond}) holds, the value of $\mathcal{L}(P(k))$ is upper bounded as follows
$$\mathcal{L}(P(k)) = \sum\limits_{i=1}^n \|p_i(k)\|^2 \le \left(\sum\limits_{i=1}^n \|p_i(k)\| \right)^2 \le (\sigma(n-1)^2)^2$$

Let us examine the first time-step when $k>k_c$, i.e. the first time-step when condition (\ref{k_cCond}) breaks. This is the first time-step when one or more agents may jump to positions that possibly increase the sum of agents' distances from $\bar p$ to above $\sigma(n-1)^2$. By Lemma \ref{DecreseMax} we have that at the next time-step, $\mathcal{L}(P(k+1))$ will decrease by a strictly positive value. Since the maximal step-size of an agent is $\sigma(n-1)$, we get that for all time steps $k>k_c$
$$ \mathcal{L}(P(k+1)) = \sum\limits_{i=1}^n \|p_i(k+1)\|^2 \le \left(\sum\limits_{i=1}^n \|p_i(k+1)\| \right)^2 \le (\sigma(n-1)^2 + n\sigma(n-1))^2 $$
Hence after time step $k_c$ the function $\mathcal{L}(P(k))$ is bounded from above by a value of order of $n^4$ as claimed.\\

\end{proof}

Let us now use the above results to prove Theorem \ref{Theorem6}.\\

In order to bound the maximal distance of an agent from $\bar p$ after time step $k_c$, assume that one agent, denoted by $\|p_{max}(k)\|$, travels as far as possible from $\bar p=0$ while all other agents gather at $\bar p=0$.\\

By definition of $\mathcal{L}(P(k))$, we get from the upper bound of Lemma \ref{kmax}
$$\|p_{max}(k)\|^2 \le \mathcal{L}(P(k)) \triangleq \sum_{i=1}^n \|p_i(k)\|^2  \leq \left(2\sigma (n-1)^2+\sigma(n-1)\right)^2$$
and therefore for any $k>k_c$, we get
$$ \|p_{max}(k)\| \le 2\sigma (n-1)^2+\sigma(n-1)$$

Therefore for any $k>k_c$, all the agents of $\mathcal{S}_4$ remain confined in a disc centered at $\bar p$ with a radius of order $\sigma n^2$ as claimed. \\
\end{proof}


\subsection{Discussion} 

Notice that, like for the system $\mathcal{S}_1$, system $\mathcal{S}_3$ is fully scalable in the sense of convergence guarantee. Adding or removing agents may affect the convergence rate, but convergence itself is always guaranteed.\\

As long as the sensing capabilities of all agents are identical and omni-directional so that their interconnection graph is undirected, by Lemma \ref{invariant}, the average position of the agents is invariant. Since the agents do not have memory, they react to what they "see" to update their motion.  The dynamics of systems $\mathcal{S}_3$ and $\mathcal{S}_4$ may be represented as $\dot P(t) = -L(P(t))P(t)$ and in the discrete case $P(k+1) = (I - L(P(k)))P(k)$ where $L(P(t))$ is a weighted graph Laplacian with weights continuously dependent on distances between the agents. Since the transition matrices $L(P)$ depend on agents' constellation (which changes in time), the dynamics is clearly non-linear.\\

\subsubsection{Gathering of non-linear systems}
L. Moreau in \cite{moreau2004} proves the following: If for every time period $\Delta t = [t_1,t_2]$ the matrix $L_w$ which is $L_w = \int_{t_1}^{t_2} L(t)dt$,  represents a weighted connected graph, then a system with dynamics $\dot p(t)=L(t)p(t)$ will (asymptotically) converges to a point. In fact, one can also use this result to prove that system $\mathcal{S}_3$ converges.\\
 
\subsubsection{Gathering region}
Unlike Systems $\mathcal{S}_1$, $\mathcal{S}_2$, and $\mathcal{S}_3$ whose agents converge to a single point, the agents of system $\mathcal{S}_4$ gather to a bounded region. The lack of information on neighbours' distance, together with the discrete dynamics of $\mathcal{S}_4$ and the finite step-size, prevent convergence to a point (except for some singular initial constellations).\\

The best upper limit we could find for the size of the gathering region of system $\mathcal{S}_4$ is of order $\sigma n^2$. We believe that this may be improved in the sense of getting a realistic bound, since simulation results show that the size of the gathering region is always bounded by a disc of radius less than $\sigma n$.\\

We analysed the dynamics of system $\mathcal{S}_4$, considering simulation results with a large number of  random initial constellations, and with various numbers of agents. An example of results from a simulation with 10 agents with 400 random initial constellations is presented in Figure \ref{FigMonteCarlo}.\\
\begin{figure}[h!]
\captionsetup{width=0.8\textwidth}
  \centering
	\includegraphics[width=120mm]{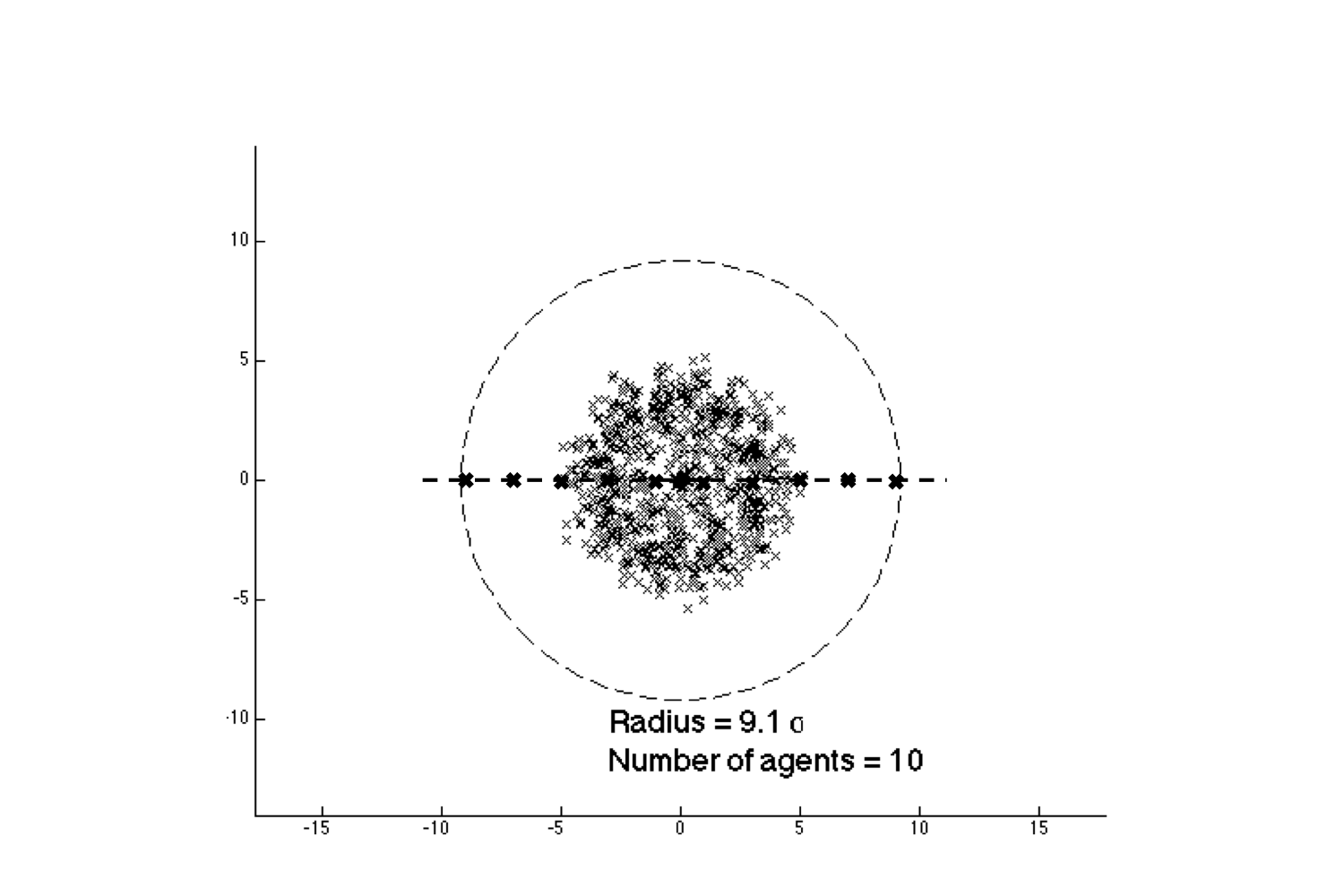}
	  \caption{Simulation results for 400 random initial constellations of system $\mathcal{S}_4$ with 10 agents. The horizontal 1D constellation yielded the worst case scenario and it was set deterministically for comparison.}
    \label{FigMonteCarlo}
\end{figure}

All empirical results suggest that the agents gather to a disc of radius of order $n\sigma$ rather than $n^2\sigma$, however proving this remain an open research problem.\\

\subsubsection{Convergence time}
We found an upper bound for the convergence time of system $\mathcal{S}_3$, which depends on the initial configuration only, rather than on the number of agents in the system. However, in simulations the convergence time of system $\mathcal{S}_3$ was also observed to become shorter as the number of agents increases.\\

We note that in \cite{bruckstein1991ants} similar results are discussed for the case of agents in cyclic pursuit. There, however, graph topology is both directed and cyclic. Despite the differences, the results are similar: for continuous time it has been shown that the time of convergence is finite, while for discrete time gathering region was proved to have a finite size.

\newpage
\section{Finite Visibility, Position Sensing}
In this and the next sections we work in $\mathbb{R}^2$ only and restrict the visibility of the agents to a finite horizon, and therefore the connection graph topology will be dependent on the geometric constellation of the agents' location. An agent $j$ will be acknowledged as a neighbour of agent $i$ at time $t$ if it is located within its \textit{visibility range} $V$, i.e.
$$ j \in N_i(t) \iff \| p_i(t) - p_j(t) \|<V $$
Since the functions describing the dynamics become \textit{discontinuous}, the resulting multi-agent systems are harder to analyze.\\


\subsection{Continuous Time Dynamics (system $\mathcal{S}_{5JE}$)}

In this section we present  a gathering process based on \textit{relative position} with limited visibility as proposed by Ji and Egerstedt in 2007 \cite{ji2007}.\\

Assume the agents move according to the law:

\begin{equation} \label{eq:Dynamics55}
	\dot{p}_i (t)= -\sigma\sum\limits_{j \in N_i(t)}\frac{2V -  l_{ij}(t)}{(V - l_{ij}(t))^2}(p_i(t) - p_j(t))
\end{equation}
where $ l_{ij}(t) = \|p_i(t) -p_j(t)\|$ is the distance between agents $i$ and $j$, and $\sigma$ is a positive scalar gain factor.\\

In order to avoid problems of dividing by zero in the agents' dynamics (\ref{eq:Dynamics55}) at $l_{ij}(t)=V$, Ji and Egerstedt modify the meaning of the neighborhood of an agent $i$ by adding a \textit{hysteresis rule}, as follows:  an agent $j$ is "acknowledged" at time $t$ to be a neighbour of agent $i$ if and only if it was within the range $V$ from $i$ at $t=0$, or it entered the range $V$ from $i$ and already crossed the range $V-\delta$ at some prior time, where $0 < \delta < V$ is small and bounded away from zero positive value. Note that this way of defining neighbours of agent $i$ requires the tracking of neighbours and remembering their past trajectories, capabilities that we do not assume our agents to have. Nevertheless we present Ji and Egerstedt's results to highlight the complexities in analyzing "potential functions"-based nonlinear systems, and for comparison purposes with the oblivious geometric approach to be presented in the sequel (System $\mathcal{S}_5$).\\

The hysteresis rule defining the neighbours of agent $i$ at time $t$ is given by

\begin{equation} \label{Hysteresis}
 j \in N_i(t) \iff \left\{\begin{alignedat}{2}
    & \{\exists \tilde t < t \; : \; \|p_i(\tilde t) - p_j(\tilde t) \| \le V - \delta  \} \\
    & or \\
    & \|p_i(0) - p_j(0) \| < V
  \end{alignedat}\right.
\end{equation}

Note that the rule (\ref{Hysteresis}) implies that once the agent $j$ is acknowledged as a neighbour of $i$, it will remain its neighbour forever, i.e. the distance between them will never exceed $V$. This is an immediate consequence of Ji and Egerstedt's dynamic rule that prevents any neighbour to overcome the infinity barrier due to division by $(V - l_{ij}(t))^2$ in (\ref{eq:Dynamics55}), which approaches zero as $j$ approaches the boundary of $i$.\\

The hysteresis rule enables new agents to join the neighbours of $i$ without inducing singularities.\\

Since the dynamics (\ref{eq:Dynamics55}) with the added hysteresis rule (\ref{Hysteresis}) remains an antisymmetric function, the average location of the agents is invariant, i.e. $\bar p=const$, due to Lemma $\ref{invariant}$.\\

Define $G(P(t))$ as the neighborhood graph of system $\mathcal{S}_{5JE}$ at time $t$.

\begin{theorem} \label{Theorem55}
For any initial constellation corresponding to a connected neighborhood graph, agents of system $\mathcal{S}_{5 JE}$ gather at the average position of their initial constellation.
\end{theorem}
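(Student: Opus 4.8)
The plan is to exhibit the dynamics (\ref{eq:Dynamics55}) as a gradient descent flow of a ``potential-like'' Lyapunov function, exactly in the spirit of the analysis following (\ref{PotentialFunc}), and then to account separately for the finitely many discontinuous changes of the neighbourhood graph forced by the hysteresis rule (\ref{Hysteresis}). First I would propose the potential
$$ \mathcal{L}(P(t)) = \frac{1}{2}\sum_{i=1}^n\sum_{j\in N_i(t)}\frac{\|p_i(t)-p_j(t)\|^2}{V-\|p_i(t)-p_j(t)\|} $$
and verify, using $\frac{d}{dl}\!\left(\frac{l^2}{V-l}\right)=\frac{l(2V-l)}{(V-l)^2}$, that indeed $\dot p_i(t)=-\sigma\nabla_{p_i}\mathcal{L}(P(t))$, so that on any time interval of fixed topology $\dot{\mathcal{L}}(P(t))=-\frac{1}{\sigma}\sum_{i=1}^n\|\dot p_i(t)\|^2\le 0$. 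Each summand is non-negative because every acknowledged neighbour obeys $0\le l_{ij}(t)<V$, so $\mathcal{L}$ is bounded below by zero.

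The delicate feature is that $\mathcal{L}$ is not continuous in time: each time a new agent is acknowledged as a neighbour (entering at distance $V-\delta$ by (\ref{Hysteresis})) a finite term of size $\frac{(V-\delta)^2}{\delta}$ is inserted and $\mathcal{L}$ jumps \emph{upward}. The structural fact that rescues the argument, already recorded after (\ref{Hysteresis}), is that an acknowledged neighbour is never lost, so the edge set of the neighbourhood graph only grows. Since at most $\binom{n}{2}$ edges are possible, only finitely many upward jumps occur, and there is a finite time $T^*$ after which the graph $G^*=G(P(t))$ is fixed; as it contains the initially connected graph, $G^*$ is itself connected. For $t>T^*$ the value $\mathcal{L}(P(t))$ is non-increasing and bounded below, hence convergent, and moreover each $l_{ij}(t)$ must stay bounded away from $V$ --- otherwise a single summand, and with it $\mathcal{L}$, would blow up --- which keeps the weights $\frac{2V-l_{ij}}{(V-l_{ij})^2}$ bounded and the flow Lipschitz on the relevant compact invariant region.

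On this final, fixed-graph phase I would invoke LaSalle's invariance principle: the trajectory converges to the largest invariant subset of $\{\dot{\mathcal{L}}=0\}$, which by the identity above is precisely the set where $\dot p_i(t)=0$ for all $i$. At such an equilibrium every agent satisfies $\sum_{j\in N_i}\frac{2V-l_{ij}}{(V-l_{ij})^2}(p_i-p_j)=0$, i.e. $P$ lies in the null space of the \emph{weighted} graph Laplacian of $G^*$. Since all the weights $\frac{2V-l_{ij}}{(V-l_{ij})^2}$ are strictly positive for $0\le l_{ij}<V$ and $G^*$ is connected, that null space is spanned by $\vec 1$ (property 3 of $L(G)$, which extends verbatim to positive edge weights), so all agents must coincide. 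Finally the location of the common limit point is fixed by Lemma \ref{invariant}: the dynamics remains antisymmetric across every topology change, so $\bar p$ is invariant, and the single gathering point must therefore equal $\bar p=\bar p(0)$.

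The principal obstacle I anticipate is precisely the non-smoothness of $\mathcal{L}$ at the topology changes: one must argue carefully that the finitely many upward jumps neither accumulate nor destroy the eventual monotone decrease, and that LaSalle's principle may legitimately be applied on the post-$T^*$ phase where the graph is frozen. A secondary but essential technical point is establishing that $l_{ij}(t)$ stays strictly below $V$ for every surviving edge, so that the vector field is well defined and the invariant set on which LaSalle operates is compact.
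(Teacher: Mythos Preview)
Your proposal is correct and follows essentially the same route as the paper: the same potential $\mathcal{L}(P)=\sum_{(i,j)\in E}\frac{l_{ij}^2}{V-l_{ij}}$, the same gradient-descent identification of the dynamics, the same ``edges can only be gained, never lost, and there are at most $\binom{n}{2}$ of them'' argument to reduce to a fixed-topology phase, and the same appeal to Lemma~\ref{invariant} to pin the limit at $\bar p(0)$. The only cosmetic difference is that on the final fixed-graph interval you invoke LaSalle's invariance principle and the null space of the weighted Laplacian explicitly, whereas the paper argues directly from $\dot{\mathcal{L}}<0$ unless all agents coincide; these are equivalent, and your formulation is arguably the more careful of the two.
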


\begin{proof} 
Let $\mathcal{L}(P(t),G(P(t)))$ be a Lyapunov function of the dynamic system $\mathcal{S}_{5JE}$ while the topology of $G(P(t))$ is fixed. Without loss of generality, since $\bar p$ is invariant, consider the positions $p_i(t)$ in a coordinate system centered at $\bar p=0$,

\begin{equation} \label{Lyapunov}
	\begin{array}{l}  
		\mathcal{L}(P(t),G(P(t))) =  \sum\limits_{i=1}^{n} \sum\limits_{j=1}^{n} \nu_{ij}(t) \\
		\\
		with \\
		\nu_{ij}(t) =  \left\{
					\begin{array}{ll}
						\frac{l_{ij}^2(t)}{V - l_{ij}(t) }  &  l_{ij}(t) \in G(P(t)) \\						0  & o.w.
					\end{array}
				\right.
	\end{array}
\end{equation}

\begin{lemma} \label{Lyapunov55}
If $G(P(t))$ is connected and its topology is fixed, all agents in system $\mathcal{S}_{5JE}$ asymptotically converge to the average position $\bar{p}$. \\
\end{lemma}

\begin{proof}
As long as not all agents are located at the same place (at $\bar p$) we have that $\mathcal{L}(P(t),G(P(t))) > 0$. The derivative of $\mathcal{L}(P(t),G(P(t)))$ is given by:

\begin{equation} \label{Eq:Dev55}
	\begin{array}{l}
		\dot{\mathcal{L}}(P(t),G(P(t))) = \sum\limits_{i=1}^{n} \left\{ \left( \frac{\partial \mathcal{L}(P(t),G(P(t)))}{\partial p_i}\right)^\intercal \dot{p}_i(t)\right\} = \\ \\
		= -\sum\limits_{i=1}^{n} \left( \sum\limits_{j \in N_i} \frac{ 2V - l_{ij}(t)}{(V - l_{ij}(t))^2}\|p_j(t) - p_i(t)\| \right)^2
	\end{array}
\end{equation}

Clearly, the derivative cannot be positive. Furthermore, only when all existing pairwise distances between agents $l_{ij}$ equal zero (i.e all agents are located at the same position), the function $\mathcal{L}$ itself equals zero. Since by Lemma \ref{invariant} the average position is an invariant, at $\bar p = 0$ by definition, and the derivative of the Lyapunov function ($\ref{Eq:Dev55}$) is strictly negative if agents are not all at $0$, all the agents of system $\mathcal{S}_{5JE}$ asymptotically converge to $\bar p$ if the graph topology does not change.

\end{proof}

\begin{lemma} \label{NeverLoseFriends_55}
If at time $t$ all neighbouring pairs in $G(P(t))$ of system $\mathcal{S}_{5JE}$ are closer than $V - \tilde \delta$, for some $\tilde \delta > 0$, these pairs of neighbours will remain neighbours forever.
 \end{lemma}
 
\begin{proof}
We prove this claim whether new pairs of neighbours are currently generated or not.\\

We start with the assumption that no new neighbouring pairs are currently generated, so at time $t$ we have that $\mathcal L(P(t))$ is bounded by a finite number $\mathcal{L}_{lim1}$
$$\mathcal{L}_{lim1} = \frac{M(V-\tilde{\delta})^2}{\tilde{\delta}}$$
where $M$ is the current number of edges in $G(P(t))$.\\

If a neighbouring pair tends to disconnect at $t_{dis}$ the division by a value close to zero in (\ref{Lyapunov}) would increase $\mathcal L(P(t_{dis}))$ beyond any finite limit. However, by Lemma \ref{Lyapunov55} the value of $\mathcal L(P(t))$ cannot increase since it is bounded by  $\mathcal{L}_{lim1}$. Therefore, while no new neighbouring pairs are generated, such disconnections cannot occur.\\

Next we assume that $m>0$ new neighbouring pairs are curently generated. The number of agents in the system is finite so the number of new pairs that may be generated is finite as well. Assume that $M$ is the current number of edges in $G(P(t))$ and at time step $t_{gen}$, $m$ new neighbouring pairs are generated. The value of $\mathcal L(P(t_{gen}))$ is again limited by another finite number $\mathcal{L}_{lim2}$
$$\mathcal{L}_{lim2} = \frac{M(V-\tilde{\delta})^2}{\tilde{\delta}} + \frac{m(V-\delta)^2}{\delta}$$
and here too no disconnection may occur, for the same reason as in the previous case. Since no disconnection may occur, pairs of neighbours will remain neighbours forever as claimed.

\end{proof}

To prove Theorem \ref{Theorem55}, Ji and Egerstet argue that, since no pair of neighbours can disconnect (by Lemma \ref{NeverLoseFriends_55}), system $\mathcal{S}_{5JE}$ has only two dynamical states: either the system preforms gathering to $\bar p$ (by Lemma \ref{Lyapunov55}) or new edges are added to the system. Since the maximal number of possible edges in the system is finite, the system necessarily gathers to the complete graph, and then all agents converge asymptotically to $\bar p$.\\
\end{proof} 


\textbf{An oblivious dynamic law:}\\

The aim of our survey is to analyze gathering of oblivious and anonymous agents. Ji and Egerstedt's algorithm requires each agent to be aware of whether another agent just entered its visibility range or is about to leave it. This requires tracking neighbours, and the capability to remember and mark certain neighbours as special, violating the obliviousness and anonymity paradigms. Therefore we present a different algorithm which causes the multi-agent system $\mathcal{S}_5$ to preform gathering without violating these assumptions.\\

\subsection{Continuous Time Dynamics (system $\mathcal{S}_5$)}

We shall denote by $\psi_i(t)$ the angular span of the minimal section containing all neighbours of agent $i$ which are near the visibility bound, i.e. neighbours at a distance between $V-\delta$ to $V$ from agent $i$ (where $\delta$ is a small positive value). Let us denote the set of agents in the band with distance between $V$ to $V-\delta$ from $p_i(t)$ by $B_i(t)$. Denote by $U_i^-(t)$ and $U_i^+(t)$ the unit vectors pointing from $p_i(t)$ to the current position of the right and left extremal agents in the set $B_i(t)$, i.e the agents defining the minimal section containing all agents from set $B_i(t)$. Notice that, the set $B_i(t)$ may contain only one agent, and then $U_i^-(t) = U_i^+(t)$.\\

Assume each agent moves according to the following oblivious dynamic law (see Figure \ref{EgerstedtAlternative}):
\begin{equation} \label{eq:Dynamics55_2}
\dot{p}_i (t) = 
\left\{
\begin{array}{ll}
\sigma \sum\limits_{j \in N_i(t)}(p_j(t) - p_i(t)) & B_i(t) = \emptyset \\
\\
\sigma (U_i^-(t) + U_i^+(t)) & B_i(t) \neq \emptyset \mbox{ \textit{and} } \psi_i(t) < \pi\\
\\
0 & o.w
\end{array}
\right.
\end{equation}
i.e. if the band $B_i(t)$ is empty - agent $i$ moves towards the average position of its current neighbours. Otherwise, if $\psi_i(t) < \pi$ - agent $i$ moves along its bisector, or stay put if $\psi_i(t) \geq \pi$.\\
\begin{figure}[h]
\captionsetup{width=0.8\textwidth}
  \centering
    \includegraphics[width=80mm]{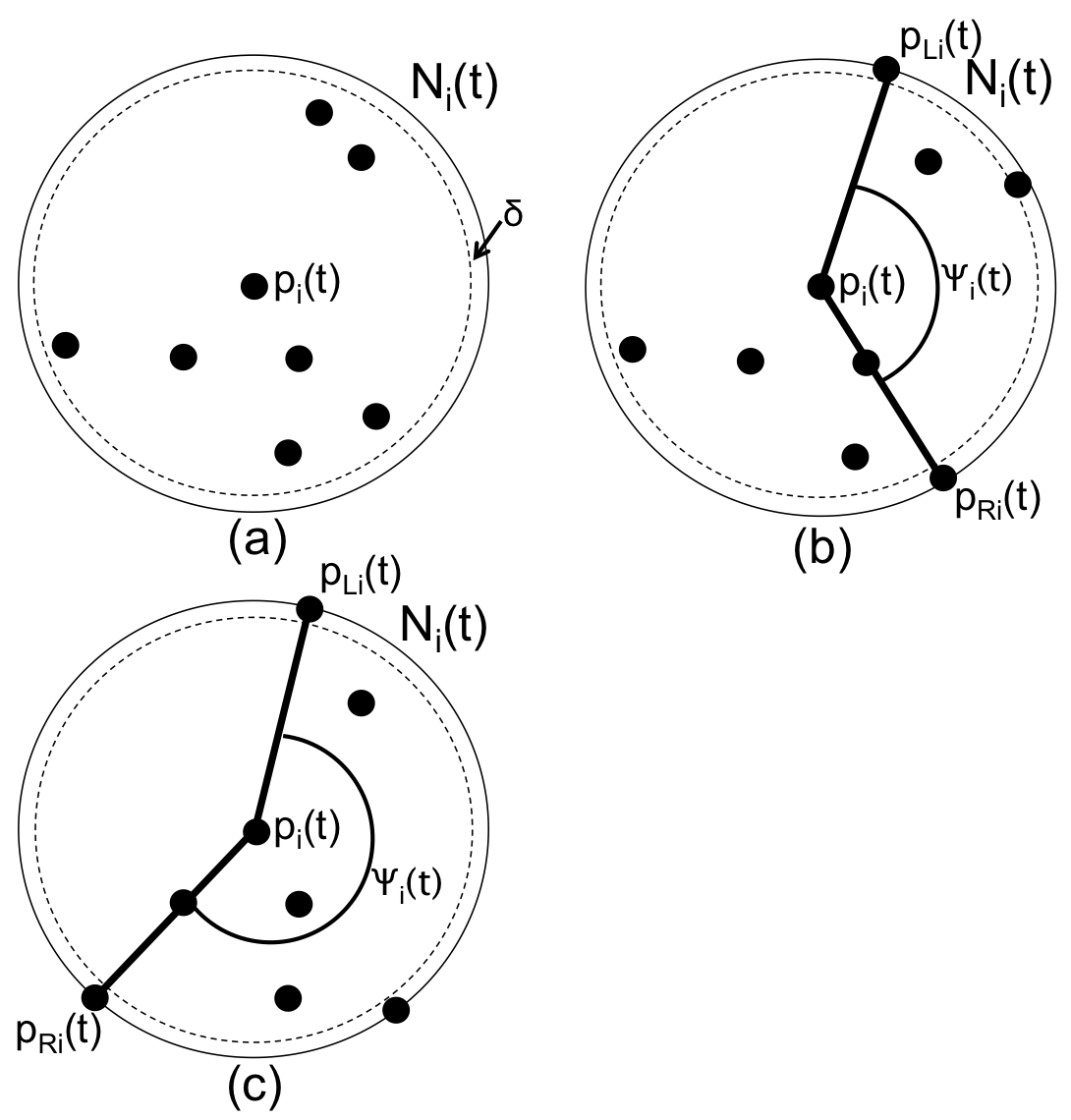}
     \caption{Dynamics (\ref{eq:Dynamics55_2}) depends on the neighborhood states, which can be:\\
$  \quad \quad \quad \quad \quad (a) \quad B_i(t) = \emptyset$\\
$ \quad \quad \quad \quad \quad (b)  \quad B_i(t) \neq \emptyset ; \quad \psi_i(t) < \pi$\\
$ \quad \quad \quad \quad \quad (c)  \quad B_i(t) \neq \emptyset ; \quad\psi_i(t) \ge \pi$}
     \label{EgerstedtAlternative}
\end{figure}

\begin{theorem} \label{Theorem55_2}
For any initial constellation corresponding to a connected neighborhood graph, agents of system $\mathcal{S}_5$ asymptotically gather to a point.
\end{theorem}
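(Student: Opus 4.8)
The plan is to follow the template of the Ji--Egerstedt analysis (Lemmas \ref{NeverLoseFriends_55} and \ref{Lyapunov55}), but to replace the two ingredients that no longer apply: the potential barrier that prevented disconnections, and the invariance of $\bar p$. Note that the band term $\sigma(U_i^-(t)+U_i^+(t))$ is \emph{not} antisymmetric, so Lemma \ref{invariant} fails and the agents gather to some point that is generally not the initial centroid; this is exactly why the convergence step will be the delicate one.

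First I would prove a purely geometric ``never lose a neighbour'' lemma. Fix an edge $\{i,j\}$ and suppose $\|p_i(t)-p_j(t)\|\in[V-\delta,V]$, so that $j\in B_i(t)$ and, by symmetry of the distance-based neighbour relation, $i\in B_j(t)$; in particular neither agent is in the ``$B=\emptyset$'' consensus mode. Writing $e_{ij}=(p_i-p_j)/\|p_i-p_j\|$, one has $\frac{d}{dt}\|p_i-p_j\|=e_{ij}^\intercal(\dot p_i-\dot p_j)$. The key observation is that whenever $B_i\neq\emptyset$ and $\psi_i<\pi$ the velocity $\sigma(U_i^-+U_i^+)$ points along the bisector of the minimal cone containing $B_i(t)$, and therefore has a non-negative inner product with the direction to every agent of $B_i(t)$: that cone has opening $\psi_i<\pi$, so each band neighbour lies within angle $\psi_i/2<\pi/2$ of the bisector. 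Hence $e_{ij}^\intercal\dot p_i\le 0$ and $e_{ij}^\intercal\dot p_j\ge 0$, while the stay-put case ($\psi\ge\pi$) contributes $0$. In every combination this yields $\frac{d}{dt}\|p_i-p_j\|\le 0$ on the band. Since distances are continuous and a pair can only separate beyond $V$ by first traversing $[V-\delta,V]$ from below, this barrier forbids the distance from ever reaching $V$: no edge is ever lost, exactly as in Lemma \ref{NeverLoseFriends_55}, but now by geometry rather than by an exploding potential.

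Next I would exploit monotonicity of the graph. Because edges are only added and never removed, and $G(P(0))$ is connected with at most $\binom{n}{2}$ possible edges, the topology changes finitely many times and stabilises to a fixed connected graph after some finite time $T$. For $t\ge T$ I would run the convex-hull argument developed for $\mathcal{S}_1$. At any vertex of $CH(P(t))$ the interior angle is $<\pi$, so all other agents -- in particular all band neighbours -- lie in a cone of opening $<\pi$ as seen from that vertex; consequently a hull-vertex agent is never in stay-put mode, and in either remaining mode its velocity points strictly into the hull whenever it has a neighbour at positive distance. Using the perimeter $\mathcal{L}_3(P(t))$ of $CH(P(t))$ and the identity (\ref{CH_Conv}), each summand is non-negative, so $\dot{\mathcal{L}}_3\le 0$ for all $t$ (indeed across topology switches as well, since every mode moves hull vertices inward). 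Finally I would force $\mathcal{L}_3\to 0$: the hulls are nested and bounded, so $P(t)$ stays compact and $\mathcal{L}_3(P(t))$ decreases to some $L_\infty\ge 0$. I would characterise the equilibria by showing that a stationary configuration with connected graph must be collapsed -- otherwise the hull is nondegenerate and at a hull vertex with a non-coincident neighbour the velocity is nonzero (a nonzero inward sum in consensus mode, or $\sigma(U_i^-+U_i^+)\neq 0$ in band mode with $\psi_i<\pi$). A LaSalle/compactness argument then finishes: along a subsequence $P(t_k)\to P^*$ with perimeter $L_\infty$; if $L_\infty>0$ the limit hull is nondegenerate, some hull vertex of $P^*$ moves strictly inward, the perimeter would drop by a fixed amount over a fixed time, and by continuity the same drop occurs from all nearby $P(t_k)$, contradicting $\mathcal{L}_3\to L_\infty$. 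Hence $L_\infty=0$, the diameter vanishes, and the nested shrinking hulls intersect in a single point to which all agents converge.

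I expect the main obstacle to be this last step, precisely because the dynamics (\ref{eq:Dynamics55_2}) is discontinuous: the mode of each agent switches as band membership and the angle $\psi_i$ cross their thresholds, so the classical invariance principle does not apply verbatim, and the ``strict inward motion persists for a short time'' claim must be justified at switching instants (e.g. by noting that the inward-velocity lower bound at a hull vertex is uniform over the finitely many mode assignments consistent with the fixed topology). A secondary technical point is the barrier argument of the first step at the instant an agent enters the band, where the velocity jumps; one must verify that the continuous distance still cannot cross $V$ despite this discontinuity.
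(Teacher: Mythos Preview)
Your connectivity-preservation step and your hull-nesting step are exactly the paper's Lemmas~\ref{lemma55_2_1} and~\ref{ChShrinks}, argued the same way (bisector direction makes $\le\pi/2$ angle with every band neighbour, so the pairwise distance is non-increasing on the band; every mode sends hull vertices weakly inward). Your observation that $\bar p$ is not invariant is also correct and is why the paper's statement of Theorem~\ref{Theorem55_2} says ``a point'' rather than ``$\bar p$''.

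Where you diverge is in the convergence step. You fix the topology after time $T$ and then run a LaSalle/compactness argument on the perimeter, acknowledging that the discontinuity of \eqref{eq:Dynamics55_2} is the sticking point. The paper avoids this entirely by a two-phase decomposition that you are missing: once the diameter of $CH(P(t))$ drops below $V-\delta$, every $B_i(t)$ is empty and the graph is complete, so the dynamics \emph{is literally} the linear system $\mathcal S_1$ of \eqref{eq:Dynamics1}, and Theorem~\ref{LinearTheorem} finishes (this is Lemma~\ref{CloseAgentsConcerge}). The work is then shifted to Lemma~\ref{lemma55_2_4}, which shows that the perimeter shrinks at a rate bounded away from zero until the diameter reaches $V-\delta$; this is a finite-time claim, so no invariance principle is needed, and once the linear regime is entered there are no more mode switches at all.

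Your proposed patch (``uniform inward-velocity bound over the finitely many mode assignments'') does not quite close your gap as stated: at a hull vertex in the consensus mode $B_s=\emptyset$, the speed is $\sigma\bigl\|\sum_{j\in N_s}(p_j-p_s)\bigr\|$, which is not bounded below by a constant over all configurations with the fixed topology---it can be arbitrarily small when all neighbours of $s$ cluster near $p_s$. The paper runs into the same phenomenon in the proof of Lemma~\ref{lemma55_2_4} and handles it by a separate argument (if $D(\{s,N_s\})$ is tiny while the graph is not yet complete, some neighbour of $s$ has a far neighbour and is itself in band mode, hence moves with speed $\Theta(\sigma)$, immediately restoring a positive $D(\{s,N_s\})$). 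So even within your LaSalle framework you would need this extra ingredient; the paper's reduction-to-$\mathcal S_1$ route packages it more cleanly and removes the need to reason about subsequential limits of a discontinuous flow.
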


\begin{proof} 
We prove Theorem \ref{Theorem55_2} using four lemmas:
\begin{enumerate}
\item Neighbours remain neighbours (Lemma \ref{lemma55_2_1}).
\item The convex-hull of system $\mathcal{S}_5$ never increases (Lemma \ref{ChShrinks}).
\item If all the agents are within a distance of $V-\delta$ from each other, they will asymptotically converge (Lemma \ref{CloseAgentsConcerge}).
\item Any initial constellation with connected (but not fully necessarily fully connected) neighborhood graph evolves to the case above in finite time (Lemma \ref{lemma55_2_4}).
\end{enumerate}
These four lemmas prove that the agents of system $\mathcal{S}_5$ asymptotically gather to a point as claimed.\\

\begin{lemma} \label{lemma55_2_1}
Dynamics (\ref{eq:Dynamics55_2}) ensures that if agents $i$ and $j$ are neighbours at time $t'$ they will remain neighbours for all times $t>t'$.
\end{lemma}

\begin{proof}
Let $\theta_{ij}(t)$ be the current angle between the velocity vector $\dot p_i(t)$ of agent $i$, and the line defined by $p_i(t)$ and $p_j(t)$, and let $v_i(t)$ and $v_j(t)$ be the norm of $\dot p_i(t)$ and $\dot p_j(t)$ respectively (see Figure \ref{EgerstedtAlternativeLemmaProof}).\\

\begin{figure}[h!]
\captionsetup{width=0.8\textwidth}
  \centering
    \includegraphics[width=60mm]{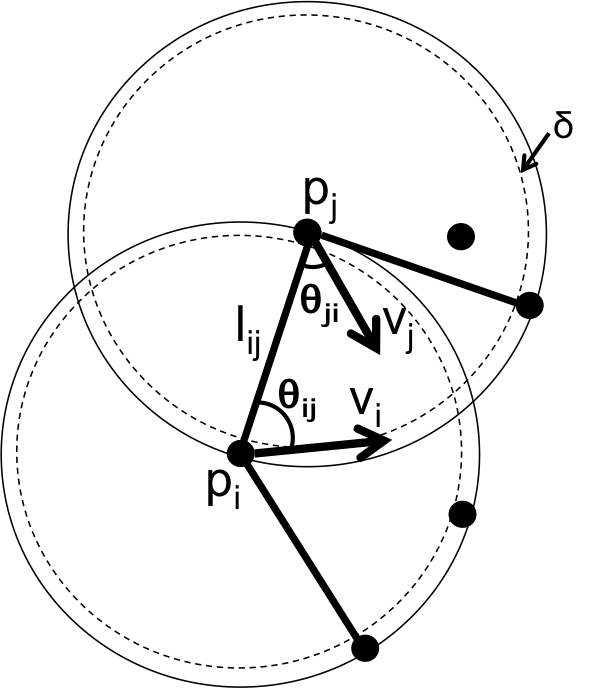}
     \caption{Drawing for Lemma \ref{lemma55_2_1}: visibility maintenance.}
     \label{EgerstedtAlternativeLemmaProof}
\end{figure}

Let $l_{ij}(t)$ be the current distance between agents $i$ and $j$ so that $l_{ij}(t) = \|p_i(t)-p_j(t)\|$. Its rate of change is as follows:
$$ \dot l_{ij}(t) = \frac{p_i(t)-p_j(t)}{ \|p_i(t)-p_j(t)\|}(\dot p_i(t)-\dot p_j(t)) = - \left(v_i(t)\cos(\theta_{ij}(t)) + v_j(t)\cos(\theta_{ji}(t))\right)$$

Let us focus only on agents located in the surrounding band $B_i(t)$, since only they may potentially lose visibility with agent $i$. If $\psi_i(t) < \pi$ then agent $i$ moves in the direction of the bisector of $\psi_i(t)$, hence 
$$\{ \forall j \in B_i(t) : \theta_{ij}(t) < \frac{\pi}{2} \}$$
otherwise, if $\psi_i(t) \geq \pi$ agent $i$ does not move.\\

If $j \in B_i(t)$ then necessarily $i \in B_j(t)$, therefore the same argument holds for agent $j$. Hence, the distance between agents $i$ and $j$ can not increase.\\
\end{proof}

\begin{lemma} \label{ChShrinks}
Let $CH(P(t))$ be the current convex-hull of all the agents of system $\mathcal{S}_5$. Then: 
$$\{\forall t , \Delta t \ge 0 : \; CH(P(t+\Delta t)) \subseteq CH(P(t)) \}$$
\end{lemma}

\begin{proof}
By (\ref{eq:Dynamics55_2}), the velocity of each agent is a linear combination of the relative positions of its neighbours multiplied by a non negative factor. Since all the neighbours are located either on the perimeter of the convex-hull or inside it, the motion of each agent on the perimeter of the convex-hull is either towards the interior of the convex-hull or along its perimeter.

\end{proof}

\begin{lemma}\label{CloseAgentsConcerge}
If all agents of system $\mathcal{S}_5$ are at a distance less then $V-\delta$ from each other, they asymptotically converge to their (current) average position.
\end{lemma}

\begin{proof}
In this case by Lemma \ref{ChShrinks}, the distance between any pair of agents will never exceed $V-\delta$. Therefore the system has a simple linear dynamics (as in system $\mathcal{S}_1$), which was proven in Theorem \ref{LinearTheorem} to asymptotically gather.

\end{proof}

\begin{lemma} \label{lemma55_2_4}
Any initial constellation of system $\mathcal{S}_5$ with a connected neighborhood graph, necessarily evolves to the state of Lemma \ref{CloseAgentsConcerge} where the agents are within a distance of $V-\delta$ from each other in finite time.
\end{lemma}

\begin{proof}

We shall show that the perimeter of $CH(P(t))$ drops at a strictly positive rate as long as the "diameter" of the system is strictly positive.\\
 
The proof is based on the dynamics of the agent (or agents) $s$, located at the current sharpest corner of the system's convex-hull. Let $\varphi_s$ be the inner angle of this corner.\\

The sum of angles of any convex polygon is  $\pi(m-2)$, where $m$ is the number of its corners, therefore the angle of its sharpest corner $\varphi_s$ is necessarily smaller than or equal to $\pi(1-\frac{2}{m})$. System $\mathcal{S}_5$ contains $n$ agents, hence the system's convex-hull has $ m \le n$ corners. We denote the upper bound on the angle of the sharpest corner by $\varphi_*$, hence

$$\varphi_s \le \varphi_* = \pi(1 - 2/n)$$

Let us show that the "current" agent $s$ necessarily moves inside the convex-hull of the system at a speed which is strictly positive.\\

We present the proof of this claim by projecting $\dot{p}_s$ on $\hat{\varphi}_s$ which is the bisector unit-vector of the convex-hull's corner occupied by agent $s$ (see Figure \ref{BisectorOfSharpestAngle}).

\begin{equation} \label{EqBisectorOfSharpestAngle}
\|\dot{p}_s(t)\| \ge \hat{\varphi}_s^T\dot{p}_s(t) = \sigma\sum\limits_{j \in N_i(t)} a_{j}\hat{\varphi}_s^T(p_j(t) - p_s(t))
\end{equation}
\begin{figure}[h!]
\captionsetup{width=0.8\textwidth}
  \centering
    \includegraphics[width=60mm]{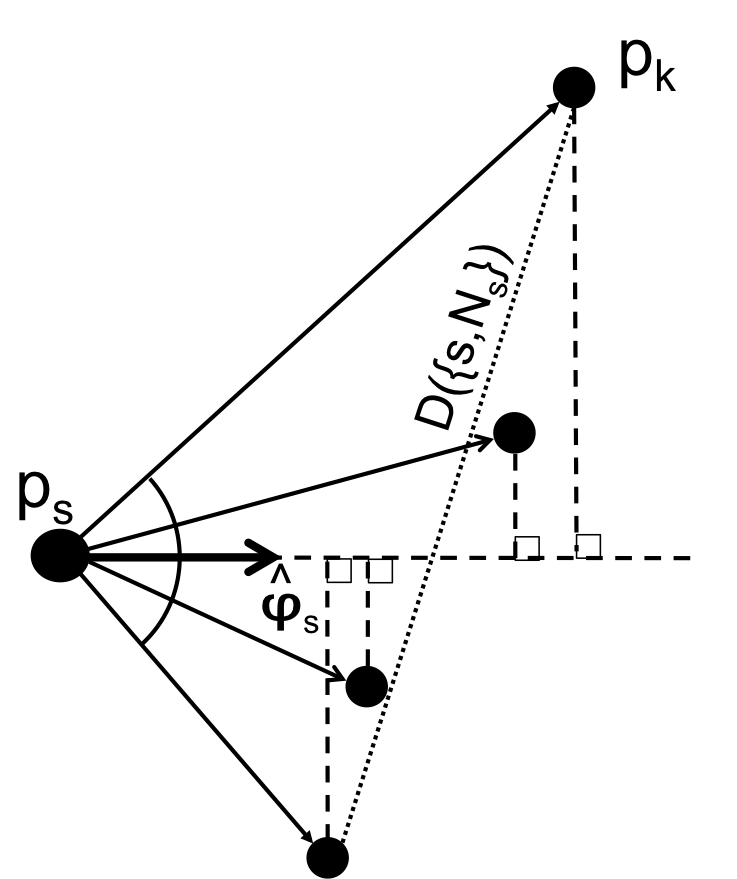}
    \caption{$\hat{\varphi}_s$ is the bisector unit vector of the sector created by the neighbours' positions of agent $s$}
   \label{BisectorOfSharpestAngle}
\end{figure}

where\\

\begin{equation}
	\begin{array}{l}  	
		a_j =  \left\{
					\begin{array}{ll}
						1  &  B_s(t) = \emptyset\\
						\\
						\frac{1}{\|p_j(t)-p_s(t)\|}  & B_s(t) \neq \emptyset  \quad and \quad (p_j(t) = p_{L_s}  \quad or \quad  p_j(t) = p_{R_s})\\
						\\
						0 & o.w
					\end{array}
				\right.
	\end{array}
\end{equation}



Let us first examine the simple case where $B_s(t) \neq \emptyset$. In this case all terms in (\ref{EqBisectorOfSharpestAngle}) are zero except two strictly positive terms which correspond to $L_s(t)$ and $R_s(t)$ the extremal left and right agents in the set $B_s(t)$:
$$  a_{L_s}\hat{\varphi}_s^\intercal(p_{L_s} - p_s) =  \hat{\varphi}_s^\intercal\frac{p_{L_s} - p_s}{\|p_{L_s} - p_s\|} \ge cos(\varphi_*/2) > 0$$ 
and
$$  a_{R_s}\hat{\varphi}_s^\intercal(p_{R_s} - p_s) =  \hat{\varphi}_s^\intercal\frac{p_{R_s} - p_s}{\|p_{R_s} - p_s\|} \ge cos(\varphi_*/2) > 0$$
hence in case $B_s(t) \neq \emptyset$ we have in (\ref{EqBisectorOfSharpestAngle}) that $\|\dot{p}_s(t)\| > 0$.\\

Now let us examine the other case where $B_s(t) = \emptyset$. Here all the coefficients $a_j$ equal $1$, and by geometry all the terms associated with the vectors $p_j(t)-p_s(t)$ have that the angle between $p_j(t)-p_s(t)$ and $\hat\varphi_s(t)$ is smaller than or equal to $\varphi_*/2$ (see figure \ref{BisectorOfSharpestAngle}), i.e.

$$\{\forall j \in N_s(t);\  \ \hat{\varphi}_s^\intercal(t)\frac{(p_j(t) - p_s(t))}{\|(p_j(t) - p_s(t))\|} > 0\}$$

Next we show that if the "diameter" of the set of points comprising agent $s$ and its neighbourhood, defined as the maximal distance between two points of the set, is bounded away from zero, then there exist at least one agent $j \in N_s$ whose distance from agent $s$ is strictly positive, yielding a strictly positive $\|\dot p_s\|$ in (\ref{EqBisectorOfSharpestAngle}).\\

We shall denote the diameter of a set of agents $A$ by $D(A) = \max\limits_{j,k \ \in A} \|p_j - p_k\|$. One of the following statements holds:

\begin{enumerate}
\item Agent $s$ is one of the extremal agents defining $D(\{s,N_s\})$.
\item Two other agents $i$ and $j$ are the extremal agents defining $D(\{s,N_s\})$, so that by the triangle inequality 
$$  \max \{ \| p_s - p_i \|,\| p_s- p_j \| \} \ge  \frac{D(\{s,N_s\})}{2} $$
\end{enumerate}

If $D(\{s,N_s\})$ has a strictly positive value, then there is at least one neighbour $k$ which is located at a strictly positive distance from agent $s$, i.e.:
$$ \{ \exists k \in \{s,N_s\} : \;  \| p_s - p_k \|  \ge  \frac{D(\{s,N_s\})}{2} \} $$



And what if $D(\{s,N_s\})$ tends to become infinitesimal? Since the initial constellation of the system is connected (but not yet fully connected by assumption), and by Lemma \ref{lemma55_2_1} we have that, if it once was connected, it stays connected forever, there must exist an agent $k \in N_s$ connected to an agent $j \not\in N_s$. In this case the distance between agents $j$ and $k$ is clearly very close to $V$ (i.e. $\| p_j - p_k \| \simeq V$). Hence, agent $j$ has a positive and bounded away from zero velocity, and instantly $D(\{s,N_s\})$ will immediately assume a strictly positive value. Hence in case $B_s(t) = \emptyset$ we also have in (\ref{EqBisectorOfSharpestAngle}) that $\|\dot{p}_s(t)\| > 0$.\\

We proved  that the velocity of agent $s$ is positive and bounded away from zero, and it moves into the convex-hull or along its perimeter, while all the other agents cannot move outside the convex-hull.\\

Using the observation above, we can give a formal proof that the system gathers, using the perimeter of the convex-hull as a Lyapunov function.\\

Define $\mathcal{L}(P(t))$ as the perimeter of $CH(P(t))$ and $l_i(t)$ as the current length of the convex-hull side connecting corners $i$ and $(i+1)mod(m)$ where $m$ is the current number of corners in the convex-hull.\\

Let $\varphi_i(t)$ be the angle of corner $i$ of $CH(P(t))$, let $\alpha_i(t)$ denote the direction of motion of the agent located at corner $i$ (relative to the direction of side $l_i(t)$ from $p_i(t)$ to $p_{i+1}(t)$), and let $v_i(t)=\|\dot p_i(t)\|$ be a scalar denoting the speed of the agent located at corner $i$ (as shown in Figure \ref{CHshrink_S5}).\\
\begin{figure}[h!]
\captionsetup{width=0.8\textwidth}
  \centering
    \includegraphics[width=100mm]{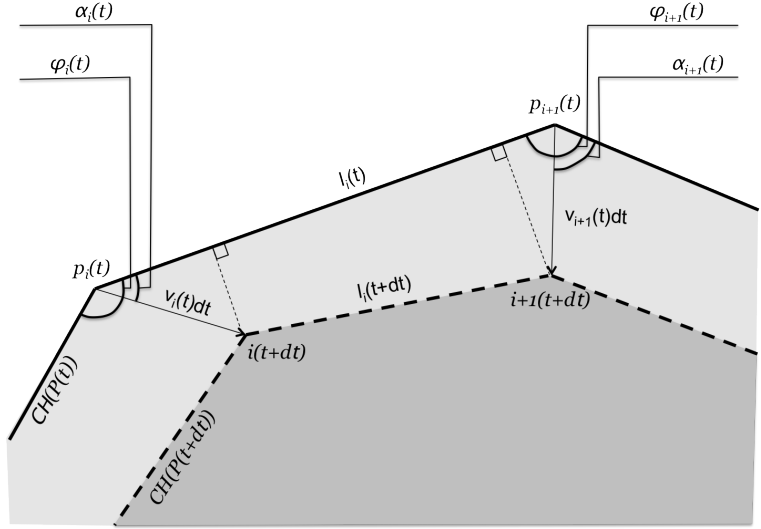}
     \caption{Convex-hull shrinkage.}
     \label{CHshrink_S5}
\end{figure}

By geometry, we have that
$$\dot l_i(t) = \left( \frac{p_{i+1}(t) - p_i(t)}{\|p_{i+1}(t) - p_i(t)\|}\right)^\intercal \left(\dot p_{i+1}(t) - \dot p_i(t) \right) = $$ 
$$=-(v_i(t) \cos \alpha_i(t) + v_{i+1}(t)\cos (\varphi_{i+1} (t) -\alpha_{i+1}(t)))$$
hence,
$$\dot{\mathcal{L}}(P(t))=-\sum\limits_{i=1}^{m} (v_i(t) \cos \alpha_i(t) + v_{i+1}(t)\cos (\varphi_{i+1} (t) -\alpha_{i+1}(t))) =$$
$$=-\sum\limits_{i=1}^{m} v_i(t)(\cos \alpha_i(t)+\cos (\varphi_{i} (t) -\alpha_{i}(t)))=$$
$$= -2\sum\limits_{i=1}^{m} v_i(t) \cos(\frac{\varphi_{i} (t)}{2}) \cos(\frac{\varphi_{i}(t) - 2\alpha_{i}(t)}{2}) $$

Since $v_i(t)$ represents the speed of movement of an agent, its value cannot be negative, and since $0 \le \alpha_i(t) \leq \varphi_i(t) \le \pi$, we have that both cosines above are positive. Therefore we have that $\dot{\mathcal{L}}(P(t)) \leq 0$, i.e. the perimeter of the convex-hull decreases.\\

Let $v_s(t)$, $\varphi_s(t)$ and $\alpha_s(t)$ be the relevant values associated with agent $s$.
Since $v_s(t) = \| \dot{p}_s(t) \|$ is positive and bounded away from zero, and since we have that $0 \le \alpha_{s}(t) \leq \varphi_s(t) \leq \pi(1-2/n)$, we have:
$$\dot{\mathcal{L}}(P(t)) \leq - v_s(t) \cos(\frac{\varphi_{s} (t)}{2}) \cos(\frac{\varphi_{s}(t) - 2\alpha_{s}(t)}{2}) $$

Therefore, while not all agents are within a range of $V-\delta$ from each other, $\dot{\mathcal{L}}(P(t))$ is strictly negative and bounded away from zero, and therefore the diameter of $CH(P(t))$ will reach $V-\delta$ in finite time.

\end{proof}

In summary, we proved that for system $\mathcal{S}_5$, any initial constellation with connected topology necessarily yields a complete graph where all the agents are within the range $V-\delta$ from each other, in finite time, and from this state the system asymptotically converges to a point, as claimed in Theorem \ref{Theorem55_2}.\\
\end{proof} 

\subsection{Discrete Time Dynamics (system $\mathcal{S}_6$)}

In this section we present the gathering process based on a dynamic rule proposed by Ando \textit{et.al} \cite{ando1999}. We survey below their results and often simplify the original proofs.\\

The system discussed here is a discrete time dynamic system which aims to gather the agents based on relative position sensing and finite visibility horizon $V$ in discrete time.\\

We assume that each agent moves according to the following dynamic law:\\

\begin{equation} \label{eq:Dynamics66}
p_i(k+1) = p_i(k) + Step_i(k) \frac{c_i(k) - p_i(k)}{\|c_i(k) - p_i(k)\|}
\end{equation}\\
where $p_i(k)$ is the position of agent $i$ at time-step $k$, and $c_i(k)$ and $Step_i(k)$ are defined below.\\
 
\begin{equation} \label{Move}
Step_i(k) = \min\{\sigma, \; Goal_i(k), \; Limit_i(k)\}
\end{equation}
is the step-size of agent $i$ at time-step $k$, which as seen in (\ref{Move}) is limited by the smallest of three factors:\\ \\
(Factor 1) The step-size limit - $\sigma$:\\ \\
The step-size limit $\sigma$ represents a physical constraint on the agents, i.e. the maximal step-size an agent can move in one time unit.\\ \\ 
(Factor 2) The current distance to goal - $Goal_i(k)$:\\ \\ 
\begin{equation} \label{Goal}
Goal_i(k)=\|p_i(k) - c_i(k)\|
\end{equation}
is the distance of agent $i$ from the point $c_i(k)$ at time-step $k$, where $c_i(k)$ is the center of circle $C_i(k)$, defined as the smallest circle enclosing agent $i$ and all its neighbours (see Figure \ref{FigAndo1}).
\begin{figure}[h!]
\captionsetup{width=0.8\textwidth}
  \centering
    \includegraphics[width=80mm]{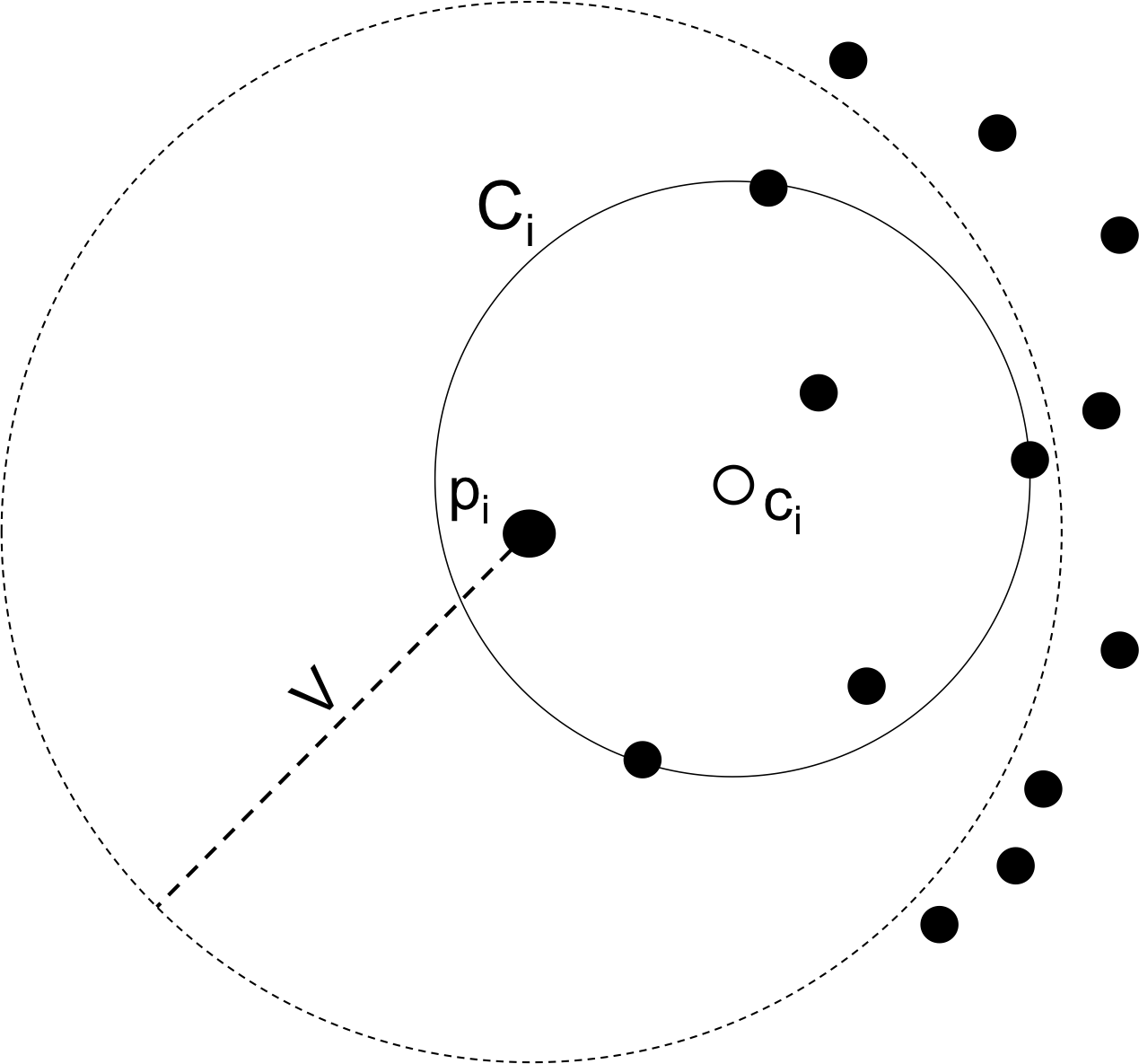}
      \caption{The dashed circle is the border line of agent $i$'s visible area, and the solid circle is the current smallest enclosing circle $C_i$ of agent $i$ and its neighbours $N_i$.}
        \label{FigAndo1}
\end{figure}\\ \\
(Factor 3) The "don't lose friends" constraint - $Limit_i(k)$:\\ \\ 
 
\begin{equation} \label{Limit}
Limit_i(k) = \min\limits_{j \in N_i}\{Limit_{ij}(k)\}
\end{equation}\\ 
is the maximal step-size allowed for agent $i$ towards $c_i(k)$ in order not to lose visibility to \textit{any} of its existing neighbours. This is defined by placing a circle of radius $V/2$ at $m_{ij}$, the midpoint between $p_i(k)$ and $p_j(k)$, limiting the area where agents $i$ and $j$ are currently allowed to move without losing visibility with each other (see Figure \ref{FigAndo3} (a)). By geometry we have that this constraint is satisfied by setting $Limit_i(k)$ to be the smallest value from the set 
$$\left\{Limit_{ij}(k)_{\substack {j \in N_i(k)}} = \frac{l_{ij}(k)}{2} cos(\theta_{ij}(k)) + \sqrt{(\frac{V}{2})^2 - (\frac{l_{ij}(k)}{2})^2sin^2(\theta_{ij}(k))}\right\}$$
where $Limit_{ij}(k)$ is the maximal step-size allowed for agent $i$ in order not to lose its neighbour $j$. The value of $l_{ij}(k)=\|p_j(k)-p_i(k)\|$ is the distance between agents $i$ and agent $j$, and $\theta_{ij}(k)$ is the angle $\angle{c_ip_ip_j}$.\\

\begin{figure}[h!]
\captionsetup{width=0.8\textwidth}
  \centering
    \includegraphics[width=100mm]{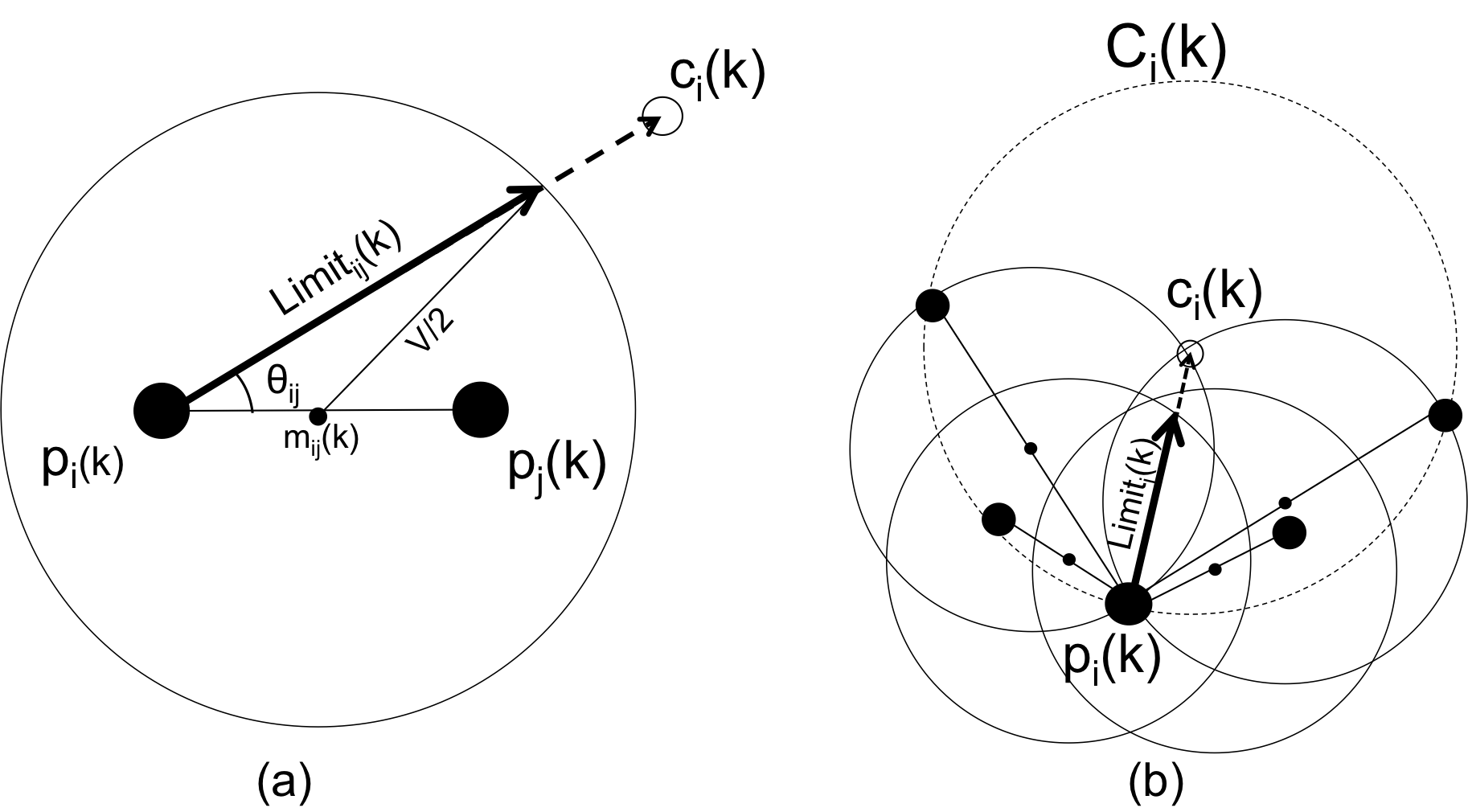}
      \caption{$Limit _{ij}(k)$ and $Limit_i(k)$:\\
      (a) $Limit_{ij}(k)$ - maximum distance agent $i$ can move towards $c_i(k)$ without leaving  a circle of radius $V/2$ centered at $m_{ij}(k)$, the average position of $p_i$ and $p_j$ at time-step $k$.\\
      (b) $Limit_i(k) = \min\limits_{j \in N_i}\{Limit_{ij}(k)\}$\\} 
          \label{FigAndo3}
\end{figure}

Hence we have that the motion law (\ref{eq:Dynamics66}) causes every agent $i$ to move toward $c_i(k)$, the center of the current circle $C_i(k)$, a step which is limited by the minimum of:

\begin{enumerate}
\item The agent's step-size limit $\sigma$.
\item The agent's distance from $c_i(k)$, the center of $C_i(k)$.
\item The intersection of all the disks of a radius $V/2$ centered at the average positions of agent $i$ and each one of the other agents within its visibility range (see Figure \ref{FigAndo3}(b)).\\
\end{enumerate}

\begin{theorem} \label{Theorem66}
For any initial constellation corresponding to a connected neighborhood graph, all the agents of system $\mathcal{S}_6$ gather to a point within finite number of time-steps.
\end{theorem}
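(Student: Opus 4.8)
The plan is to mirror the four-step strategy used for system $\mathcal{S}_5$ (Theorem \ref{Theorem55_2}), adapting each step to the discrete smallest-enclosing-circle dynamics of $\mathcal{S}_6$. First I would establish that \textbf{neighbours remain neighbours}, which is exactly the role of the factor $Limit_i(k)$ in (\ref{Move}). The disc of radius $V/2$ centred at the midpoint $m_{ij}$ is precisely the locus of points whose mutual distance stays below $V$, so if both $i$ and $j$ move towards their respective centres by amounts not exceeding $Limit_{ij}(k)$ and $Limit_{ji}(k)$ they cannot separate beyond $V$. Since every agent's step is capped by $Limit_i(k)=\min_{j\in N_i}Limit_{ij}(k)$, no existing edge is ever lost. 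Hence the visibility graph only gains edges; as there are at most $\binom{n}{2}$ edges, the topology changes finitely often and stabilises to a fixed connected graph $G^\ast$ after some finite time $K$, with connectivity preserved throughout.

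Second I would show the \textbf{convex hull never expands}: since $c_i(k)$ is the centre of the smallest circle enclosing $\{p_i(k)\}\cup N_i$, it lies in the convex hull of those points and hence inside $CH(P(k))$. As each agent moves along the segment from $p_i(k)$ towards $c_i(k)$, it stays inside $CH(P(k))$, giving $CH(P(k+1))\subseteq CH(P(k))$ exactly as in Lemma \ref{ChShrinks}.

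Third I would argue that $G^\ast$ is in fact \emph{complete}. Suppose not; then after $K$ the neighbour sets are fixed and connected but some pair is never within $V$. Taking the sharpest-corner agent $s$ of $CH(P(k))$, whose interior angle is bounded by $\varphi_\ast=\pi(1-2/n)$, one shows as for $\mathcal{S}_5$ that $p_s$ moves strictly into the hull: the direction $c_s(k)-p_s(k)$ makes an angle at most $\varphi_\ast/2$ with the corner bisector, so using the perimeter of $CH(P(k))$ as a Lyapunov function (the discrete analogue of (\ref{CH_Conv})) the perimeter strictly decreases while the diameter is positive. With fixed connected topology this forces the diameter to zero, so \emph{every} pairwise distance, including that of the supposedly non-adjacent pair, tends below $V$ — contradicting the stabilisation of the topology. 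Hence $G^\ast$ is complete.

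Finally, once the graph is complete all agents share the same enclosing circle, so $c_i(k)=c(k)$ for every $i$. Here I would verify that the $Limit_i(k)$ factor no longer binds: when every agent heads for the common centre $c(k)$, pairwise distances do not increase, so no neighbour is in danger and the step reduces to $\min\{\sigma,\ \|p_i(k)-c(k)\|\}$. Each agent thus either jumps exactly onto $c(k)$ (if within $\sigma$) or advances a full $\sigma$ towards it, so all agents lie within the disc of radius $r(k)-\sigma$ about $c(k)$ and the enclosing radius obeys $r(k+1)\le r(k)-\sigma$. The radius therefore reaches zero and the agents collocate after at most $\lceil r(K)/\sigma\rceil$ further steps, giving finite-time gathering. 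The main obstacle is precisely this last point: because the step-size is throttled by $\min\{\sigma,\ Goal_i,\ Limit_i\}$, one must rule out the step shrinking to zero before gathering. Showing that $Limit_i$ stops binding once the motion is towards a common centre — so that progress is at least $\sigma$ per step — is what upgrades mere asymptotic convergence to the claimed finite-time bound.
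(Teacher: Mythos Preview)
Your four-step decomposition matches the paper's exactly, and steps~1 and~2 are correctly sketched. The genuine gap is in step~3, and it is the same step-size issue you flag only for step~4. You argue that the sharpest-corner agent $s$ moves ``strictly into the hull'' and hence the perimeter ``strictly decreases''; but strict decrease alone does not force the diameter to zero --- the perimeter could decrease by amounts tending to $0$ and stall at a positive value. What you need is that the \emph{step length} $Step_s(k)=\min\{\sigma,\,Goal_s(k),\,Limit_s(k)\}$ is bounded below by a positive constant while the graph is not complete. This is the technical heart of the paper's proof: it shows separately that (i) if the radius of $C_s(k)$ is bounded away from zero then $Goal_s(k)$ is too (a geometric argument using the sharpest-corner bound $\varphi_\ast$ and the location of the enclosing-circle centre), (ii) if $Goal_s(k)$ is bounded away from zero then so is $Limit_s(k)$ (a non-obvious fact: a neighbour $j$ can make $Limit_{ij}(k)$ tiny only if $p_j$ is nearly at distance $V$ and nearly orthogonal to $c_s-p_s$, but then $p_j$ would lie outside $C_s$), and (iii) if the radius of $C_s(k)$ happens to be infinitesimal while the graph is not complete, then some neighbour of $s$ sees an agent at distance $\approx V$ and jumps away, so at the \emph{next} step the radius is again bounded below. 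None of this machinery appears in your sketch.

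Your step~4 also has a flaw: the claim that ``when every agent heads for the common centre $c(k)$, pairwise distances do not increase'' is false when the step sizes differ. Take $c=(0,0)$, $p_i=(1,0)$, $p_j=(1,\varepsilon)$; if $i$ moves to $(\tfrac12,0)$ while $j$ barely moves, their distance jumps from $\varepsilon$ to roughly $\tfrac12$. The paper's argument is different: it first uses the same $Goal\Rightarrow Limit$ bound from step~3 to show each agent advances a uniformly positive amount toward the common centre, so the enclosing radius strictly decreases; once that radius drops to $V/2$, every midpoint $m_{ij}$ lies within $V/2$ of $c(k)$, hence $c(k)$ is inside every constraint disc and $Limit_i(k)\ge Goal_i(k)$ from then on, after which your $\min\{\sigma,\|p_i-c\|\}$ formula and the $\lceil r/\sigma\rceil$ count are correct.
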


\begin{proof} 
We shall prove Theorem \ref{Theorem66} using four lemmas related to system $\mathcal{S}_6$:
\begin{enumerate}
\item \textbf{"Never lose a neighbour":} Neighbouring agents will remain neighbours forever (Lemma \ref{NLF}).
\item \textbf{"The group stays close":} The convex-hull of system $\mathcal{S}_6$ is always contained in its previous ones, i.e. $CH(P(k+1)) \subseteq CH(P(k))$ (Lemma \ref{Lemma_ChNoGrows6}).\\
\item \textbf{"A clique gathers"}: If all agents are within the visibility range of all the other agents, so that the agents' interconnection topology is a complete graph, they will converge to a point within finite number of time-steps (Lemma \ref{GFG}).
\item \textbf{"Connected agents evolves to a clique"}: Any initial constellation with a connected neighborhood graph necessarily evolves to a complete graph within finite number of time-steps (Lemma \ref{CHshrinksLemma}).
\end{enumerate}
Using these lemmas we prove that system $\mathcal{S}_6$ gathers to a point within finite number of time-steps as claimed.\\

First we present three preliminary geometrical observations:\\
\begin{proposition} \label{PropCenerOfCircle}
Let $C$ be the smallest enclosing circle of $n>1$ points in $\mathbb{R}^2$. If only two points are located on the circumference of $C$, then the center of $C$ is located at the average position of these two points. Otherwise, there exist at least one set of three points located on the circumference of $C$, which creates an acute (or right) triangle that contains the center of the circle $C$. 
\end{proposition}

\begin{proof}
For the proof of this proposition see e.g. Reference \cite{elzinga1972geometrical}.\\
\end{proof}

\begin{proposition} \label{MinDistanceFromC}
Let $a$ and $b$ be two points on a circle of radius $R$ centered at $o$, which divide its circumference into a small arc and a big arc. Let point $c$ be the midpoint of the segment $ab$. Denote by $x$ any point on the smaller arc $ab$, and by $\beta$ the associated angle $\angle{axb}$. The minimal distance between the small arc to point $c$ is $\|a-b\|/(2\tan(\beta/2))$ being achieved when points $x$, $c$, and $o$ are collinear.
\end{proposition}

\begin{proof}
By the triangle inequality, any point $x$ described above satisfies $ \|x-c\| + \|c-o\| \ge R $, where $ \|x-c\| + \|c-o\| = R $ if and only if $\bar{xc}$ is the bisector of $\beta$, i.e. point $x$ is located so that $\triangle axb$ is an isosceles triangle (see $x'$ in Figure \ref{FigAndoProp2}).
\end{proof}     
\begin{figure}[!h]
\captionsetup{width=0.8\textwidth}
  \centering
  \includegraphics[width=60mm]{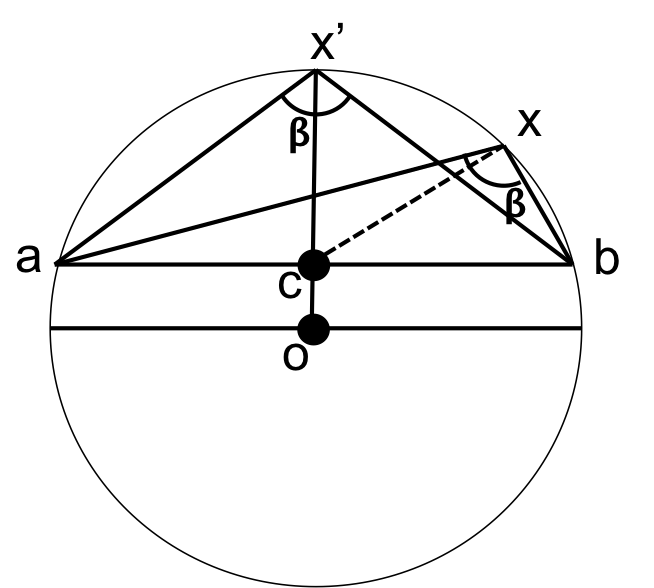}
    \caption{Proof of Proposition \ref{MinDistanceFromC}. Point $x'$ is located on the small arc between points $a$ and $b$, and has the minimal distance from point $c$. Note that $\beta  = \angle axb$ for any point $x$ on the smaller arc of the circle.}
         \label{FigAndoProp2}
\end{figure}


\begin{proposition} \label{CHinCH}
If a convex polygon $P$ is "strictly contained" in another polygon $G$ (so that the area of $P$ is strictly smaller than the area of $G$), then the length of the perimeter of $P$ is strictly smaller than the length of the perimeter of $G$.
\end{proposition}

\begin{figure}[h!]
\captionsetup{width=0.8\textwidth}
  \centering
  \includegraphics[width=50mm]{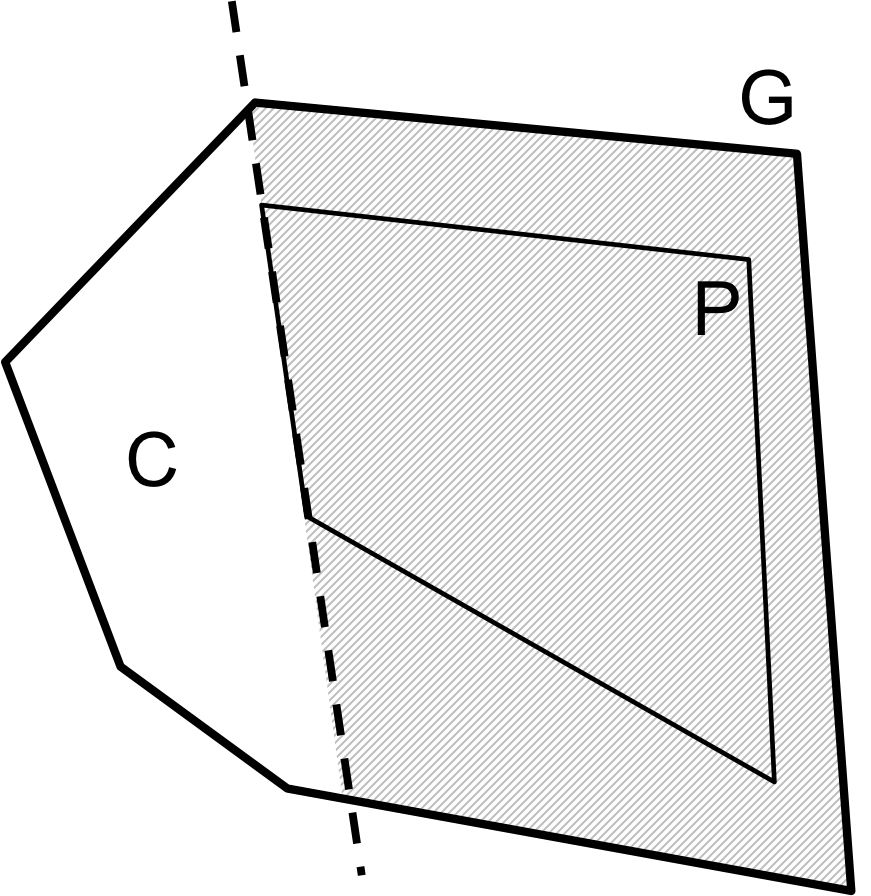}
    \caption{Intermediate polygons from $G$ to $P$. Polygon $P$ is "strictly contained" in polygon $G$. The intermediate polygon area is dashed, and the currently removed polygon is white and marked $C$.}
         \label{PolygonLineIntersecionAriel}
\end{figure}

\begin{proof}
In order to prove proposition \ref{CHinCH} we shall analyze the lengths of some intermediate polygons. We suggest a procedure to reach a convex polygon $P$ contained in another polygon $G$ as follows: Cut polygon $G$ into two polygons along a straight line defined by any side of the internal polygon $P$, mark the new polygon containing polygon $P$ as an intermediate polygon and remove the other polygon (as presented in Figure \ref{PolygonLineIntersecionAriel} by a dashed area and by $C$ respectively). Repeat this cutting procedure over and over again, beginning each iteration from the previous intermediate polygon along a straight line defined by a new side of the original polygon $P$, until only polygon $P$ remains. Since the polygon $P$ is convex, such a process is always doable.\\

Denote by $C$ the polygon removed obtained at an arbitrary intermediate step from $G$ to $P$. By the triangle inequality we have that the length of the side of the intermediate polygon $C$ defined by the current cutting line, is either equal to the sum of lengths of all other sides of $C$ (in case the area of $C$ equals zero), or smaller than the sum of lengths of all other sides of $C$ (in case $C$ has a positive area). Therefore the length of the perimeter of each intermediate polygon is either equal to or smaller than the perimeter of the previous intermediate polygon. But since the area of $P$ is smaller than the area of $G$ (being "strictly contained" in it), at least one intermediate $C$ must have a significant area, hence the length of the perimeter of the intermediate polygon associated with that intermediate $C$ will be smaller than its previous polygon. Therefore the length of the perimeter of polygon $P$ is strictly smaller than the length of the perimeter of polygon $G$.\\ 

\end{proof}

Using these propositions let us now prove Theorem \ref{Theorem66}:

\begin{lemma}  \label{NLF}
In system $\mathcal{S}_6$, if agents $i$ and $j$ are visible to each other they will forever remain visible to each other.
\end{lemma}

\begin{proof}
The step-size of any two visible agents $i$ and $j$ is limited by a disk of radius $V/2$ centered in their average position (see Figure \ref{FigAndo3}), therefore the distance between them at the next time-step, no matter where they moved inside that disc, will never exceed $V$.\\
\end{proof}


\begin{lemma} \label{Lemma_ChNoGrows6}
Let  $CH(P(k))$ be the convex-hull of all the agents in system $\mathcal{S}_6$ at time-step $k$. Then $$CH(P(k+1)) \subseteq CH(P(k))$$
\end{lemma}

\begin{proof} 
Using Proposition \ref{PropCenerOfCircle}, the center point $c_i(k)$ corresponding with the enclosing circle $C_i(k)$ of agent $i$ and its neigbours, is located in $CH(P(k))$. Since each agent $i$ is located either on or inside $CH(P(k))$ and moves towards its associated point $c_i(k)$ which is obviously inside $C_i(k)$, its next position will necessarily be either inside $CH(P(k))$ or along its perimeter. Therefore the convex-hull $CH(P(k+1))$ is contained in $CH(P(k))$.\\
\end{proof}

Note that Lemma \ref{Lemma_ChNoGrows6} only proves that $CH(P(k))$ does not expand. To prove that it necessarily shrinks to a single point within finite number of time-steps, we first verify this for complete graph topology, using the geometric Claim \ref{LimitFinitepositive} and Lemma \ref{GFG} (due to Ando \textit{et.al} \cite{ando1999}), then we prove that a connected (but not  fully connected) topology will necessarily become fully connected.\\


\begin{claim} \label{LimitFinitepositive}
If $Goal_i(k)$ has a strictly positive value, then necessarily $Limit_i(k)$ has a bounded bellow by a positive constant.
\end{claim} 

\begin{proof}
Assume $Goal_i(k)$ has a strictly positive value, but for some $j$ $Limit_{ij}(k)$ is infinitesimal. This can happen only if the distance between $p_i(k)$ and $p_j(k)$ is nearly equal to $V$, and the segment $[p_i(k)c_i(k)]$ intersects the circle of radius $V/2$ (centered at $m_{ij}(k)=(p_i(k)+p_j(k))/2$) at a point $\epsilon$-distanced from $p_i(k)$, see Figure \ref{FigAndoLimits}.\\

Note that the distance from $p_i(k)$ to $c_i(k)$ must be smaller than or equal to $V$ due to the definition of $C_i(k)$ as the \textit{minimal} enclosing circle of $p_i(k)$ and its neighbours.\\
\begin{figure}[h!]
\captionsetup{width=0.8\textwidth}
  \centering
    \includegraphics[width=65mm]{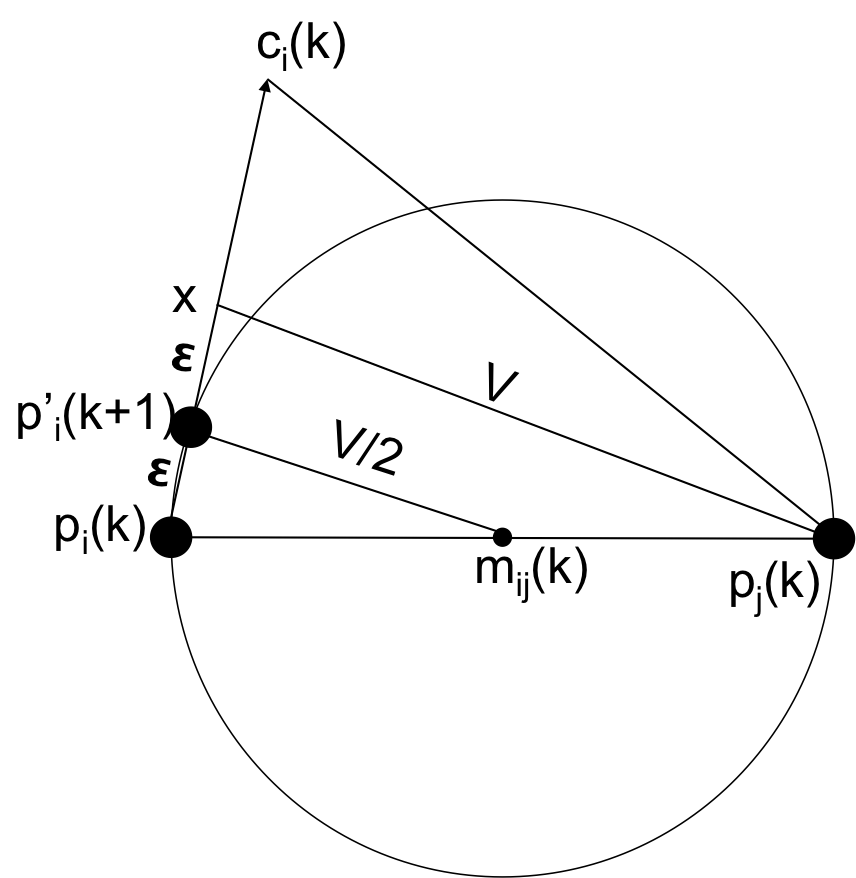}
     \caption{Geometry explaining the proof of  Claim \ref{LimitFinitepositive}.}
         \label{FigAndoLimits}
\end{figure}

Assume $Limit_{ij}(k) << V$ is infinitesimal, then there exists an agent $j$ associated with a circle of a radius $V/2$ centered at $m_{ij}$ (the average position of agents $i$ and $j$) so that $p'_i(k+1) = p_i(k) + \epsilon\frac{c_i(k) - p_i(k)}{\|c_i(k) - p_i(k)\|}$ would have been on the circumference of that circle, where $\epsilon$ has a positive infinitesimal value (see Figure \ref{FigAndoLimits}). Let $x$ be the position $p_i(k) + 2\epsilon\frac{c_i(k) - p_i(k)}{\|c_i(k) - p_i(k)\|}$. By geometry, $\| p_j(k) - x \| \cong V$ and by geometry the angle $\angle p_jxc_i \ge \pi/2$. This yields $\| p_j(k) - c_i(k) \| > V$ which is not possible since the maximal radius of $C_i(k)$ is $V$.\\

Therefore if $Goal_i(k)$ has a strictly positive value, then $Limit_{ij}(k)$ necessarily has a strictly positive and bounded away from zero by a constant (independent of $k$) value for all $j \in N_i$, hence by (\ref{Limit}) we have that $Limit_i(k)$ has a strictly positive and bounded away from zero by a constant value as well.

\end{proof} 

\begin{lemma}  \label{GFG}
If all the agents of system $\mathcal{S}_6$ are within the visibility range of each other, they will converge to a point within finite number of time-steps.\\
\end{lemma}

\begin{proof} If all the agents are visible to each other, all circles $C_i(k)$ coincide, hence their centers are identical. Denote this circle by $C(k)$ and its center by $c(k)$ i.e. $\{\forall i \; : \; c_i(k) \triangleq c(k)\}$. By Claim \ref{LimitFinitepositive} we know that in this case all the agents jump a step of a strictly positive size towards point $c(k)$, and therefore the radius of $C(k)$ decreases by a strictly positive value at each time-step.\\
\end{proof}

Notice that when the radius of $C(k)$ is less than or equal to $V/2$, all points $m_{ij}$ (the average position of any two agents $i$ and $j$) are inside $C(k)$ (see Fig. \ref{AndoCompleteFig}), i.e $\| m_{ij} - c(k)\|  \le V/2$, hence point $c(k)$ is necessarily inside the circle that defines $Limit_{ij}(k)$ . Therefore, for all agents $i$, $Goal_i(k) \le Limit_{i}(k)$ for any agent $i$. Hence, $Limit_{i}(k)$ does not play a role defining the step-size of agent $i$, and we have that

$$Step_i (k) = \min\{\sigma, \; Goal_i(k)\}$$

Since all the agents share the same goal point $c_i$, and move with the same finite step-size $\sigma$ (or $Goal_i(k)$ in case $Goal_i(k)<\sigma$) toward the same point $c_i$, the radius of $C(k)$ decreases in each time-step by $\sigma$ (or $Goal_i(k)$). Therefore all agents converge to a point within at most $V/(2\sigma)$ time-steps. This is true since we deal with synchronous dynamics, i.e all agents jump simultaneously.\\
\begin{figure}[h!]
\captionsetup{width=0.8\textwidth}
  \centering
    \includegraphics[width=65mm]{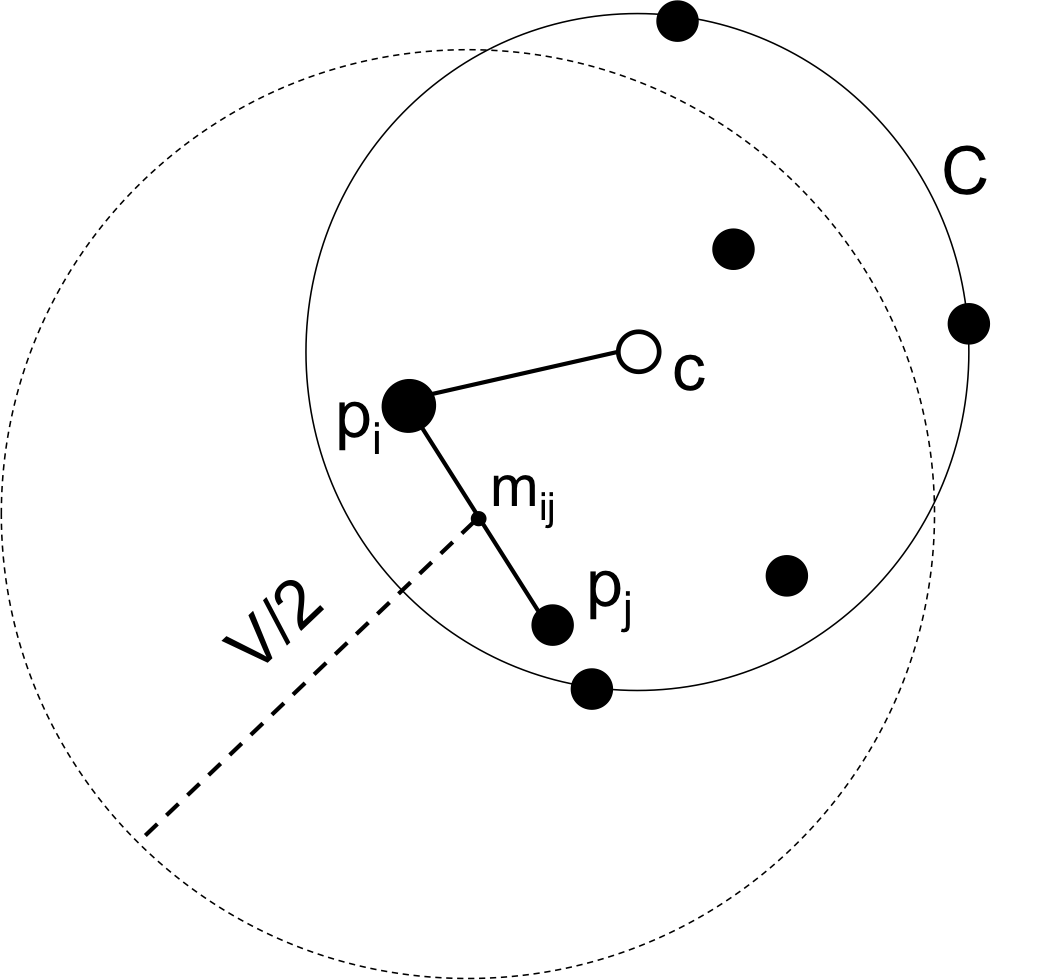}
     \caption{For a complete graph the solid circle is the smallest enclosing circle $C(k)$ of the agents. The dashed circle of radius $V/2$ represents the limit required for agent $i$ to maintain connectivity with agent $j$.}
         \label{AndoCompleteFig}
\end{figure}

We now discuss the case where the neighborhood graph topology is connected but not fully connected, showing it necessarily becomes fully connected within finite number of time-steps.  

\begin{lemma}\label{CHshrinksLemma}
The perimeter of the convex-hull of system $\mathcal{S}_6$ decreases until all agents gather to a fully connected constellation within finite number of time-steps.
\end{lemma}

\begin{proof}
To prove Lemma \ref{CHshrinksLemma} we shall use three claims. By Claim \ref{LimitFinitepositive} above we already have that for every agent $i$ if $Goal_i$ is bounded away from zero, then $Limit_i$ is bounded away from zero as well. Next we show that there is always at least one corner of the convex-hull occupied by an agent whose current or next associated $Goal$ is bounded away from zero (Claims \ref{Prop61} and \ref{GoalAlphaIsFinitePositive}). Therefore, the perimeter of the convex-hull necessarily decreases within finite number of time-steps, until the agents' constellation has a fully connected visibility graph.\\


Let us examine the dynamics of the agent denoted by $s$, the current sharpest corner of the convex-hull:\\
  
Let $Q(k)$ be the set of vertices that define $CH(P(k))$ the convex-hull of the system at time-step $k$, and $\varphi_i(k)$ be the associated angle of each such vertex $p_i(k) \in Q(k)$. By geometry, the sum of all $\varphi_i(k)$ is $\pi (|Q(k)|-2)$. Let $s$ be the agent located at $p_s(k)$, the current sharpest corner of the convex-hull, and let $\varphi_s(k)$ be the associated inner angle of that corner, so that $\varphi_s(k) \le \pi - \frac{2\pi}{|Q(k)|} \le \pi - \frac{2\pi}{n}=\varphi_*$.

\begin{claim} \label{Prop61}
Let $R(C_s(k))$ be the current radius of the smallest enclosing circle of $s$ and its neighbours. If $R(C_s(k))$ is bounded away from zero, then $Goal_{s}(k)$ is bounded away from zero as well.
\end{claim} 

\begin{proof}

If agent $s$ is currently located on the circumference of $C_s(k)$ then trivially
$$Goal_s(k) = R(C_s(k))$$
otherwise, referring to Figure \ref{AndoAlpha}, let points $a$ and $b$ be the intersection points of circle $C_s(k)$ and the rays directed from $p_s(k)$ to its neighbouring corners in the convex-hull $p_{s-1}(k)$ and $p_{s+1}(k)$ respectively. By Proposition \ref{PropCenerOfCircle} we have that only two geometrical cases may exist:\\ \\
(Case 1): Only two agents $i$ and $i'$ define $C_s(k)$ as they are located on two sides of its diameter, where then obviously $\angle isi' \ge \pi/2$ (since agent $s$ located inside $C_s(k)$ and $\varphi_s \ge  \angle isi'$)\\ \\
(Case 2): Three agents $i$, $j$ and $q$ define $C_s(k)$ as they are located on its perimeter, and they create an acute triangle that contains the center of $C_s(k)$. Here too, obviously, $\angle p_ip_sp_j \ge \pi/2$ since agent $s$ is located inside $C_s(k)$, and angle $\angle p_ip_sp_j$ contains at least a pair of points which define a diameter of $C_s(k)$ (See Figure \ref{AndoAlpha}).
Let $a'$ be the point diametrically opposed to $a$, so that the segment $aa'$ is a diameter of $C_s(k)$, then by geometry point $a'$ is located on the bigger arc $ab$, hence the following is true: 
$$\pi/2 \le \angle asa' \le \angle asb = \varphi_s(k) \le \varphi_*$$
By Proposition \ref{MinDistanceFromC}, the minimal distance between $p_s$ and $c_s(k)$ is equal to or greater than ${R(C_s(k))}/{\tan(\frac{\angle asa'}{2})}$. Therefore,
$$ Goal_s(k) \geq \frac{R(C_s(k))}{tan(\frac{\angle asa'}{2})} \geq \frac{R(C_s(k))}{tan(\frac{\varphi_*}{2})} $$
hence if $R(C_s(k))$ is bounded away from zero, $Goal_s(k)$ is bounded away from zero as well.\\
\end{proof}
\begin{figure}[h!]
\captionsetup{width=0.8\textwidth}
  \centering
    \includegraphics[width=58mm]{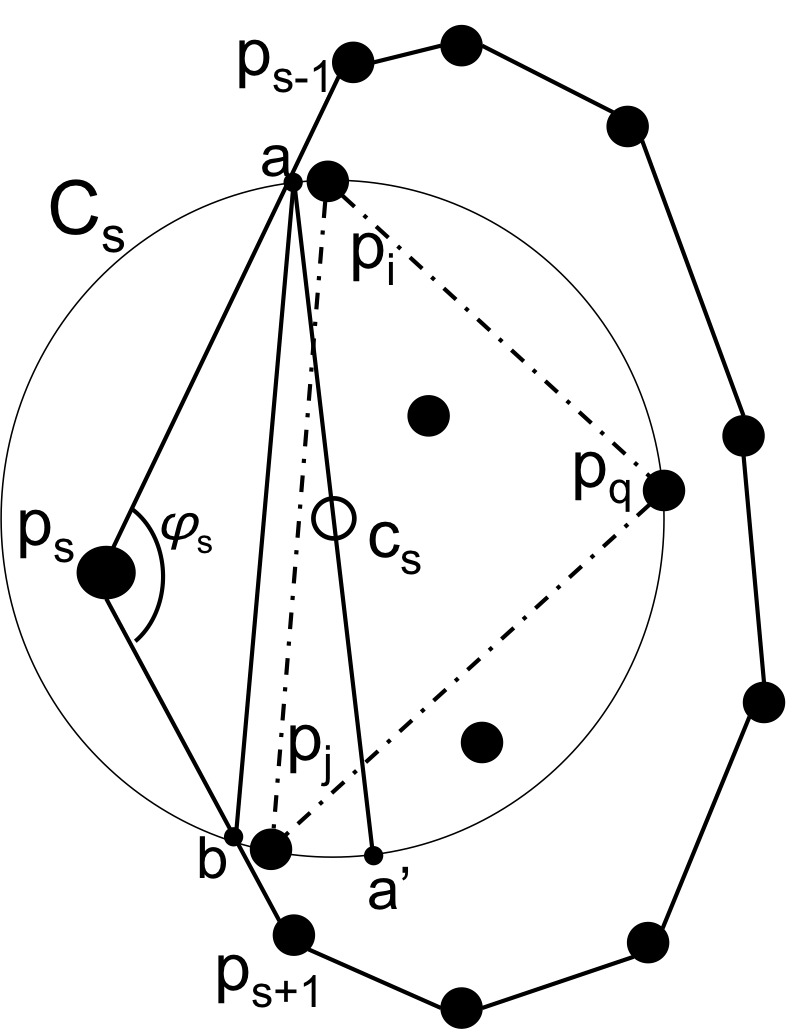}
     \caption{Convex-hull corners with agent $s$ occupying the sharpest corner; Circle $C_s$ is defined by agents $i$, $j$ ans $q$, and its center $c_s$ is confined in $\triangle p_ip_jp_q$; Points $a$ and $a'$ define the diameter of $C_s$ and are located on the bigger arc $ab$.}
         \label{AndoAlpha}
\end{figure}
And what if $R(C_s(k))$ has an infinitesimal value? In this case the following is true:\\

\begin{claim} \label{GoalAlphaIsFinitePositive}
While the neighbourhood graph topology of system $\mathcal{S}_6$ is connected but not fully connected, if at time-step $k$ the radius of $C_s(k)$ has an infinitesimal length $\epsilon$, then, at the next time-step, the radius of $C_s(k+1)$ will necessarily be bounded away from zero. 
\end{claim} 

\begin{proof}
If the radius of $C_s(k)$ is infinitesimal, and the agents interconnection topology is connected but not fully connected, then necessarily there is an agent $j$ in the neighbourhood of agent $s$ which is also connected to agent $i$, who is not in the neighbourhood of agent $s$. Therefore, the distance between agents $j$ and $i$ is between $V-\epsilon$ to $V$ (See Figure \ref{InfineighbourhoodOfs})\\ 

Since the distance between $p_j(k)$ and $p_s(k)$ is infinitesimal, we may consider that agent $j$ is located at the same location as agent $s$. But we know that the radius of $C_j(k)$ is greater than or equal to $(V-\epsilon)/2$, hence Claim \ref{Prop61} holds for agent $j$, and therefore it moves a step of a strictly positive length inside the system's convex-hull, while agent $s$ stays put.\\

\end{proof}

\begin{figure}[h!]
\captionsetup{width=0.8\textwidth}
  \centering
    \includegraphics[width=75mm]{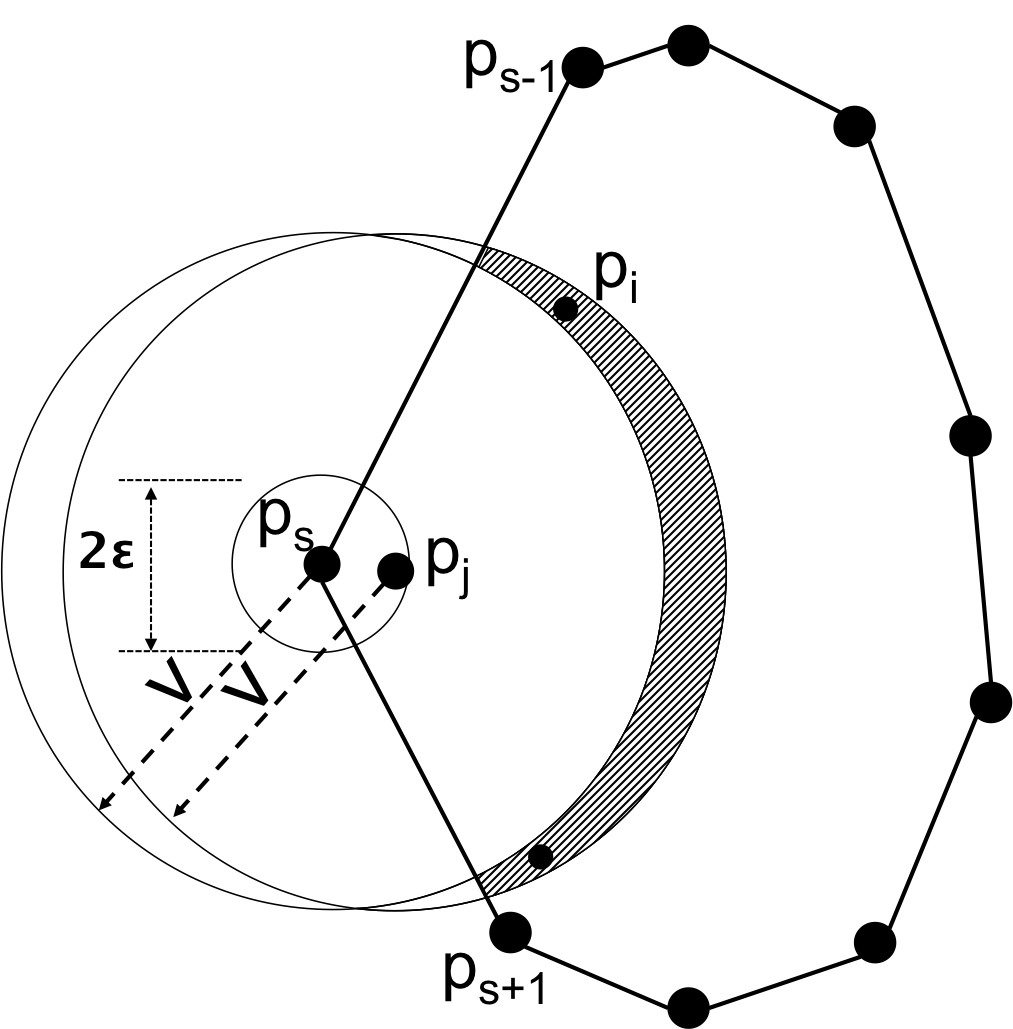}
     \caption{Proof of claim \ref{GoalAlphaIsFinitePositive}. The small circle represents the $\epsilon$-neighbourhood of agent $s$, the dashed region is area $J_N(k)$ which contains the neighbours of $j$ which are not neighbours of agent $s$.}
         \label{InfiNeighbourhoodOfs}
\end{figure}

Returning to the proof of Lemma \ref{CHshrinksLemma}, we have that if $R(C(k))$ is strictly positive, then by Claim \ref{Prop61} we have that $Goal_s(k)$ is strictly positive, or if currently $Goal_s(k)$ has an infinitesimal value, by Claim \ref{GoalAlphaIsFinitePositive} at the next time step $Goal_s(k+1)$ will be strictly positive. Using Claim \ref{LimitFinitepositive} and Lemma \ref{Lemma_ChNoGrows6}, agent $s$ and all the agents located within an infinitesimal distance from it jump a step of a strictly positive size at the current time-step or at the next one. Therefore, the current sharpest corner jumps a strictly positive distance inside the convex-hull at the current time-step or at the next.\\

If the area of $CH(P(k))$ is strictly positive, then, following its current sharpest corner (which was proved to move inside the convex-hull a strictly positive distance) and by Lemma \ref{Lemma_ChNoGrows6} (which proved that for any time step $k$ we have that $CH(P(k+1)) \subseteq CH(P(k))$), the area of $CH(P(k+1))$ and/or the area of $CH(P(k+2))$ is significantly smaller than the area of $CH(P(k))$. Hence, by Proposition \ref{CHinCH}, the length of the perimeter of $CH(P(k+1))$ and/or $CH(P(k+2))$ is significantly smaller than that of $CH(P(k))$.\\ \\And if all agents are collinear, so that $CH(P(k))$ is one dimensional with an area equal zero (yet a perimeter length bigger than zero), obviously the length of the perimeter of $CH(P(k+1))$ and/or $CH(P(k+2))$ is significantly smaller than that of $CH(P(k))$ since the sharpest corners are located at both ends of $CH(P(k))$). This proves Lemma \ref{CHshrinksLemma}\\

\end{proof}

In summary, we proved that the convex-hull of system $\mathcal{S}_6$ necessarily shrinks until all agents evolve to a constellation of points having a complete visibility graph within finite number of time-steps from any initial constellation of connected visibility topology (Lemma \ref{CHshrinksLemma}). We also proved that if the interconnection graph topology is complete, all the agents converge to a point within finite number of time-steps (Lemma \ref{GFG}). Therefore, for any initial constellation of connected topology all the agents converge to a point within finite number of time-steps as claimed in Theorem \ref{Theorem66}\\
\end{proof} 


\subsection{Discussion}



\textbf{Agents' trajectories:} Simulation results of the dynamics of system $\mathcal{S}_6$ for an arbitrary initial constellation is shown in Figure \ref{FigAndoCHSim}. Note that as long as the agents interconnection topology is not complete, the agents' trajectories are neither straight nor smooth in the sense of having significant discontinuities in the directions of motion of agents, then at some $k = \kappa$ when the topology becomes complete and all agents are contained in a disc of radius $V/2$, the trajectories of all agents become straight. This is explained by the fact that from  time step $\kappa$ on, all agents share the same smallest enclosing circle $C(k)$, hence the same center point $c(k)$ as their destination. Since all agents jump the same step-size ($\sigma$, or $Goal$ if $Goal$ is less than $\sigma$) at each time-step, from this time-step on the interior agents remain during their motion in the circle $C(k)$, and the exterior agents define circle $C(k)$ for all $k \ge \kappa$. Those agents jump exactly the same step-size ($\sigma$, or $Goal$ if $Goal$ is less than $\sigma$) towards $c(k)$ like all other agents, while the point $c(k)$ remains stationary, i.e. $c(k) \triangleq c(\kappa)$.\\

\begin{figure}[H]
\captionsetup{width=0.8\textwidth}
\centering
    \includegraphics[width=120mm]{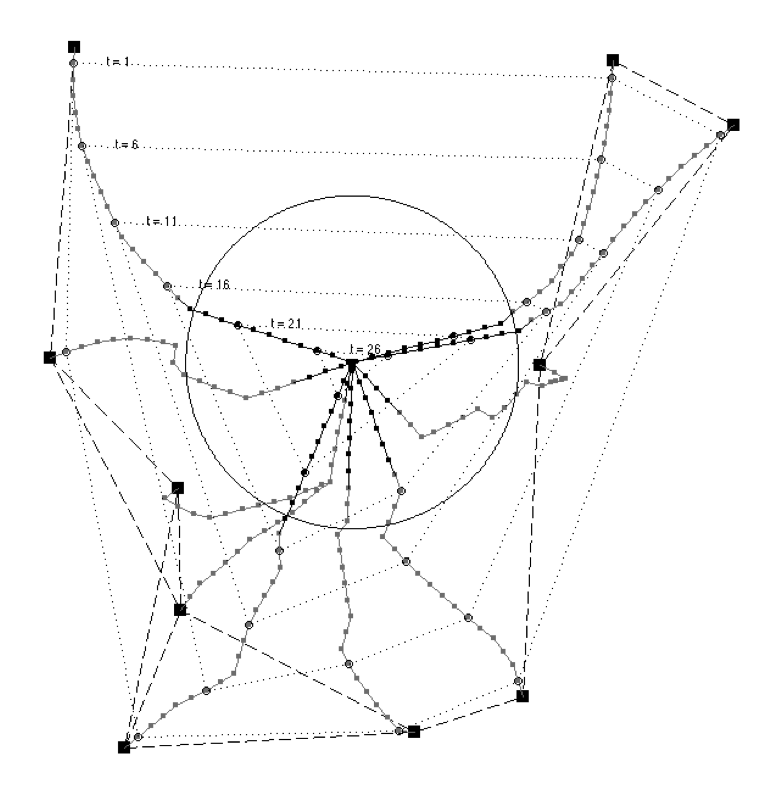}
    \caption{Simulation result of system $\mathcal{S}_6$ with 10 agents. The bold squares represent the initial configuration of the agents, and the dashed lines connecting them represent the initial agents interconnection topology. The small squares (here connected with short straight lines) represent the positions of the agents at each time step, simulating their trajectories. Once all the agents gathered into a circle of radius $V/2$ shone above, their interconnection topology becomes complete, causing each one of them to move along a straight line (in black).}
      \label{FigAndoCHSim}
\end{figure}


Notice that the assumption in system $\mathcal{S}_6$ is that all agents are capable of carrying out significant geometric computation at each step of the algorithm, while keeping the obliviousness of their behaviour (i.e. agents respond to the present situation only, while being, of course, identical and anonymous).\\ 

\newpage
\section{Finite Visibility, Bearing-only Sensing}

In this section we assume that every agent has information only about the \textit{relative bearing direction} to its neighbours within a visibility range $V$ (but cannot measure relative distances).

\subsection{Continuous Time Dynamics (system $\mathcal{S}_7$)}

First we present the gathering process based on bearing-only sensing with limited visibility in the continuous time framework proposed by Gordon \textit{et.al} in \cite{gordon2004}.\\

Assume each agent moves with a constant speed $\sigma$ unless it stops, according to the following dynamic law:

\begin{equation} \label{eq:Dynamics77}
 \dot p_i(t)=\left\{\begin{alignedat}{2}
    & \sigma\ \hat p_{i_{bisector}}(t), && \quad\psi_i(t)<\pi \\
    & 0, && \quad o.w.
  \end{alignedat}\right. 
 \end{equation}
where the unit-vector $\hat p_{i_{bisector}}(t)$ is defined via:
$$\hat p_{i_{bisector}}(t) =\frac{\hat p_{i_{R}}(t)+\hat p_{i_{L}}(t)}{\|\hat p_{i_{R}}(t)+\hat p_{i_{L}}(t)\|}$$
the points $p_{i_{R}}(t)$ and $p_{i_{L}}(t)$ being the positions of the extreme right and left agents defining the minimal angular sector containing all neighbours of agent $i$. This sector angle will be denoted by $\psi_i = \angle p_{i_{R}} p_i p_{i_{L}}$ (see Figure \ref{FigGordon2}).\\

The vectors $\hat p_{i_{R}}(t)$ and $\hat p_{i_{L}}(t)$ are unit vectors from $p_i(t)$ to $p_{i_{R}}(t)$ and to $p_{i_{L}}(t)$ respectively, denoted by

$$\hat p_{i_{R}}(t) = \frac{p_{i_{R}}(t)-p_i(t)}{\|p_{i_{R}}(t)-p_i(t)\|}$$
and

$$\hat p_{i_{L}}(t) = \frac{p_{i_{L}}(t)-p_i(t)}{\|p_{i_{L}}(t)-p_i(t)\|}$$\\
hence, agent $i$ moves at a constant velocity $\sigma$ along the bisector of $\psi_i(t)$, unless $\psi_i(t) \geq \pi$ then agent $i$ doesn't move (shown in Figure \ref{FigGordon2}).\\

\begin{figure}[h!]
\captionsetup{width=0.8\textwidth}
  \centering
    \includegraphics[width=100mm]{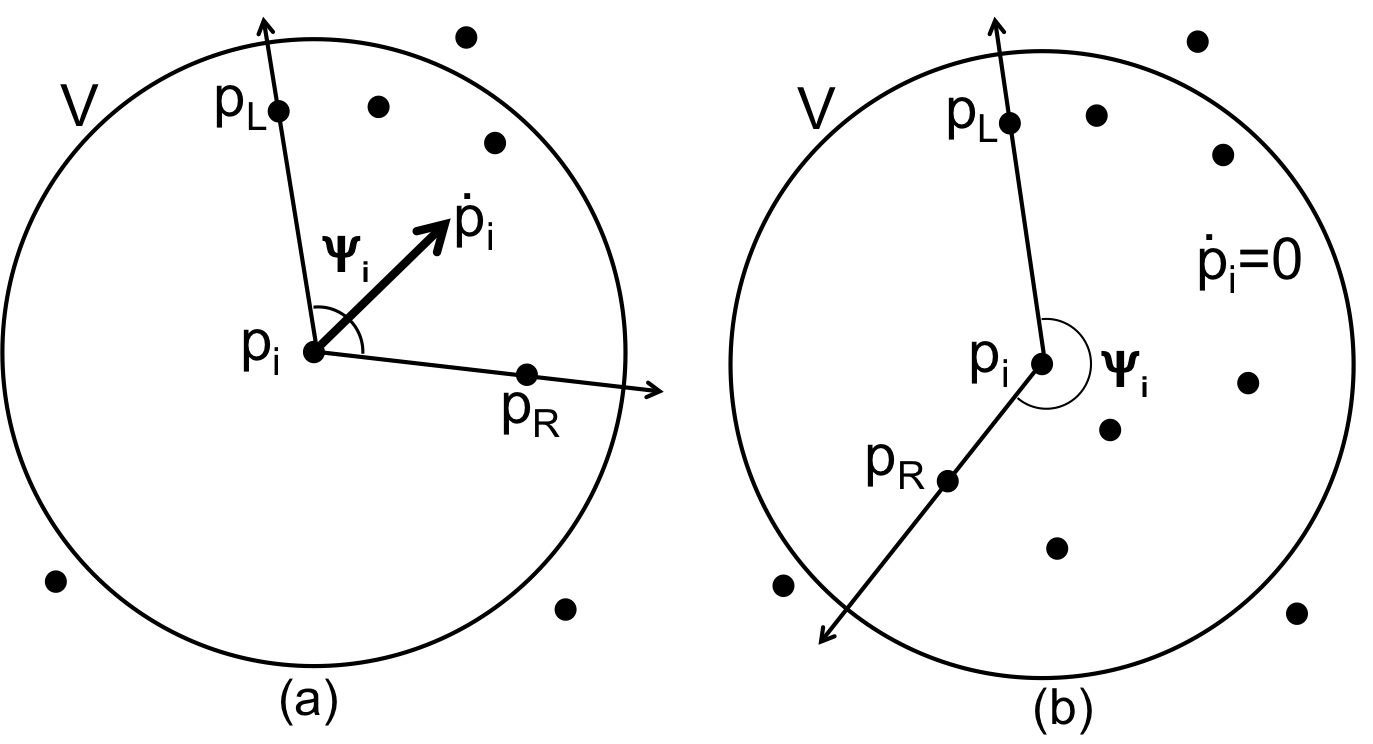}
    \caption{(a) Agent $i$ moves in the direction of $\psi_i$'s bisector. (b) Agent $i$ does not move since $\psi_i > \pi$}
    \label{FigGordon2}
\end{figure}

Note that there can be a discontinuity in the velocity of the agents, e.g. the movement direction of an agent may change in zero time and an agent may decide to stop.\\

An alternative law of motion with proportional velocity dictates that
\begin{equation} \label{eq:Belaish}
 \dot p_i(t)=\left\{\begin{alignedat}{2}
    & \sigma(p_{i_{R}}(t)+p_{i_{L}}(t)), && \quad\psi_i(t)<\pi \\
    & 0, && \quad o.w.
  \end{alignedat}\right. 
 \end{equation}
This dynamic rule is analysed in details in \cite{bellaiche2015}.
\begin{theorem} \label{Theorem77}

For an initial constellation of a connected neighborhood graph, all the agents of system $\mathcal{S}_7$ converge to a point in finite time.

\end{theorem}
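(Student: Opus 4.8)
The plan is to follow the same three-step template used for system $\mathcal{S}_5$, exploiting the fact that here the speed is the \emph{constant} $\sigma$ whenever an agent moves, which makes the final rate estimate immediate. First I would show that acknowledged neighbours never separate, so the neighbourhood graph stays connected; then that the convex hull $CH(P(t))$ never grows; and finally that the perimeter of $CH(P(t))$ is a Lyapunov function whose derivative is bounded above by a strictly negative constant, forcing the hull to collapse to a point in finite time.

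For connectivity, fix a neighbouring pair $i,j$ and write, as in the proof of Lemma \ref{lemma55_2_1},
$$\dot l_{ij}(t) = -\bigl(v_i(t)\cos\theta_{ij}(t) + v_j(t)\cos\theta_{ji}(t)\bigr),$$
where $\theta_{ij}$ is the angle between $\dot p_i$ and the ray from $p_i$ to $p_j$. When agent $i$ moves it travels along the bisector of the minimal sector of span $\psi_i<\pi$ that contains all its neighbours; since $j$ lies inside that sector, $\theta_{ij}\le \psi_i/2<\pi/2$, so $\cos\theta_{ij}>0$, and the same holds with the roles of $i,j$ reversed (a non-moving agent simply contributes $0$). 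Hence $\dot l_{ij}(t)\le 0$: the distance to an acknowledged neighbour never increases, so no pair ever crosses the visibility bound $V$ and the initially connected graph stays connected for all time.

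Next, because each velocity is a positive multiple of the bisector of the directions from $p_i$ to agents that all lie in $CH(P(t))$, every agent on the hull boundary moves into the hull or along it; thus $CH(P(t+\Delta t))\subseteq CH(P(t))$, exactly as in Lemma \ref{ChShrinks}. I would then take $\mathcal{L}(P(t))$ to be the perimeter of $CH(P(t))$ and reuse the purely geometric identity
$$\dot{\mathcal{L}}(P(t)) = -2\sum_{i=1}^{m} v_i(t)\,\cos\!\Bigl(\tfrac{\varphi_i(t)}{2}\Bigr)\cos\!\Bigl(\tfrac{\varphi_i(t)-2\alpha_i(t)}{2}\Bigr),$$
with $\varphi_i$ the interior hull angle at corner $i$ and $\alpha_i\in[0,\varphi_i]$ the direction of motion there. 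Let $s$ be the sharpest corner; since a convex $m$-gon ($m\le n$) has $\varphi_s\le\varphi_*=\pi(1-2/n)<\pi$, and since all other agents — hence all of $s$'s neighbours — lie in the corner cone, the neighbour span satisfies $\psi_s\le\varphi_s<\pi$, so $s$ always moves, and it does so at the \emph{constant} speed $v_s=\sigma$. Keeping only the $s$-term and using $\varphi_s-2\alpha_s\in(-\pi,\pi)$ gives
$$\dot{\mathcal{L}}(P(t)) \le -2\sigma\cos^2\!\Bigl(\tfrac{\varphi_*}{2}\Bigr) < 0,$$
a bound independent of the current hull. Consequently $\mathcal{L}(P(t))$ reaches $0$ after a finite time at most $\mathcal{L}(P(0))\big/\bigl(2\sigma\cos^2(\varphi_*/2)\bigr)$; since the perimeter of a convex set dominates twice its diameter, a vanishing perimeter means all agents coincide, which is the claimed finite-time gathering to a point.

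The step I expect to be most delicate is the last one, for two reasons. The velocity field is genuinely discontinuous — the bisector jumps when the extremal left/right neighbours change or when $\psi_i$ crosses $\pi$ — so one must treat $\mathcal{L}$ as a continuous, almost-everywhere differentiable Lyapunov function and verify that the uniform rate bound holds wherever $\dot{\mathcal{L}}$ exists, rather than relying on a smooth trajectory. One must also cover the degenerate case in which the hull is a segment (collinear agents): there the two endpoints play the role of sharpest corners with $\varphi=0$, still move at speed $\sigma$ toward the rest, and the same estimate applies. It is worth stressing the contrast with $\mathcal{S}_5$: there the analogous $v_s$ had to be bounded below by a diameter and triangle-inequality argument, whereas the constant-speed rule of $\mathcal{S}_7$ supplies $v_s=\sigma$ for free and upgrades the convergence from asymptotic to finite-time. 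Note also that, unlike $\mathcal{S}_3$, the bisector dynamics is not governed by an antisymmetric pairwise function, so Lemma \ref{invariant} does not apply and the gathering point is in general not the initial average.
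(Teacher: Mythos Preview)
Your proof is correct and follows essentially the same three-step scheme as the paper (connectivity preserved, hull nested, perimeter decreases at a uniform rate with the sharpest-corner argument yielding $\dot{\mathcal{L}}\le -2\sigma\cos^2(\varphi_*/2)$). The only noteworthy difference is the connectivity step: the paper introduces \emph{allowable regions} $AR_i(t)$ (Lemmas~\ref{NeverLoseFriendLemma} and~\ref{NeverLoseFriendUnderDynm}) and shows that motion along the bisector stays inside them, whereas you reuse the direct $\dot l_{ij}\le 0$ argument from Lemma~\ref{lemma55_2_1}. Your route is shorter and more natural for continuous time; the paper's allowable-region machinery is heavier here but pays off later, since it is exactly the construction needed for the discrete-time system $\mathcal{S}_8$.
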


\begin{proof} 
In the following section we shall prove Theorem \ref{Theorem77} using three steps:
\begin{enumerate}
\item Any two neighbours of system $\mathcal{S}_7$ will remain neighbours (Lemmas \ref{NeverLoseFriendLemma} and \ref{NeverLoseFriendUnderDynm}).
\item At any time $t$ the convex-hull of system $\mathcal{S}_7$ is contained in its previous (Lemma \ref{GordonCHnotGrow}).
\item While the convex-hull of the agents locations has a perimeter greater than zero, the perimeter decreases at a finite speed (Lemma \ref{GordonCHshrink}).
\end{enumerate}

These three lemmas prove that system $\mathcal{S}_7$ converges to a point in finite time as claimed.\\

Let us first define an \textit{allowable region} $AR_i(t)$ where each agent $i$ can move without losing any existing neighbour, i.e. its distance from every existing neighbour will stay smaller than $V$.\\

Denote by $D_r(c)$ a disc of radius $r$ centered at point $c$, and denote by $c_{ij}(t) = p_i(t) + \frac{V}{2}\frac{p_j(t)-p_i(t)}{\|p_j(t)-p_i(t)\|}$ a point distanced $\frac{V}{2}$ from $p_i(t)$ in the relative direction to $p_j(t)$.\\

The current allowable region, where agent $i$ may move without losing any of its existing neighbours, is denoted by (see Figure \ref{FigGordon1}):

\begin{equation} \label{NeverLoseFriendRig}
AR_i(t) \triangleq  \left( \bigcap \limits_{j \in N_i(t)}  D_{\frac{V}{2}}(c_{ij}(t)) \right) \cap D_{\frac{V}{2}}(p_i(t))
\end{equation}

\begin{figure}[h!]
\captionsetup{width=0.8\textwidth}
  \centering
    \includegraphics[width=120mm]{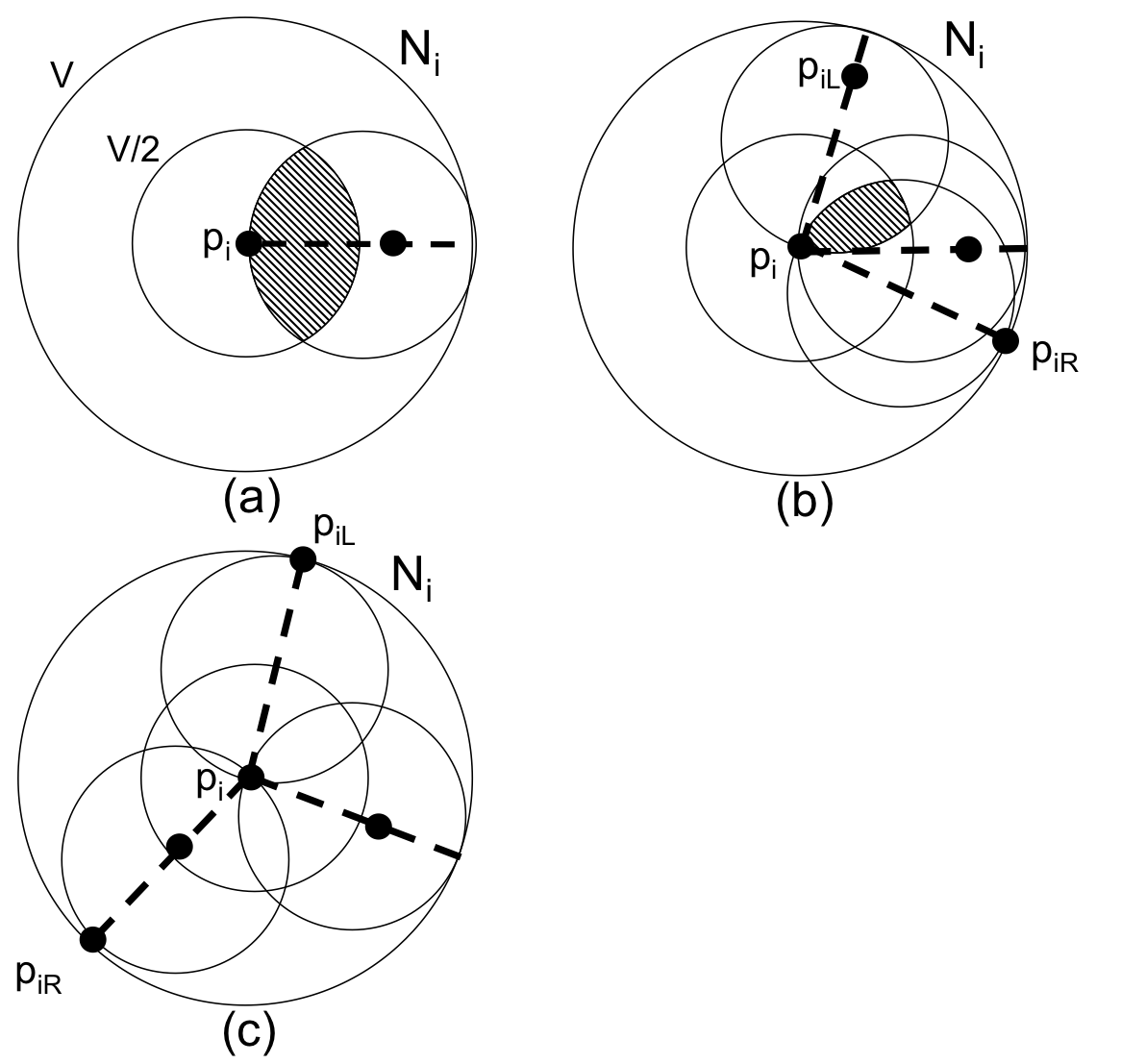}
    \caption{Allowable regions for agent $i$. (a) Single neighbour. (b) Intersection between the extreme left agent's disc, the extreme right agent's disc, and the disc $D_{\frac{V}{2}}(p_i)$. (c) No allowable region since the intersection yields an empty area.}
      \label{FigGordon1}
\end{figure}

\begin{lemma} \label{NeverLoseFriendLemma}
If all agents move to an arbitrary location inside their allowable regions they will not lose any of their neighbours.
\end{lemma}

\begin{proof}
Considering an agent $i$, we realize that if it sees an agent $j$ in a given direction, the agent $j$ will be somewhere at a distance less than $V$ from it. If the agent is at a distance $V$, then clearly both $i$ and $j$ can move into a disc of radius $V/2$ centered at their average location $(p_i(t)+p_j(t))/2$ without losing mutual visibility. If $j$ will be at a distance less than $V$ from $i$ then they can again move into a disc of a radius $V/2$ centered at the average of their locations. Hence we have that the intersection of all possible moves for agent $i$ due to possible locations of agent $j$ within $r<V$ distance from agent $i$, in the direction to $j$ (known to $i$) is given by
$$
AR_{ij}(t)=\bigcap \limits_{r=o}^{V}  D_{\frac{V}{2}}\left( p_i(t)+\frac{1}{2}\frac{p_j(t)-p_i(t)}{\|p_j(t)-p_i(t)\|}r \right)=
$$
$$
=D_{\frac{V}{2}}( p_i(t))\bigcap D_{\frac{V}{2}}\left(p_i(t)+\frac{1}{2}\frac{p_j(t)-p_i(t)}{\|p_j(t)-p_i(t)\|}V\right)=
$$
The allowable region for $i$ to move will be
$$
AR_i(t)= \bigcap \limits_{j \in N_i(t)} AR_{ij}(t)
$$
hence we get the formula \ref{NeverLoseFriendRig}.\\

Therefore, for any pair of neighbours $i$ and $j$, if both $i$ and $j$ move into their allowable region, we have that $AR_i(t)$ and $AR_j(t)$ are contained in $D_{V/2}((p_i(t)+p_j(t))/2)$, hence the distance between them remains within $V$.\\

Note that if agents $j \in N_i(t)$ surround agent $i$ (i.e. $\psi_i(t)>\pi$), the allowable region will be empty hence agent $i$ cannot move without risking disconnecting visibility to some of its neighbours.\\
\end{proof}

We still have to show that agents with dynamics defined by (\ref{eq:Dynamics77}) necessarily move into the allowable region.
\begin{lemma}\label{NeverLoseFriendUnderDynm}
Motion according to the dynamics rule (\ref{eq:Dynamics77}) ensures that all agents move into their allowable regions.
\end{lemma}

\begin{proof}
We provide two different argument to prove this.\\ \\
\textbf{First argument:}\\ \\
Let us examine dynamics (\ref{eq:Dynamics77}) using Lemma \ref{NeverLoseFriendLemma}. Let $Limit_{i}(t)$ and $Limit_{ij}(t)$ be the length of the section segments of $AR_i(t)$ and $AR_{ij}(t)$, starting at $p_i(t)$ with the direction $p_{i_{bisector}}(t)$, so that
$$ Limit_i(t) =  \min\limits_{j \in N_i}\{Limit_{ij}(t)\} $$
By geometry,
$$ Limit_{ij}(t) =  \min\{V/2,V\cos(\theta_{ij}(t))\} $$
where $\theta_{ij}(t)$ is the angle between the vectors $p_j(t)-p_i(t)$ and $\hat p_{i_{bisector}}(t)$ (see Figure \ref{Limit_ij_Gordon}).
\begin{figure}[H]
\captionsetup{width=0.8\textwidth}
  \centering
    \includegraphics[width=70mm]{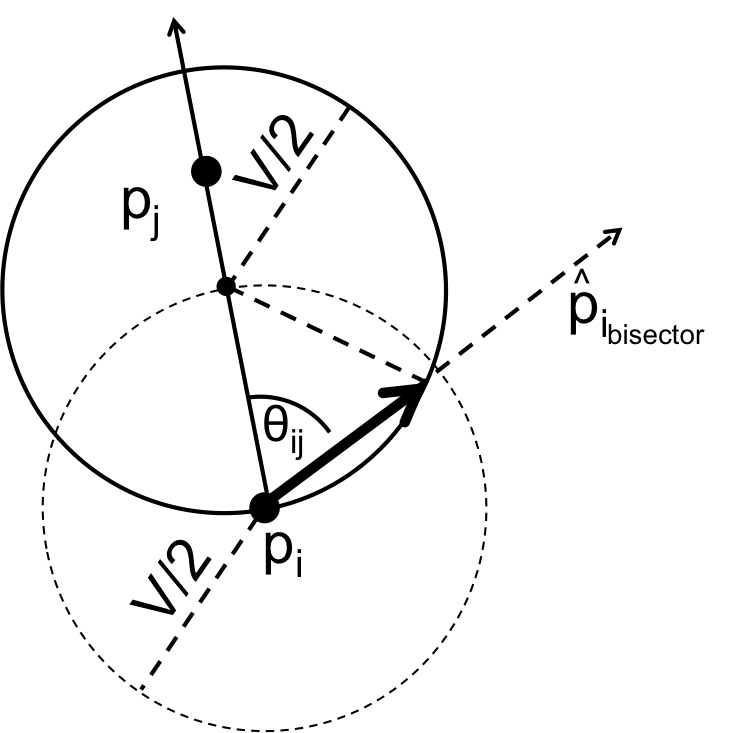}
    \caption{Allowable region $AR_{ij}$ section line along agent's $i$ movement direction. The thick arrow is $Limit_{ij}$.}
      \label{Limit_ij_Gordon}
\end{figure}
By dynamics (\ref{eq:Dynamics77}) an agent $i$ moves only when $\psi_i(t)$ is less than $\pi$, and therefore for each neighbour $j$ of agent $i$ the angle $\theta_{ij}(t)$ is less than $\pi/2$. Thereby,
$$  Limit_i(t) =  \min\limits_{j \in N_i}\{Limit_{ij}(t)\} = \min\{V/2, \; \min\limits_{j \in N_i}\{V\cos(\theta_{ij}(t))\}\} > 0$$
and when agent $i$ moves, it moves toward the direction $p_{i_{bisector}}(t)$, hence it moves inside its allowable region $AR_i(t)$.\\ \\
\textbf{Second argument:}\\ \\
According to the motion law (\ref{eq:Dynamics77}) agent $i$ travels from $p_i(t)$ in the direction  $\hat p_{i_{bisector}}(t)$ or does not move. By geometry (see Figure \ref{FigGordon1}(b)), the length of the section that runs from point $p_i(t)$ along the bisector of $\psi_i(t)$ to the boundary of the allowable region of $p_i(t)$ is $V/2$ if $\psi_i(t) \leq 2\pi/3$ due to $D_{\frac{V}{2}}p_i(t)$, and $V\cos(\psi/2)$ if $2\pi/3 < \psi_i(t) < \pi$, running to zero as $\psi_i(t)$ runs to $\pi$. Therefore if $\psi_i(t) < \pi$ agent $i$ moves, otherwise it stays put. We next provide a general proof for such a geometric constellation of circles forming the allowable region (see Figure \ref{GordonShortestBisector}).\\

Without loss of generality, let two circles of radius $r$, whose centers are at a significant but smaller than $2r$ distance from each other, be symmetrically positioned in a $2d$ cartesian coordinate system $XY$ so that one of their meeting points is at the origin, and the other meeting point is along the positive half of the $Y$ axis. Denote the smallest angle between $X$ axis and the line connecting the origin to the center of one of those circles by $\alpha$. By geometry, the non trivial meeting point of those circles (denoted in Figure \ref{GordonShortestBisector} by $Y_{\alpha}$) is at $[o;2r\sin \alpha]$.\\


\begin{figure}[h!]
\captionsetup{width=0.8\textwidth}
  \centering
    \includegraphics[width=90mm]{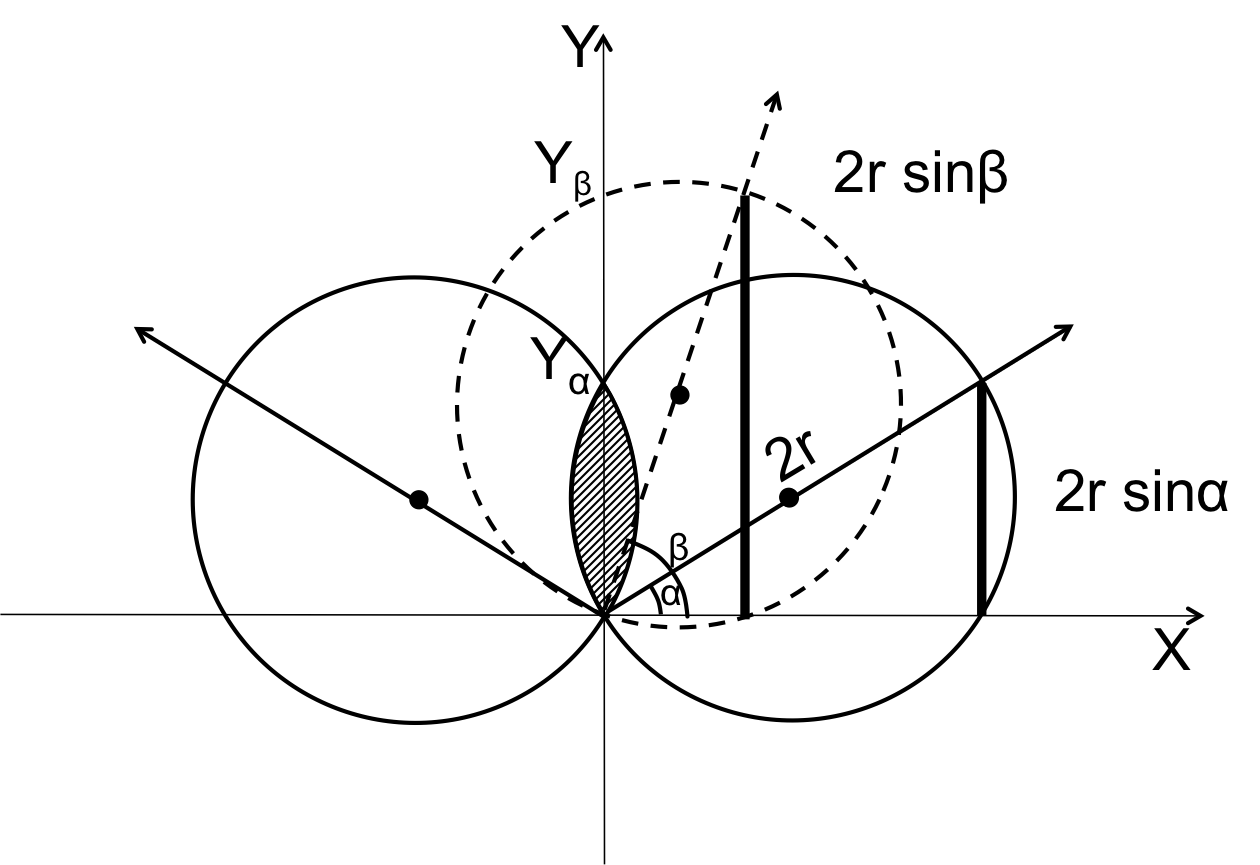}
    \caption{A proof that the segment along the bisector of the extreme rays  is contained in all other circles inside the wedge angle.}
      \label{GordonShortestBisector}
\end{figure}


Let a third circle of the same radius $r$ be positioned at that coordinate system, so that its center is at a distance $r$ from the origin, and its center is anywhere inside the minimal angular section defined by the centers of the two original circles and the origin, as shown in Figure \ref{GordonShortestBisector} in dashed lines. Denote the smallest angle between $X$ axis and the line connecting the origin and the center of that circle by $\beta$. By geometry, the non trivial meeting point of that circle with $Y$ axis (denoted in Figure \ref{GordonShortestBisector} by $Y_{\beta}$) is at $[o;2r\sin \beta]$.\\

Since $0 < \alpha < \beta \leq \pi/2$, we have that $\sin \alpha < \sin \beta$, hence $Y_{\alpha} < Y_{\beta}$. Therefore, point $Y_{\alpha}$ is necessarily included in any circle of radius $r$, whose center, distanced $r$ from the origin, is located inside the minimal angular section defined by the centers of the two external circles and the origin.\\

Since by dynamics (\ref{eq:Dynamics77}) all the agents in system $\mathcal{S}_7$ move inside their allowable region along their associated bisectors, by Lemma \ref{NeverLoseFriendLemma} they will not lose any of their neighbours, as claimed in Lemma \ref{NeverLoseFriendUnderDynm}.\\
\end{proof}

\begin{lemma}\label{GordonCHnotGrow}
Let $CH(P(t))$ be the convex-hull of agents' positions in system $\mathcal{S}_7$ at time $t$. Then for all $t \geq 0$ and $\Delta t > 0$
$$CH(P(t+\Delta t)) \subseteq CH(P(t))$$
\end{lemma}

\begin{proof}
By the dynamic law (\ref{eq:Dynamics77}) each agent moves along the bisector of $\psi_i(t)$, where $\psi_i(t)$ is the angle of the smallest wedge containing all the neighbours of agent $i$. And since there is no agent located outside the convex-hull of the system, no agent moves out of the convex-hull.\\
\end{proof}

\begin{lemma} \label{GordonCHshrink}\label{CHdecreases77}
If the graph topology of system $\mathcal{S}_7$ is connected and the perimeter of its convex-hull is greater than zero, then the perimeter of its convex-hull decreases at a rate bounded away from zero.\\
\end{lemma}

\begin{proof} 

We shall show that the perimeter of $CH(P(t))$ drops at a strictly positive rate as long as the diameter of the system is strictly positive.\\
 
The proof is based on the dynamics of the agent (or agents) $s$, located at a current sharpest corner of the system's convex-hull. Let $\varphi_s$ be the inner angle of this corner.\\

The sum of angles of any convex polygon is  $\pi(m-2)$, where $m$ is the number of its corners, therefore the angle of its sharpest corner $\varphi_s$ is at most $\pi(1-\frac{2}{m})$. System $\mathcal{S}_7$ contains $n$ agents, hence the system's convex-hull has $ m \le n$ corners. We denote the upper limit on the sharpest corner of the convex-hull by $\varphi_*$, so
$$\varphi_s \le \varphi_* = \pi(1 - 2/n)$$

Define $\mathcal{L}(P(t))$ as the perimeter of $CH(P(t))$ and $l_i(t)$ as the length of the convex-hull side connecting corners $i$ and $i+1$ at time $t$.\\

Let $\varphi_i(t)$ be the angle of corner $i$ of $CH(P(t))$, let $\alpha_i(t)$ denote the direction of motion of the agent located at corner $i$ relative to the direction of corner $i+1$, and let $v_i(t)$ be the speed of the agent located at corner $i$ (as shown in Figure \ref{CHshrink}).\\
\begin{figure}[h!]
\captionsetup{width=0.8\textwidth}
  \centering
    \includegraphics[width=100mm]{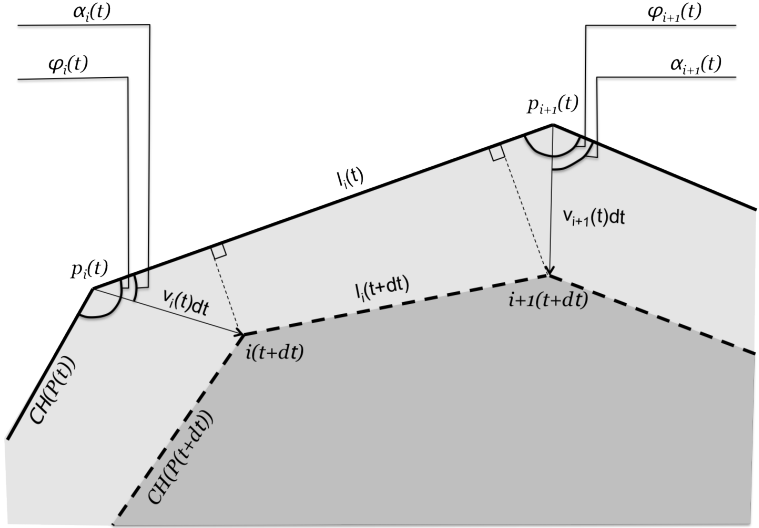} 
     \caption{Convex-hull shrinkage.}
     \label{CHshrink}
\end{figure}

We have that 
$$\lim\limits_{dt \to 0^+}\{ l_i(t+dt)-l_i(t) \} = $$ $$-(v_i(t) \cos \alpha_i(t) + v_{i+1}(t)\cos (\varphi_{i+1} (t) -\alpha_{i+1}(t))) dt + \mathcal{O}(dt)$$
hence,
$$\dot{\mathcal{L}}(P(t))=-\sum\limits_{i=1}^m v_i(t)(\cos \alpha_i(t)+\cos (\varphi_{i} (t) -\alpha_{i}(t)))$$
$$= -\sum\limits_{i=1}^m v_i(t) \cos(\frac{\varphi_{i} (t)}{2}) \cos(\frac{\varphi_{i}(t) - 2\alpha_{i}(t)}{2}) $$

Let $v_s(t)$, $\varphi_s(t)$ and $\alpha_s(t)$ be the relevant values associated with agent $s$.
Since $v_s = \| \dot{p}_s(t) \|$ is positive and bounded away from zero by the assumed rule of motion (\ref{eq:Dynamics77}), and by Lemma \ref{CHdecreases77} we have $\alpha_{i}(t) \le \varphi_i(t) < \pi$ we have 
$$\dot{\mathcal{L}}(P(t)) \leq - v_s(t) \cos(\frac{\varphi_{s} (t)}{2}) \cos(\frac{\varphi_{s}(t) - 2\alpha_{s}(t)}{2}) \le -\sigma \cos^2(\frac{\varphi_*}{2}) $$
Hence, the perimeter decreases at a rate of at least $\sigma \cos^2(\varphi_*/2)$ proving Lemma \ref{GordonCHshrink}.\\
\end{proof}

To prove Theorem \ref{Theorem77} we have by Lemma (\ref{CHdecreases77}) that the length of the perimeter of the convex-hull of system $\mathcal{S}_7$ decreases at a bounded away form zero rate, therefore it necessarily converges to a point in finite time as claimed.\\
\end{proof}

\subsection{Discrete Time Dynamics (system $\mathcal{S}_8$)}

In the discrete time models discussed by Gordon et. al. \cite{gordon2004, gordon2005}, the step-size of each agent varies, and is limited by the maximal value $\sigma$. A semi-synchronous motion rule is considered, so that at each time-step, agents have a certain positive probability to be active, and each active agent $i$ jumps a step-size limited by $\sigma$ to a point in its current allowable region $AR_i(k)$, similar to the allowable region defined in (\ref{NeverLoseFriendRig}) for the continuous case of system $\mathcal{S}_7$. To prove gathering, the \textit{strong asynchronicity} property is used. One needs the fact that at each time-step, there is a strictly positive probability that only one agent is active, and that probability is at least $\delta$. In addition to the deterministic move into the allowable region, Gordon also considers a randomized motion model \cite{gordon2005}, which allows the agents to jump to any point inside their current allowable region (rather then strictly along the bisector of the region's wedge angle).\\

Why is a semi-synchronization motion schedule necessary for this type of system and not for the previous ones?\\

The agents of system $\mathcal{S}_8$ not only lack information about the distance to their neighbours, but their visibility range is limited to $V$, so that their interconnection topology graph changes whenever new pairs of agents become visible to each other. In order to maintain visibility (i.e. once two agents are visible, hence neighbours to each other, they remain neighbours), the dynamics corresponding to rule (\ref{eq:Dynamics77}) postulates that any agent with a wedge angle $\psi_i(k)$ greater than or equal to $\pi$, is "locked" and cannot move until its associated wedge angle becomes sharp.\\

To see how this rule of motion causes trouble in the case of synchronized, discrete time schedule, let us analyze the following example. Consider the constellation of agents presented in Figure \ref{GordonLocked}:
\begin{figure}[H]
\captionsetup{width=0.8\textwidth}
  \centering
    \includegraphics[width=60mm]{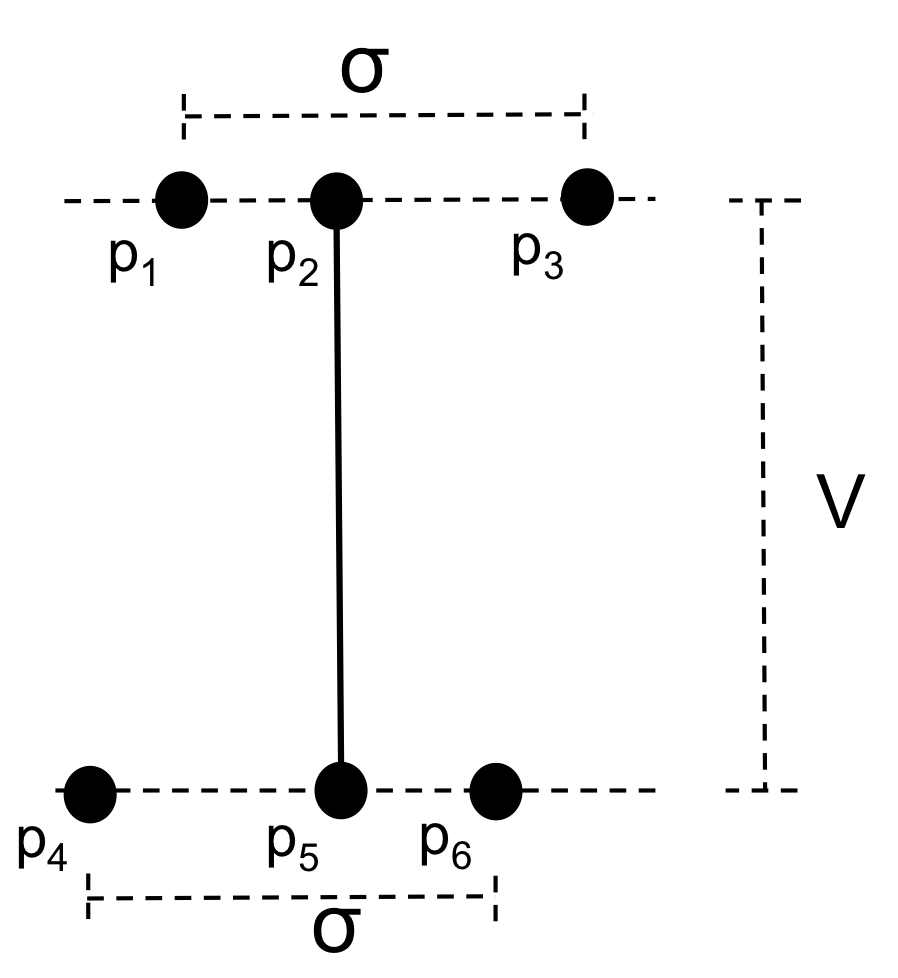}
    \caption{A special constellation that prevents system $\mathcal{S}_8$ without randomization from gathering to a point.}
      \label{GordonLocked}
\end{figure}
Note that in Figure \ref{GordonLocked}, the agents $p_1$, $p_2$, $p_3$ and $p_4$, $p_5$, $p_6$ are located on parallel lines so that $\bar{p_1p_3} \| \bar{p_4p_6}$. These parallel lines are at a distance $V$ from each other, and only $\bar{p_2p_5}$ is  perpendicular to $\bar{p_1p_3}$ (and to $\bar{p_4p_6}$), so that only $\|p_2-p_5\| = V$. Assume $\|p_1-p_3\| = \|p_4-p_6\|=\sigma < V$. By geometry both $p_1$ and $p_3$ are not visible to $p_4$, $p_5$ and $p_6$ since they are at a distance of more than $V$ from them, and both $p_4$ and $p_6$ are not visible to $p_1$, $p_2$ and $p_3$. Considering dynamic rule similar to (\ref{eq:Dynamics77}) but in a discrete time framework, where all agents are active at each time step, we have that at time-step $k$ the wedge angles of $p_2$ and $p_5$ are $\psi_2(k) =\psi_5(k)= \pi$, therefore both $p_2$ and $p_5$ are locked. At time-step $k+1$ both agents $p_1$ and $p_3$ must move a step of size $\sigma$ towards each other, so that they switch positions, and so do $p_4$ and $p_6$. The same switching phenomenon occurs over and over again simultaneously, leaving $p_2$ and $p_5$ locked forever, preventing the system from gathering.\\

This example shows that a deterministic motion schedule can lead to non-gathering constellations, hence some randomization is needed. Indeed adding randomization to the motion schedule may break this "locked" situation and "free" the agents to move. For example, in the constellation above, if, once in a while, an agent "sleeps" and doesn't move (resulting, due to the jumps of $p_1$ to $p_3$ while $p_3$ sleeps or due to the jump of $p_4$ while $p_6$ sleeps, in $\psi_2(k)=\pi/2$ or $\psi_5(k)=\pi/2$), agents $p_2$ and $p_5$ will approach each other, and eventually more agents become visible to each other.\\

Let us use the definition of allowable region (\ref{NeverLoseFriendRig}), where each agent $i$ can move without losing any existing neighbour. In order to obey the constraint of having a limited step (to $\sigma$) in discrete time, the maximal step-size of an agent can never exceed
$$\mu = \min\{\frac{V}{2}, \sigma\}$$
and therefore in the discrete case, the allowable region is given by (see Figure \ref{GordonDescreteAR}):

\begin{equation} \label{DiscreteAlowableRegionRig}
AR_i(k) \triangleq  \left( \bigcap \limits_{j \in N_i(k)}  D_{\frac{V}{2}}(c_{ij}(k)) \right) \cap D_{\mu}(p_i(k))
\end{equation}
similar to (\ref{NeverLoseFriendRig}).\\

\begin{figure}[h!]
\captionsetup{width=0.8\textwidth}
  \centering
    \includegraphics[width=70mm]{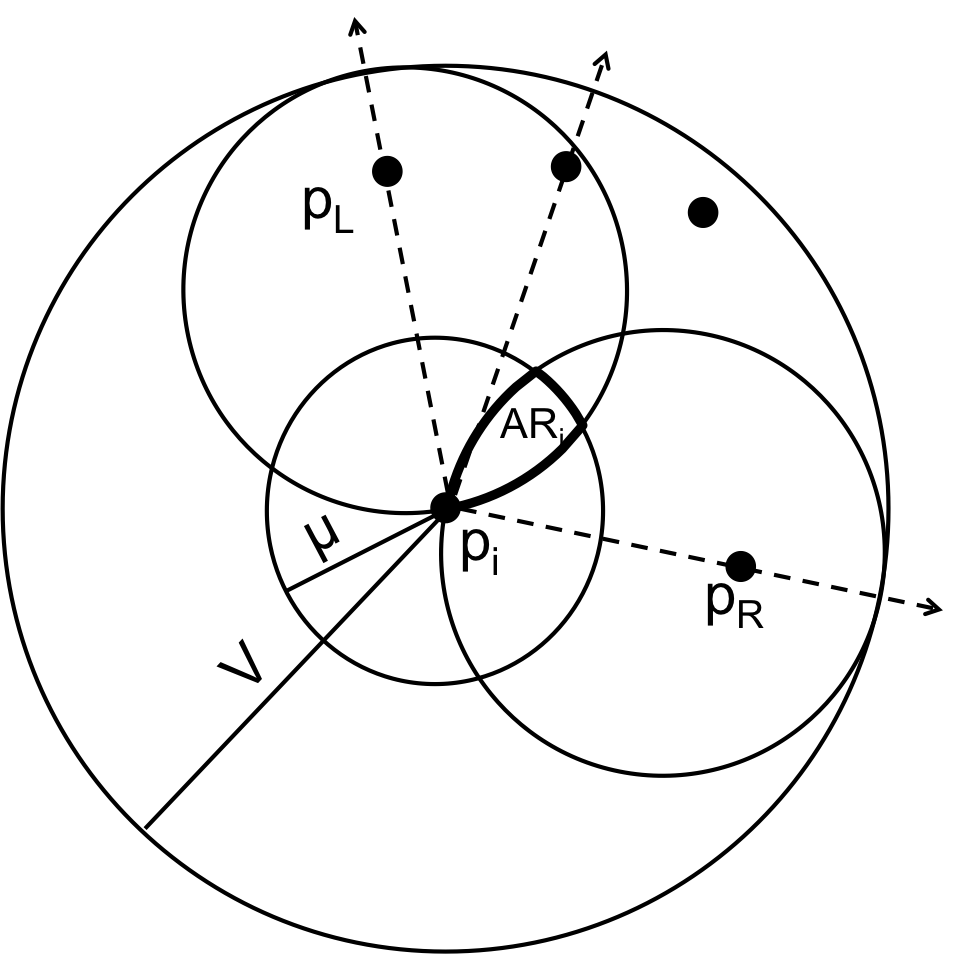}
    \caption{Allowable region of agent $i$ in system $\mathcal{S}_8$. Allowable region for agent $i$ of system $\mathcal{S}_8$ (in thick line). points $p_L(k)$ and $p_R(k)$ are the position of the current extreme left and right agents in agent's $i$ neighbours wedge.}
      \label{GordonDescreteAR}
\end{figure}

For system $\mathcal{S}_8$ we shall assume, following Gordon et. al. \cite{gordon2005}, that each agent has a strictly positive probability to be active at each time-step, and that each active agent $i$ moves to a uniformly distributed random point in its allowable region $AR_i(k)$. Note that due to the \textit{strong asynchronicity}, at each time-step there is a strictly positive probability ($>\delta$) that only a single agent is active.\\

\begin{theorem} \label{Theorem88}
For any initial constellation having a connected graph topology, all agents of the system $\mathcal{S}_8$ gather to a disk of diameter $V$ in a finite expected number of time-steps.
\end{theorem}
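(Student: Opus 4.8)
The plan is to adapt the convex-hull machinery developed for the continuous system $\mathcal{S}_7$ to the discrete, randomized setting, replacing the deterministic ``rate bounded away from zero'' argument by a probabilistic ``positive-probability decrease'' argument that exploits the strong asynchronicity assumption. The Lyapunov function I would use is $\mathcal{L}(P(k))$, the perimeter of the convex-hull $CH(P(k))$ of the agents' positions, exactly as in Lemma \ref{GordonCHshrink}. The proof would rest on four ingredients: (i) neighbours remain neighbours, so the visibility graph only gains edges and stays connected; (ii) the convex-hull never expands; (iii) with a probability bounded away from zero, one carefully chosen single-agent move strictly shrinks the hull by a bounded-away-from-zero amount; and (iv) a Bernoulli-process counting argument, identical in spirit to the semi-synchronised $\mathcal{S}_2$ analysis around (\ref{SemiSyncIneq}), converting (iii) into a finite \emph{expected} number of time-steps.

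For ingredient (i) I would reuse the allowable-region construction: by the very definition (\ref{DiscreteAlowableRegionRig}) of $AR_i(k)$, which intersects the discs $D_{V/2}(c_{ij}(k))$ over all current neighbours, any active agent landing inside $AR_i(k)$ keeps its distance to each neighbour below $V$. This is the discrete analogue of Lemma \ref{NeverLoseFriendLemma}, and it guarantees that edges are never lost, so connectivity is preserved for all $k$. For ingredient (ii), since each $AR_i(k)$ lies inside $CH(P(k))$ (each agent only moves toward the wedge spanned by its neighbours, all of which lie in the hull), the same reasoning as in Lemma \ref{GordonCHnotGrow} gives $CH(P(k+1)) \subseteq CH(P(k))$, whence $\mathcal{L}(P(k))$ is non-increasing by Proposition \ref{CHinCH}.

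The crux is ingredient (iii). Let $s$ be an agent occupying a sharpest corner of $CH(P(k))$; as a sharpest corner of a convex polygon on $m \le n$ vertices its interior angle satisfies $\psi_s(k) \le \varphi_* = \pi(1-2/n) < \pi$, so $s$ is never ``locked'' and its allowable region extends strictly into the hull interior, with the bisector section of $AR_s(k)$ having length at least $\min\{\mu, V\cos(\varphi_*/2)\} > 0$. By the strong asynchronicity assumption there is probability at least $\delta$ that $s$ is the \emph{only} active agent at step $k$; conditioned on this, $s$ moves to a uniformly random point of $AR_s(k)$, and with a further probability bounded below by a positive constant it lands deep enough inside the hull that the two hull edges meeting at $s$ shorten, forcing a perimeter drop of at least a fixed $\varepsilon_0 > 0$ (the discrete counterpart of the $-\sigma\cos^2(\varphi_*/2)$ bound in Lemma \ref{GordonCHshrink}). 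Thus at every step where $\mathcal{L}(P(k))$ still exceeds the threshold corresponding to all agents being mutually visible, there is a probability at least $\delta' > 0$ of a decrease by at least $\varepsilon_0$.

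Finally, ingredient (iv): since $\mathcal{L}$ is non-increasing and drops by $\varepsilon_0$ with probability $\ge \delta'$ at each qualifying step, only finitely many such ``successful'' steps are needed to bring the hull diameter down to $V$, i.e.\ to make the visibility graph complete and confine all agents to a disk of diameter $V$. Counting these successes as a Bernoulli process with success probability $\delta'$ (exactly as in the computation giving $M_0/\delta$ for $\mathcal{S}_2$) yields a finite expected number of time-steps, proving Theorem \ref{Theorem88}. The main obstacle I anticipate is making the constants in ingredient (iii) genuinely \emph{uniform}: I must show that both the minimal step the sharpest-corner agent takes into the hull and the resulting perimeter decrease $\varepsilon_0$ are bounded below by quantities depending only on $n$, $V$ and $\sigma$ (not on the shrinking configuration) for \emph{every} constellation whose diameter still exceeds $V$. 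Handling the case where $s$'s nearest neighbour is infinitesimally close (so $AR_s$ could seem to degenerate) will require, as in Claim \ref{GoalAlphaIsFinitePositive} for system $\mathcal{S}_6$, invoking connectivity to locate a nearby agent with a genuinely sharp wedge that moves a definite amount.
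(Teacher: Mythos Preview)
Your overall strategy is close in spirit to the paper's, but you diverge on the choice of Lyapunov function: you keep the convex-hull perimeter from the $\mathcal{S}_7$ analysis, whereas the paper switches to $\mathcal{L}(P(k))=\sum_i\|p_i(k)-\bar p(k)\|^2$, the sum of squared distances from the \emph{current} centroid, and proves that with probability at least $\delta\rho$ this quantity drops by a fixed amount $(s^*)^2/n$. The paper does \emph{not} claim monotonicity of its Lyapunov function; instead it uses a block argument: since connectivity bounds $\mathcal{L}$ above by $n((n-1)V)^2$, a run of $M$ consecutive ``good'' steps (only the sharpest-corner agent active, landing in a specified sub-region $F$ of its allowable region) forces $\mathcal{L}<V^2/4$, and such a run occurs with probability at least $(\delta\rho)^M$ in each block of $M$ steps.

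There is a genuine gap in your ingredient (ii). You assert that $AR_i(k)\subseteq CH(P(k))$ because ``each agent only moves toward the wedge spanned by its neighbours'', but this is false in the randomized model: unlike in $\mathcal{S}_7$, the agent does not move along the bisector but to an arbitrary point of $AR_i(k)$, and $AR_i(k)$ is \emph{not} contained in the neighbour wedge. Each disc $D_{V/2}(c_{ij})$ passes through $p_i$ and bulges on both sides of the ray $p_ip_j$; the intersection of such discs over $j\in N_i$ (and with $D_\mu(p_i)$) can protrude outside the hull corner wedge at $p_i$. Concretely, take $p_i$ at a hull corner with interior angle $\pi/2$ and a single visible neighbour along the corner bisector: the lune $AR_i$ then extends into both quadrants adjacent to the hull wedge, i.e.\ outside $CH(P(k))$. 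Hence a random jump can enlarge the hull, the perimeter is not monotone, and your $\mathcal{S}_2$-style Bernoulli argument (which needs non-increase plus bounded-probability decrease) does not go through as written.

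The fix is to abandon monotonicity and argue as the paper does: the perimeter is still uniformly bounded above (connectivity gives diameter $\le (n-1)V$, hence perimeter $\le \pi(n-1)V$), and your ingredient (iii) can be strengthened to the event ``only $s$ is active and it lands in the portion of $AR_s(k)$ lying inside $CH(P(k))$ and at depth $\ge\varepsilon_0$'', which still has probability bounded below since the bisector segment of $AR_s$ does lie in the hull. Then $M$ consecutive such events shrink the perimeter below the full-visibility threshold, and the block-Bernoulli estimate yields the finite expected time. Alternatively, switching to the paper's centroid-based $\mathcal{L}$ gives a cleaner single-agent-move computation via (\ref{DeltaVar}).
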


The outline of the proof is as follows: at each time-step there is always a probability $\rho$ bounded away from zero for an active agent on the convex-hull of the agents' locations to reduce its distance from the current average position of all the agents $\bar p$ by a strictly positive amount $s^*$. Therefore, there is a probability of at least $\delta \rho$ that the sum of squared distances of all agents from $\bar p$, will decrease by at least ${s^*}^2/n$ (where $\delta$ is positive and bounded away from zero value as well by the strong asynchronicity assumption).\\

As long as the agents' interconnection graph is not complete, there is always a bounded away from zero probability that it becomes complete within finite number of time-steps $M$. Using Lemma \ref{NeverLoseFriendLemma}, once the agents interconnection graph is complete it stays complete forever, so that the maximal distance between any two agents remains smaller than or equal to $V$. Therefore, all agents are henceforth confined to a disc of diameter $V$.\\

We have already seen that, in case of discrete time dynamics, when agents lack information about the distance to their neighbours, overshoot phenomena may occur, so instead of a gathering to single points, agents gather to a bounded region (as for example seen in system $\mathcal{S}_4$).\\

\begin{proof}
Let $CH(P(k))$ be the current convex-hull of the agents' constellation, $\varphi_i(k)$ the current internal angle of the convex-hull vertex denoted by $i$, and $D(P(k))$ the current diameter of the convex-hull defined by:
$$D(P(k)) \triangleq max_{i,j} \|p_i(k)- p_j(k)\|$$
Denote by $\mathcal{L}(P(k))$ the sum of squared distances of all agents from their current average position $\bar{p}(k)$:
$$\mathcal{L}(P(k)) = \sum_{i=1}^{n} \|p_i(k) - \bar{p}(k)\|^2$$

\begin{lemma} \label{DistFormAvg}
For any agent $i$ located at a corner of $CH(P(k))$, the distance between $p_i(k)$ and $\bar{p}(k)$ is bounded as follows:
$$\|p_i(k)-\bar{p}(k)\| \geq \frac{D(P(k))}{2 n \tan(\varphi_i(k)/2)}$$
where $D(P(k)) \triangleq max_{i,j} \|p_i(k) - p_j(k)\|$ is the current diameter of the convex-hull, and $\varphi_i(k)$ is the current internal angle of the convex-hull angle denoted by $i$.
\end{lemma}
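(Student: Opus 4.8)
The plan is to pass to a local frame centred at the corner $p_i(k)$ and to exploit the defining property of a hull vertex: since $p_i(k)$ is a corner of $CH(P(k))$ with interior angle $\varphi_i(k)$, every other agent lies inside the wedge of opening $\varphi_i(k)$ emanating from $p_i(k)$. First I would fix the bisector of this wedge as a unit vector $\hat\varphi_i$; then each displacement $p_j(k)-p_i(k)$ makes an angle of at most $\varphi_i(k)/2\le\pi/2$ with $\hat\varphi_i$, so its projection $(p_j(k)-p_i(k))^\intercal\hat\varphi_i$ is non-negative. Writing $\bar p(k)-p_i(k)=\frac1n\sum_j(p_j(k)-p_i(k))$ and projecting onto $\hat\varphi_i$ gives
\[
\|p_i(k)-\bar p(k)\| \ \ge\ (\bar p(k)-p_i(k))^\intercal\hat\varphi_i \ =\ \frac1n\sum_{j=1}^{n}(p_j(k)-p_i(k))^\intercal\hat\varphi_i,
\]
a sum of non-negative terms. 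This is precisely where the factor $1/n$ of the claimed estimate originates: the centroid averages the $n$ displacements, diluting any one favourable contribution by $n$, so the bound cannot be sharper than $O(1/n)$ in general.

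The second step feeds the diameter into this sum. I would let $a$ and $b$ be the agents realising $D(P(k))=\|p_a(k)-p_b(k)\|$; both lie in the wedge, so the inscribed angle $\angle\, a\,p_i\,b$ is at most $\varphi_i(k)$ and the midpoint $c$ of $\overline{p_a p_b}$ again lies in the wedge. Retaining only the $a$ and $b$ terms above and rewriting them through $c$ yields $\|p_i(k)-\bar p(k)\|\ge \frac2n(c-p_i(k))^\intercal\hat\varphi_i\ge\frac2n\|c-p_i(k)\|\cos(\varphi_i(k)/2)$. To convert $\|c-p_i(k)\|$ into the advertised form I would invoke Proposition \ref{MinDistanceFromC}: regarding $\overline{p_a p_b}$ as a chord seen from $p_i(k)$ under an inscribed angle at most $\varphi_i(k)$, the distance from $p_i(k)$ to the midpoint $c$ is at least $D(P(k))/\big(2\tan(\varphi_i(k)/2)\big)$. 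Substituting this produces
\[
\|p_i(k)-\bar p(k)\| \ \ge\ \frac{D(P(k))\,\cos(\varphi_i(k)/2)}{n\,\tan(\varphi_i(k)/2)},
\]
which is of the claimed shape; when $\varphi_i(k)\le 2\pi/3$ one has $\cos(\varphi_i(k)/2)\ge \tfrac12$ and the exact constant $D(P(k))/\big(2n\tan(\varphi_i(k)/2)\big)$ follows, after which the estimate can be coupled with the sharpest-corner bound $\varphi_s\le\varphi_*=\pi(1-2/n)$ used later.

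The main obstacle is the degenerate configuration in which the diameter is incident to $p_i(k)$ itself, i.e. one of $a,b$ coincides with $i$. Then the inscribed-angle picture underlying Proposition \ref{MinDistanceFromC} collapses, and the midpoint argument must be replaced by a direct estimate of the single extreme projection $\max_j(p_j(k)-p_i(k))^\intercal\hat\varphi_i$ against $D(P(k))$ using the cone constraint that the transverse coordinate of each neighbour is bounded by its bisector coordinate times $\tan(\varphi_i(k)/2)$. I would therefore split the proof according to whether the diameter touches $p_i(k)$, check the sign bookkeeping in the projection step (this is what guarantees that every retained term helps rather than hurts), and only at the end optimise the multiplicative constant and the residual $\cos(\varphi_i(k)/2)$ factor across the full range $\varphi_i(k)<\pi$. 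The averaging and projection steps are routine; the genuinely delicate part is relating a single chord or extreme radius to the \emph{global} diameter while carrying the correct power of $\tan(\varphi_i(k)/2)$.
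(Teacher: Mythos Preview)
Your overall framework is exactly the paper's: write $\bar p(k)-p_i(k)=\frac1n\sum_j(p_j(k)-p_i(k))$, project onto the bisector $\hat\varphi_i$, note that every summand is non-negative since each $p_j$ lies in the wedge, and retain only the diameter endpoints. The one substantive difference is how the diameter is converted into a radial length at $p_i$. The paper keeps just the \emph{farther} of the two endpoints $p_{j_1},p_{j_2}$ and uses the elementary bound
\[
\max\{\|p_i-p_{j_1}\|,\|p_i-p_{j_2}\|\}\ \ge\ \frac{D(P(k))}{2\sin(\varphi_i(k)/2)},
\]
labelled (\ref{MinimalSizeOfEdge}) and loosely attributed to Proposition~\ref{MinDistanceFromC}. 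That single term, projected onto $\hat\varphi_i$, already gives $\frac{1}{n}\cdot\frac{D}{2\sin(\varphi_i/2)}\cdot\cos(\varphi_i/2)=\frac{D}{2n\tan(\varphi_i/2)}$ with no residual cosine factor. Your route through the midpoint $c$ and Proposition~\ref{MinDistanceFromC} spends the $\sin$--to--$\tan$ conversion on $\|c-p_i\|$ first and then pays another $\cos(\varphi_i/2)$ when projecting $c-p_i$ onto $\hat\varphi_i$; that double payment is why you recover the stated constant only for $\varphi_i\le 2\pi/3$. Switching to the one-endpoint estimate removes this loss and makes your final ``optimisation of the residual factor'' unnecessary.

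Regarding the degenerate case you flag: the paper does mention the dichotomy (either $p_i$ is itself a diameter endpoint, or two other agents $j_1,j_2$ realise $D$) but then applies (\ref{MinimalSizeOfEdge}) without separately treating the first branch. Your caution is well placed---when $p_i$ is a diameter endpoint and $\varphi_i$ is small, the quantity $D/(2\sin(\varphi_i/2))$ can exceed $D$ itself, and thin-triangle configurations show the lemma's stated inequality can fail there. The paper does not close this gap; the downstream application (Corollary~\ref{Finite}) only needs $\|p_s-\bar p\|$ bounded away from zero when $D$ is, and that weaker statement does follow from the projection argument without the sharp constant.
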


\begin{proof}
Any agent $i$, located at a corner of $CH(P(k))$, either defines the convex-hull diameter together with another agent $j$ so that
$$D(P(k))=\|p_j(k)-p_i(k)\|$$
or there are two other agents $j_1$ and $j_2$ defining its diameter, so that
$$D(P(k))=\|p_{j_1}(k)-p_{j_2}(k)\|$$

By Proposition \ref{MinDistanceFromC} (see system $\mathcal{S}_6$) we have that

\begin{equation} \label{MinimalSizeOfEdge}
max\{\|p_i(k)-p_{j_1}(k)\|, \|p_i(k)-p_{j_2}(k)\|\} \geq \frac{D(P(k))}{2 \sin(\varphi_i(k)/2)}
\end{equation}
We also have that:
$$\|\bar{p}(k)-p_i(k)\| = \|\frac{1}{n}\sum_{j=1}^{n} (p_j(k)-p_i(k))\| = \|\frac{1}{n}\sum_{j=1}^{n} (p_j(k)-p_i(k))\|$$
Let $\theta_{ij}(k)$ be the angle between the vector ($p_j(k) - p_i(k)$) and the bisector of $\varphi_i(k)$.\\

\begin{figure}[h!]
\captionsetup{width=0.8\textwidth}
  \centering
    \includegraphics[width=70mm]{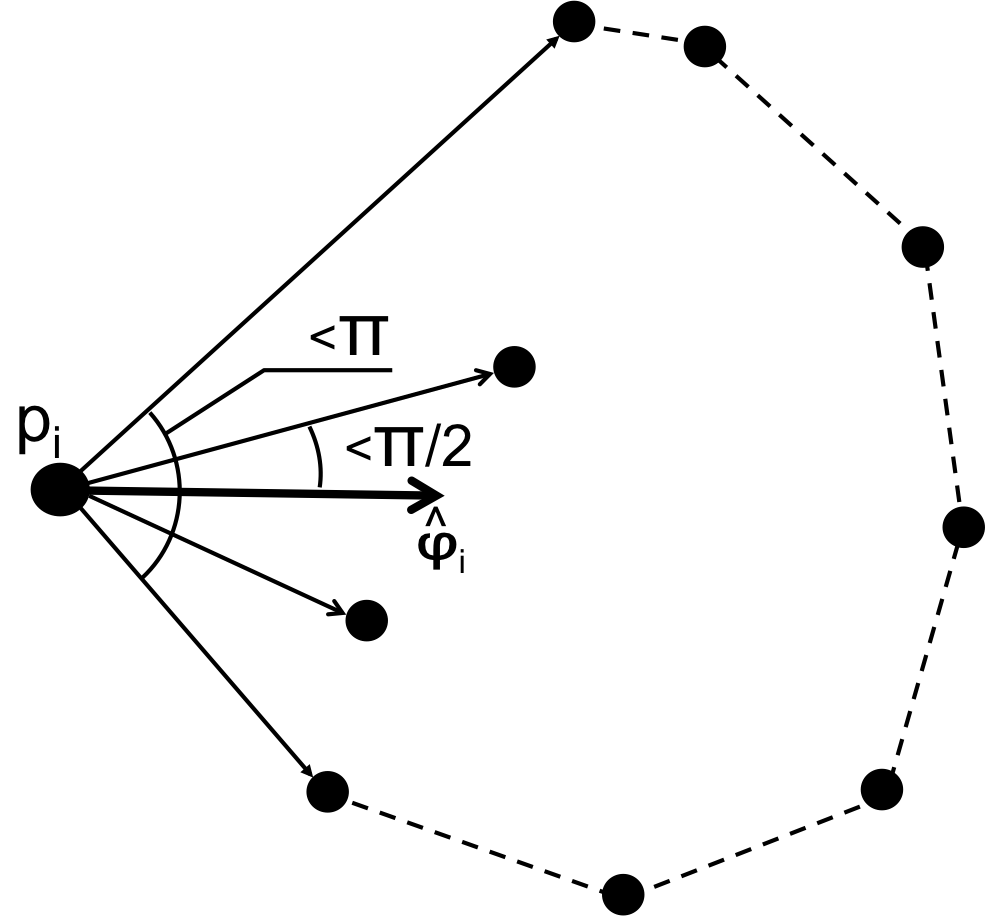}
    \caption{Since each angle of the convex-hull is smaller than $\pi$, any angle defined by an internal agent, a convex-hull corner and its associated bisector is smaller than $\pi/2$.}
      \label{HalfSharpAngle}
\end{figure}

Since $\cos(\theta_{ij}(k))>0$ (see Figure \ref{HalfSharpAngle}), we have that
$$\|\bar{p}(k)-p_i(k)\| \geq \frac{1}{n}\sum_{j=1}^{n} \|p_j(k)-p_i(k)\|\cos{\theta_{ij}(k)}$$
and using (\ref{MinimalSizeOfEdge}) we get:\\ 
$$\|\bar{p}(k)-p_i(k)\| \ge \frac{1}{n}\sum_{j=1}^{n} \|p_j(k)-p_i(k)\|\cos{\theta_{ij}(k)} \geq \frac{D(P(k))}{2n \sin(\varphi_i(k)/2)}\cos{(\varphi_i(k)/2)}$$
as claimed.\\
\end{proof}

Let us show using Lemma \ref{DistFormAvg} that if the diameter of the convex-hull is bounded away from zero, it has at least one corner whose distance from the current average position of the agents $\bar{p}(k)$ is bounded away from zero as well.\\

\begin{corollary}  \label{Finite}
If the length of the diameter of $CH(P(k))$ is bounded away from zero, the distance between $\bar{p}(k)$ to agent $s$, the agent located at the sharpest corner of the system's convex-hull, is bounded away from zero as well.
\end{corollary}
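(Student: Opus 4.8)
The plan is to obtain the claim as an immediate consequence of Lemma \ref{DistFormAvg} applied to the particular corner occupied by agent $s$. The key observation, already exploited repeatedly in the analysis of systems $\mathcal{S}_5$, $\mathcal{S}_6$ and $\mathcal{S}_7$, is that the sharpest corner of an $n$-vertex convex polygon has internal angle bounded above by $\varphi_* = \pi(1-2/n)$, which is strictly less than $\pi$. Since agent $s$ occupies the sharpest corner of $CH(P(k))$, we have $\varphi_s(k) \le \varphi_* < \pi$, and therefore $\varphi_s(k)/2 \le \varphi_*/2 < \pi/2$.

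First I would instantiate Lemma \ref{DistFormAvg} with $i = s$, which gives
$$\|p_s(k)-\bar{p}(k)\| \ge \frac{D(P(k))}{2n\tan(\varphi_s(k)/2)}.$$
Because $\tan$ is increasing on $[0,\pi/2)$ and $\varphi_s(k)/2 \le \varphi_*/2 < \pi/2$, we obtain the bound $\tan(\varphi_s(k)/2) \le \tan(\varphi_*/2)$, uniform in $k$, where $\tan(\varphi_*/2)$ is a finite positive constant depending only on $n$. Substituting this into the previous inequality yields
$$\|p_s(k)-\bar{p}(k)\| \ge \frac{D(P(k))}{2n\tan(\varphi_*/2)}.$$

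Finally, by hypothesis $D(P(k))$ is bounded away from zero, while $2n\tan(\varphi_*/2)$ is a fixed finite constant; hence the right-hand side is bounded away from zero, and so is $\|p_s(k)-\bar{p}(k)\|$, as claimed. There is essentially no obstacle in this argument: the only point requiring care is to confirm that the sharpest-corner angle stays uniformly below $\pi$, so that the tangent in the denominator cannot blow up, and this is precisely what the bound $\varphi_* = \pi(1-2/n)$ guarantees. This corollary is exactly what allows the surrounding argument for Theorem \ref{Theorem88} to conclude that a convex-hull corner always furnishes an agent whose distance from $\bar{p}(k)$ — and hence whose potential contribution to decreasing $\mathcal{L}(P(k))$ — is bounded away from zero.
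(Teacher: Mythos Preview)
Your proof is correct and follows essentially the same approach as the paper: instantiate Lemma~\ref{DistFormAvg} at $i=s$, use the uniform bound $\varphi_s(k)\le\varphi_*<\pi$ on the sharpest corner, and conclude via monotonicity of $\tan$ on $[0,\pi/2)$. The paper's proof is terser but structurally identical.
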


\begin{proof}
By Lemma \ref{DistFormAvg} we have that  $$\|p_i(k)-\bar{p}(k)\|\geq \frac{D(P(k))}{2 n \tan(\varphi_i(k)/2)}$$
and since $\varphi_{s}(k) \leq \varphi_* < \pi$ (as presented in the proof of Lemma \ref{CHdecreases77}), we have that
$$ \|p_s(k)-\bar{p}(k)\| \geq \frac{D(P(k))}{2 n \tan(\varphi_{s}(k)/2)} \geq \frac{D(P(k))}{2 n \tan(\varphi_*/2)} $$
Hence $\|p_s(k)-\bar{p}(k)\|$ is bounded away from zero as claimed.\\
\end{proof}

\begin{lemma} \label{GordonAgentsMovment}
There exist strictly positive constants $\rho$ and $s^*$, so that for any constellation $P(k)$, while $D(P(k))$ is bounded away from zero, if agent $s$ is active, the probability that at the next time-step it will be closer to $\bar{p}(k)$ by a distance of at least $s^*$ is at least $\rho$.
\end {lemma}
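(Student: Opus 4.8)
The plan is to exploit the fact that agent $s$ chooses its next position uniformly at random inside $AR_s(k)$, so that the probability in question is exactly the ratio of the area of the ``good'' part of $AR_s(k)$ --- the set of points closer to $\bar p(k)$ by at least $s^*$ --- to the total area of $AR_s(k)$. Since $AR_s(k)\subseteq D_\mu(p_s(k))$, its area is at most $\pi\mu^2$, so it suffices to exhibit inside $AR_s(k)$ a region $G$ of area bounded below by a positive constant, all of whose points $q$ satisfy $\|q-\bar p(k)\|\le\|p_s(k)-\bar p(k)\|-s^*$; then $\rho$ is the resulting area ratio. First I would describe $AR_s(k)$ near $p_s(k)$: each disc $D_{V/2}(c_{sj}(k))$ passes through $p_s(k)$, so the tangent cone of $AR_s(k)$ at $p_s(k)$ is $\{d:d\cdot\hat u_{sj}\ge 0 \text{ for all } j\in N_s(k)\}$, where $\hat u_{sj}$ is the unit vector from $p_s(k)$ to $p_j(k)$, and a ray from $p_s(k)$ in a feasible direction $d$ remains in $D_{V/2}(c_{sj}(k))$ for a chord length $V(d\cdot\hat u_{sj})$.

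The geometric heart of the argument --- and the step I expect to be the main obstacle --- is to produce a whole cone of directions that simultaneously lies in this tangent cone and points toward $\bar p(k)$, for the direction to $\bar p(k)$ need not a priori be feasible. Here I would use that $s$ occupies the sharpest corner: all agents lie inside the convex-hull wedge at $p_s(k)$, whose opening angle satisfies $\varphi_s(k)\le\varphi_*=\pi(1-2/n)<\pi$. Let $\hat m$ be the bisector of that wedge. Every agent direction, in particular every neighbour direction $\hat u_{sj}$, makes an angle at most $\varphi_s(k)/2\le\varphi_*/2<\pi/2$ with $\hat m$; and since $\bar p(k)-p_s(k)=\frac1n\sum_l(p_l(k)-p_s(k))$ is a nonnegative combination of vectors in this wedge, the unit vector $\hat u_{s\bar p}$ toward $\bar p(k)$ also lies within $\varphi_*/2$ of $\hat m$.

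Choosing $\eta=(\pi-\varphi_*)/4>0$, every direction $d$ within angle $\eta$ of $\hat m$ then lies within $\varphi_*/2+\eta=(\varphi_*+\pi)/4<\pi/2$ of every $\hat u_{sj}$ and of $\hat u_{s\bar p}$, so that $d\cdot\hat u_{sj}\ge\cos(\varphi_*/2+\eta)=:c_0>0$ (hence $d$ is strictly feasible, with chord length at least $Vc_0$) and also $d\cdot\hat u_{s\bar p}\ge c_0$ (hence moving along $d$ strictly decreases the distance to $\bar p(k)$). This resolves the obstacle; note it is precisely the sharpest-corner bound $\varphi_s(k)\le\varphi_*<\pi$ that keeps $\eta$ and $c_0$ bounded away from zero.

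The remaining estimate is routine. Along each $d$ in the $\eta$-cone about $\hat m$ the segment from $p_s(k)$ of length $\ell_0=\min\{\mu,Vc_0\}$ stays in $AR_s(k)$, so $AR_s(k)$ contains a sector of half-angle $\eta$ and radius $\ell_0$. Writing $R=\|p_s(k)-\bar p(k)\|$, a point $q=p_s(k)+td$ with $t\in[0,\ell_0]$ satisfies $\|q-\bar p(k)\|^2\le R^2-2c_0Rt+t^2$, whence $R-\|q-\bar p(k)\|\ge c_0t-\tfrac{t^2}{2R}\ge \tfrac12 c_0t$ once $t\le c_0R$. By Corollary \ref{Finite} we have $R\ge R_0>0$ with $R_0$ bounded away from zero whenever $D(P(k))$ is, so this holds for all $t\le t_0:=\min\{\ell_0,c_0R_0\}$. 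Taking $G$ to be the annular subsector $t\in[t_0/2,t_0]$ gives points all closer to $\bar p(k)$ by at least $s^*:=t_0c_0/4$, with $\operatorname{area}(G)\ge\tfrac34\eta t_0^2$, and therefore $\rho\ge\tfrac34\eta t_0^2/(\pi\mu^2)>0$. All of $c_0,\eta,\ell_0,t_0,s^*,\rho$ depend only on $n$, $V$, $\sigma$ and the assumed lower bound on $D(P(k))$, as required.
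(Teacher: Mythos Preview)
Your proof is correct and follows the same overall strategy as the paper --- both argue that the probability in question is the area ratio between a ``good'' subregion of $AR_s(k)$ and $AR_s(k)$ itself --- but your execution is more explicit and quantitative. The paper defines the good region as $F(k)=AR_s(k)\cap D_{\|\bar p(k)-p_s(k)\|-s^*}(\bar p(k))$ and asserts, essentially by picture, that its area is bounded away from zero because $\bar p(k)$ lies inside the wedge at $p_s(k)$ and Corollary~\ref{Finite} keeps $\|p_s(k)-\bar p(k)\|$ bounded below; it then divides by an explicit (but case-split) formula for the area of $AR_s(k)$. You instead build the good region constructively as an annular subsector of a cone of half-angle $\eta=(\pi-\varphi_*)/4$ about the convex-hull bisector $\hat m$, using the chord-length identity $V(d\cdot\hat u_{sj})$ to certify containment in $AR_s(k)$ and a first-order estimate to certify the distance drop; for the denominator you use the cruder but sufficient bound $\pi\mu^2$. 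What your route buys is concrete, checkable constants for $s^*$ and $\rho$ depending only on $n$, $V$, $\sigma$ and the assumed lower bound on $D(P(k))$, whereas the paper leaves the positivity of the intersection area at a qualitative level.
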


\begin{proof}
Let $\psi_s(k)$ be the angle of the minimal sector where agent $s$ is located at the head of the sector and it contains all neighbours of $s$, such that we have $\psi_s(k) \le \varphi_{s}(k) \leq \varphi_*$. Hence, by geometry, we have that the allowable region of agent $s$ has an area bounded away from zero (see Figure \ref{GordonDisc_Allowable_RegionArea}):
$$
\begin{array}{l}
AR_s(k) =
\left\{
\begin{array}{ll}
\frac{1}{2} V^2 \left( \pi - \psi_s - \frac{1}{2}\sin(\psi_s) \right ) &  \psi_s \ge 2\alpha \quad :(a)\\
\frac{1}{2} V^2 \left( \pi - 2\alpha - \frac{1}{2}\sin(2\alpha) \right ) + \frac{1}{2}\mu ^2(2\alpha - \psi_s) & \psi_s < 2\alpha \quad :(b)\\
\end{array}
\right.
\end{array}
$$
where $\mu = \min\{\frac{V}{2}, \sigma\}$, and $\alpha \triangleq \cos^{-1}(\mu/V)$.\\\\
(Note that the time-step index $(k)$ has been omitted in the equation above for convenience of reading).\\

\begin{figure}[h]
\captionsetup{width=0.8\textwidth}
  \centering
    \includegraphics[width=120mm]{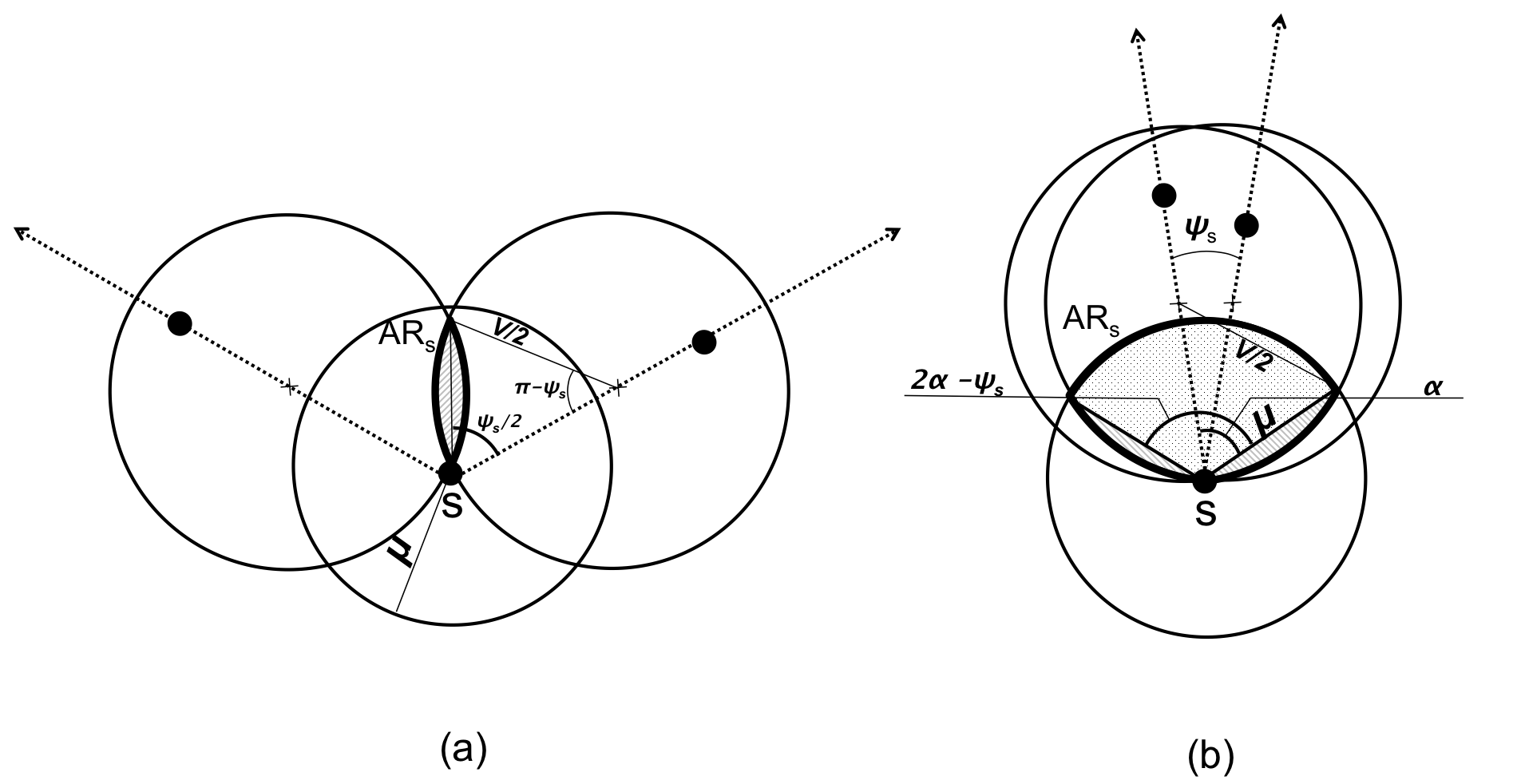}
    \caption{Allowable region of agent $s$. (a) In case the allowable region does not reach a distance of $\mu$ from agent $s$, its area may be calculated as the area of two segments of circles denoted by the dashed area. (b) In case the allowable region reaches $\mu$, its area may be calculated as the area of two segments of circles denoted by the dashed area and a sector denoted by the dotted area.}
      \label{GordonDisc_Allowable_RegionArea}
\end{figure}

Let $D_{\|\bar{p}(k)-p_s(k)\| - s^*}(\bar{p}(k))$ be a disc centered at $\bar{p}(k)$ with a radius set to be $\|\bar{p}(k)-p_s(k)\| - s^*$, where $s^*$ is a small and bounded away from zero value. If agent $s$ jumps inside that disc, it is guaranteed to be closer to $\bar p(k)$ (compared to where it was before the jump) by at least $s^*$.\\

The current agents' average position $\bar p(k)$ is located inside $CH(P(k))$, hence for any agent $i \neq s$ the angle $\angle p_ip_s\bar{p}$ is smaller than $\varphi_s < \varphi_*$. Furthermore, by Corollary \ref{Finite}, if the diameter of the system is bounded away from zero, the distance from agents $s$ to $\bar p(k)$ is bounded away from zero as well, therefore for a small enough yet strictly positive $s^*$, the area of the intersection of $AR_{s}(k)$ and $D_{\|\bar{p}(k)-p_s(k)\| - s^*}(\bar{p}(k))$ is bounded away form zero as well (see Figure \ref{GordonRandomize}).\\

\begin{figure}[h]
\captionsetup{width=0.8\textwidth}
  \centering
    \includegraphics[width=100mm]{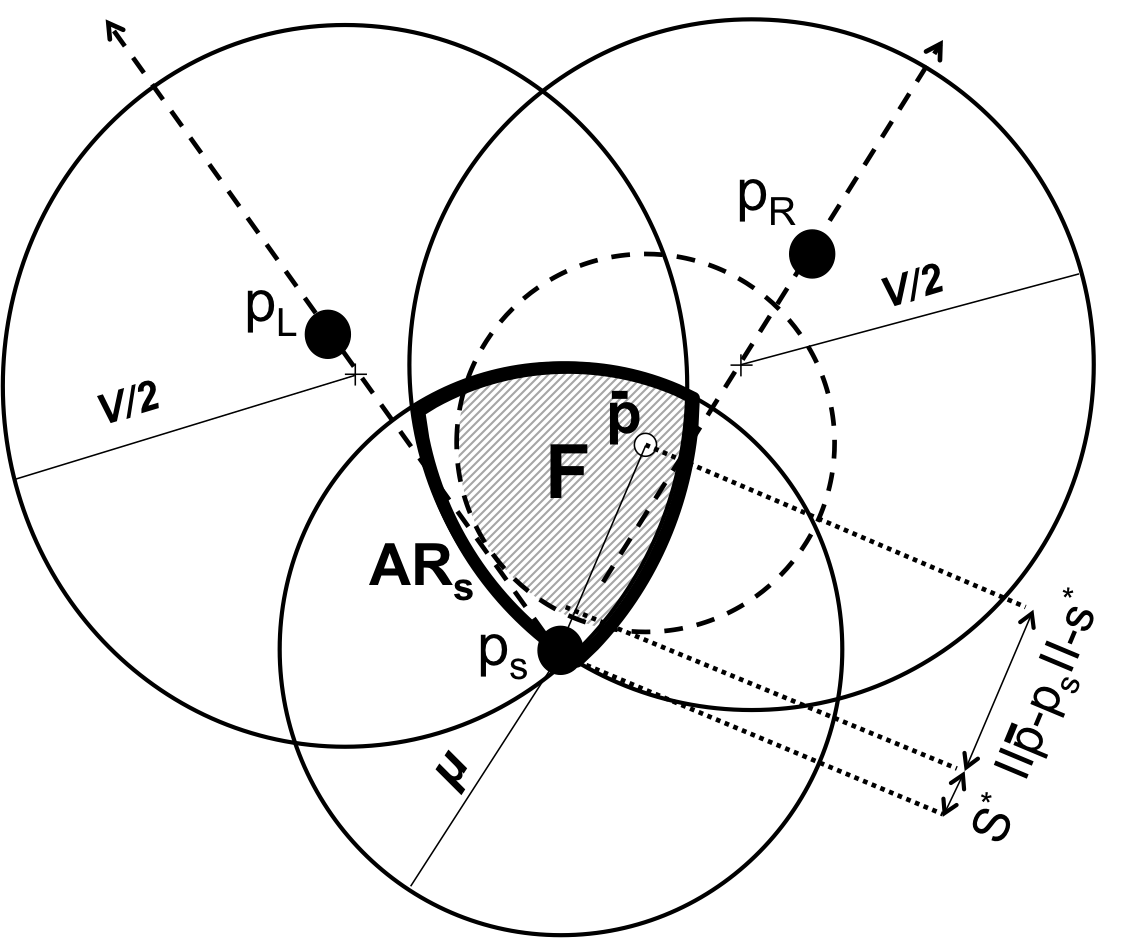}
    \caption{The probability that the current distance of agent $s$ from $\bar{p}$ will decrease in the next time-step is the proportion between the grey area $F$ and the current allowable region $AR_s$ of agent $s$ (bounded by a thick line).}
      \label{GordonRandomize}
\end{figure}

Denote this intersection region by $F(k)$, i.e.
$$ F(k) \triangleq AR_{s}(k) \cap D_{\|\bar{p}(k)-p_s(k)\| - s^*}(\bar{p}(k))$$
We have that the area of region $F(k)$ is bounded away from zero, and we have that each agent moves to a uniformly distributed random point in its allowable region. Using that, let us denote the probability that agent $s$ moves inside area $F(k)$ by
$$\rho = \frac{\|F(k)\|}{\|AR_{s}(k)\|}$$
where $\|F(k)\|$ and $\|AR_{s}(k)\|$ are the areas of regions $F(k)$ and $AR_{s}(k)$ respectively, and therefore $\rho$ is strictly positive.\\

We have that whenever agent $s$ is active and moves into area $F(k)$, it moves closer to $\bar{p}(k)$ by a bounded away from zero value. Therefore, $\rho$, the probability that agent $s$ will be closer to $\bar{p}(k)$ is bounded away from zero as claimed.

\end{proof}

\begin{lemma}
There is at least $\delta \rho$ probability for $\mathcal{L}(P(k))$ to decrease by at least $\|s^*\|^2/n$ at each time-step.
\end{lemma}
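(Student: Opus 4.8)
The plan is to combine the strong asynchronicity assumption with Lemma \ref{GordonAgentsMovment}, and then to convert a single favourable move of the sharpest-corner agent $s$ into a quantitative drop of $\mathcal{L}(P(k))$. First I would isolate the event that, at time-step $k$, the only active agent is $s$. By the strong asynchronicity assumption this event has probability at least $\delta$. Conditioned on it, agent $s$ jumps to a uniformly random point of its allowable region $AR_s(k)$, and by Lemma \ref{GordonAgentsMovment} the probability that it lands so as to reduce its distance to $\bar p(k)$ by at least $s^*$ is at least $\rho$. Since the randomisation of $s$'s destination is independent of the activation schedule, the joint event ``only $s$ active, and $s$ moves at least $s^*$ closer to $\bar p(k)$'' has probability at least $\delta\rho$.

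It then remains to show that, on this event, $\mathcal{L}$ decreases by at least $(s^*)^2/n$. Here only $p_s$ changes; writing $\Delta = p_s(k+1)-p_s(k)$, the centroid shifts by exactly $\bar p(k+1)-\bar p(k) = \Delta/n$. Using the parallel-axis identity $\sum_i\|p_i-q\|^2 = \sum_i\|p_i-\bar p\|^2 + n\|q-\bar p\|^2$ to relate the sum of squared distances measured about $\bar p(k)$ to the one measured about $\bar p(k+1)$, I would derive the exact relation
$$\mathcal{L}(P(k)) - \mathcal{L}(P(k+1)) = \|p_s(k)-\bar p(k)\|^2 - \|p_s(k+1)-\bar p(k)\|^2 + \frac{\|\Delta\|^2}{n}.$$
The first difference is non-negative because $s$ moved strictly closer to $\bar p(k)$, so the whole drop is at least $\|\Delta\|^2/n$. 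Finally the triangle inequality gives $s^* \le \|p_s(k)-\bar p(k)\| - \|p_s(k+1)-\bar p(k)\| \le \|\Delta\|$, whence $\|\Delta\|^2/n \ge (s^*)^2/n$, which is the claimed bound.

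The only genuinely delicate point, which I would treat carefully, is the bookkeeping of the centroid motion: because $\bar p$ is \emph{not} invariant under a single agent's move (Lemma \ref{invariant} requires the full antisymmetric sum, which fails when only one agent is active), one must evaluate $\mathcal{L}(P(k+1))$ about the shifted centroid $\bar p(k+1)$, whereas Lemma \ref{GordonAgentsMovment} only controls distances to the old centroid $\bar p(k)$. The parallel-axis identity is exactly what reconciles the two reference points, and it is reassuring that the centroid-shift contribution $\|\Delta\|^2/n$ enters with a favourable sign, so it can only reinforce the decrease rather than erode it; this is why restricting to the single-active-agent event, rather than an arbitrary set of active agents, is essential for obtaining the clean deterministic bound $(s^*)^2/n$.
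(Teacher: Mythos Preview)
Your proposal is correct and follows essentially the same approach as the paper: isolate the event ``only $s$ is active'' (probability $\ge\delta$), combine with Lemma~\ref{GordonAgentsMovment} (probability $\ge\rho$), and then compute the change in $\mathcal{L}$ when only $p_s$ moves. The paper arrives at the identical identity
\[
\mathcal{L}(P(k+1))-\mathcal{L}(P(k))=\tilde s\bigl(\tilde s-2\|p_s(k)-\bar p(k)\|\cos\theta_s(k)\bigr)-\frac{\tilde s^{2}}{n},
\]
which, by the law of cosines, is exactly your $\|p_s(k+1)-\bar p(k)\|^{2}-\|p_s(k)-\bar p(k)\|^{2}-\|\Delta\|^{2}/n$; the paper obtains it by brute-force expansion of all $n$ terms rather than via the parallel-axis identity, but the content is the same. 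Your use of the reverse triangle inequality to pass from ``distance to $\bar p(k)$ drops by at least $s^*$'' to $\|\Delta\|\ge s^*$ makes explicit a step the paper glosses over when it writes ``of a size $\tilde s\ge s^*$''.
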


\begin{proof}
The constant $\delta$ is defined as the lower bound for the probability that at any time-step $k$ only agent $s$ becomes active. The probability that agent $s$ makes a step $\Delta p_s(k)$ of a size $\tilde{s} \ge s^*$ inside region $F$ is $\rho$ as shown in Lemma \ref{GordonAgentsMovment}. In this case the value of $\mathcal{L}(P(k))$ decreases as follows:  
\begin{equation} \label{DeltaVar}
\mathcal{L}(P(k+1)) - \mathcal{L}(P(k))=\tilde{s}(\tilde{s} - 2\|p_s(k) - \bar p(k)\|\cos(\theta_{s}(k)) - \frac{{\tilde{s}}^2}{n}
\end{equation}
where 
$$\theta_{s}(k) = \angle\bar{p}(k)p_{s}(k)p_{s}(k+1))$$\\
To prove this we proceed as follows:\\ \\
Let $\Delta p_s(k)$ the position displacement of agent $s$ between time steps $k$ and $k+1$, i.e. $\Delta p_s(k)=p_s(k+1)-p_s(k)$
$$ \mathcal{L}(P(k+1)) = \sum\limits_{i=1}^{n} \| p_i(k+1) - \bar p(k+1) \|^2 = $$
$$ \sum\limits_{\substack{i=1 \\ i \neq s}}^{n} \| p_i(k) - (\bar p(k)+\frac{\Delta p_s(k)}{n}) \|^2 + \| p_s(k) + \Delta p_s(k) - (\bar p(k)+\frac{\Delta p_s(k)}{n}) \|^2 = $$
$$ \mathcal{L}(P(k)) + 2\frac{1}{n}\Delta p^\intercal_s(k) \left( (n-1)(p_s(k) - \bar p(k)) -\sum\limits_{\substack{i=1 \\ i \neq s}}^{n}(p_i(k) - \bar p(k))  \right) +$$ $$+\frac{(n-1)+(n-1)^2}{n^2} \|\Delta p_s(k)\|^2 = $$
$$ \mathcal{L}(P(k)) + \|\Delta p_s(k)\|^2  + 2\Delta p^\intercal_s(k) (p_s(k) - \bar p(k)) - \frac{\|\Delta p_s(k)\|^2}{n} =$$
$$ \mathcal{L}(P(k)) + \tilde{s}(\tilde{s} - 2\|p_s(k) - \bar p(k)\|\cos(\theta_{s}(k)) - \frac{{\tilde{s}}^2}{n} $$

Now, if agent $s$ moves inside disc $D_{\|\bar{p}(k)-p_s(k)\|}(\bar{p}(k))$, then
$$\tilde{s}(\tilde{s} - 2\|p_s(k) - \bar p(k)\|\cos(\theta_{s}(k)) < 0 $$
hence,
$$\mathcal{L}(P(k+1)) - \mathcal{L}(P(k)) < - \frac{{\tilde{s}}^2}{n}$$
Therefore, the probability that $\mathcal{L}(P(k))$ will decrease by at least $\frac{{s^*}^2}{n}$ is bounded from below by $\delta \rho$ as claimed.\\
\end{proof}

Back to the proof of Theorem \ref{Theorem88}. Since the initial agents' interconnection graph is connected, $D(P) \leq (n-1)V$. Note that $D(P)$ gets this maximal value when the agents are evenly distributed along a straight line, with a distance V between neighbours. Therefore $\mathcal{L}(P) < n((n-1)V)^2$.\\

In addition, if $\mathcal{L}(P) \le \left(\frac{V}{2}\right)^2$ then the agents' interconnection graph is necessarily fully connected, since the maximal distance of an agent from $\bar{p}$ is $\frac{V}{2}$, and hence all inter-agent distances are necessarily less than $V$.\\

The transition from any arbitrary constellation to a fully connected constellation may be achieved in finite number of possible steps $M$, where 
$$ M < \frac{n((n-1)V)^2 - V^2/4}{\|s^*\|^2/n} +1$$

Let us examine the evolution of the agents' constellation every $M$ steps. At the end of each series of $M$ steps, the probability that $\mathcal{L}(P) < V^2/4$ is at least $(\delta \rho)^M$. Let $q$ be this minimal probability, $q = (\delta \rho)^M$. The expected number of series of $M$ steps for gathering to a clique is at most:
$$\sum\limits_{i=1}^{\infty} i(1 - q)^{i-1}q =  -q\frac{d}{dq}\sum\limits_{i=1}^{\infty} (1 - q)^i = -q\frac{d}{dq}\frac{1}{q} = \frac{1}{q}$$
hence, the expected number of steps for gathering is at most:

$$M\frac{1}{(\delta \rho)^M}$$

By Lemma \ref{NeverLoseFriendLemma}, once a fully connected constellation is reached the system remains fully connected. Therefore gathering to a bounded region is achieved in a finite expected number of time-steps.

\end{proof}
 
\subsection{Discussion}

\textbf{Simulation results}: Simulation results for $\mathcal{S}_7$ are shown in Figure \ref{FigGordonCHSim}. Note that the agents' trajectories are rather complex, and agents meet and "travel together" toward the gathering point.\\
 
\begin{figure}[h!]
\captionsetup{width=0.8\textwidth}
  \centering
    \includegraphics[width=100mm]{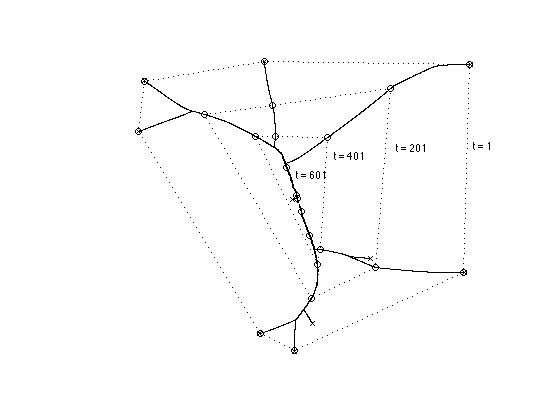}
    \caption{Simulation result of system $\mathcal{S}_7$ with 10 agents and an arbitrary initial constellation.}
      \label{FigGordonCHSim}
\end{figure}

Simulation results for $\mathcal{S}_8$ are shown in Figure \ref{GordonDisc}. It is easy to notice that the agents do not gather to a point.\\
 
\begin{figure}[h!]
\captionsetup{width=0.8\textwidth}
  \centering
    \includegraphics[width=100mm]{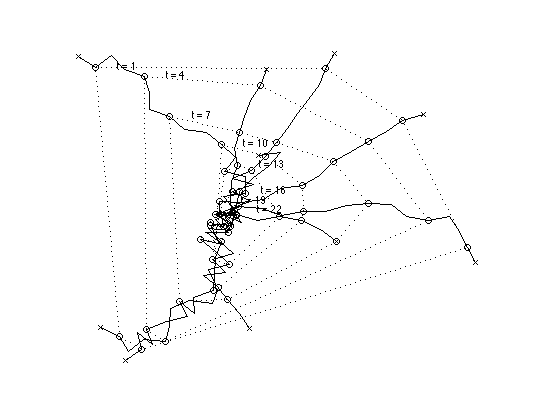}
    \caption{Simulation result for system $\mathcal{S}_8$ with 10 agents and an arbitrary initial constellation.}
      \label{GordonDisc}
\end{figure}

\textbf{Comparison: Ando et. al. vs. Gordon et. al. methods:}\\

It is interesting to compare the discrete time algorithm suggested by Gordon et. al. (system $\mathcal{S}_8$) to the one suggested by Ando et. al. (system $\mathcal{S}_6$). Both are based on an allowable region $AR_i(k)$ where any agent $i$ can move without losing visibility to any of its existing neighbours, but while Ando deals with agents capable of measuring the relative location of their neighbours (including both bearing angle and relative distance), Gordon's allowable region is based on agents lacking the capability to measure the relative distance, and therefore, in order not to lose an existing neighbour, Gordon needs to consider all possible distances to the visible neighbour in the range $[0,V]$. This reduces the allowable region size and limits the longest step an agent can make, which in turn may affect the speed of convergence.\\

The allowable region for agent $i$ by Ando's method is:
$$AR_i^{Ando} (k)= \bigcap \limits_{j \in N_i(k)} D_{\frac{V}{2}}(\frac{p_i(k)+p_j(k)}{2})$$
as compared to Gordon's which is given by (\ref{NeverLoseFriendRig}):
$$AR_i^{Gordon}(t) \triangleq  \left( \bigcap \limits_{j \in N_i(t)}  D_{\frac{V}{2}}(c_{ij}(t)) \right) \cap D_{\frac{V}{2}}(p_i(t))$$

\newpage
\addcontentsline{toc}{section}{Conclusions}
\section*{Conclusions}
This report surveys gathering of multi robotic systems in eight cases defined by the sensor capabilities and motion timing assumed for the robotic agents. The survey clearly showcases the wide verity of approaches that can be used for proving the correctness of the dynamic processes of geometric consensus or gathering, and pointed out the weak points of the existing analysis, and the lack appropriate tools for analyzing the speed of convergence and size of gathering cluster in almost all practical cases.\\

We hope that the collection of tools and results described here will in the future be extended and used in novel ways to  obtain better results in terms of realistic convergence rates and area estimates for the region of gathering, as well as in suggesting novel distributed dynamics. \\

Most importantly, the survey we present exhibits the considerable gap that exists between the classical control communities that address gathering problems via algebraic and Lyapunov-functions based methods,  and the computer science community which addresses the same problems via geometric and algorithmic approaches. We believe that the best results will emerge from a synergy between the methods employed so far and perhaps some new ones yet to be invented.\\

\newpage
\addcontentsline{toc}{section}{References}
\bibliography{MARS_group_new}\
\bibliographystyle{unsrt}

\end{document}